\documentclass[a4paper,USenglish,cleveref, autoref, thm-restate]{lipics-v2021}

\usepackage{ifthen}

\newboolean{talk}
\setboolean{talk}{false}
\newboolean{paper}
\setboolean{paper}{false}

\newboolean{LMCSstyle}
\setboolean{LMCSstyle}{false}
\newboolean{IEEEstyle}
\setboolean{IEEEstyle}{false}
\newboolean{lipicsstyle}
\setboolean{lipicsstyle}{true}
\newboolean{eptcsstyle}
\setboolean{eptcsstyle}{false}
\newboolean{sigplanstyle}
\setboolean{sigplanstyle}{false}
\newboolean{PACMPL}
\setboolean{PACMPL}{false}
\newboolean{acmartstyle}
\setboolean{acmartstyle}{false}

\newboolean{needstheorems}
\setboolean{needstheorems}{false}
\newboolean{withimages}
\setboolean{withimages}{false}
\newboolean{withproofs}
\setboolean{withproofs}{true}

\newboolean{french}
\setboolean{french}{false}

\ifthenelse{\boolean{talk}}{}{\usepackage[utf8]{inputenc}}
\ifthenelse{\boolean{PACMPL}}{}{
	\ifthenelse{\boolean{french}}{\usepackage[french]{babel}}{
	  \ifthenelse{\boolean{lipicsstyle}}{}{\usepackage[english]{babel}}}
	  }
\ifthenelse{\boolean{PACMPL}}{}{
	\ifthenelse{\boolean{acmartstyle}}{}{
		\usepackage{amssymb} 
	}
}
\usepackage{amsmath}
\usepackage{graphicx}
\usepackage{bussproofs}
\usepackage{fancybox}
\usepackage{cmll}
\usepackage{stmaryrd}
\usepackage{multirow}
\ifthenelse{\boolean{IEEEstyle}}{
\usepackage[bookmarks={false}]{hyperref}}{
\usepackage{hyperref}}
\usepackage{proof}
\usepackage{hhline}
\usepackage{xspace}
\usepackage{booktabs}
\usepackage{xcolor}
\usepackage{subcaption}
\usepackage{microtype}
\usepackage{wrapfig}
\usepackage{array}
\usepackage{tabularx}
\usepackage{arydshln}
\usepackage{commath}
\ifthenelse{\boolean{IEEEstyle}}{
\setboolean{needstheorems}{true}}{}

\ifthenelse{\boolean{eptcsstyle}}{
\setboolean{needstheorems}{true}}{}

\ifthenelse{\boolean{sigplanstyle}}{
\setboolean{needstheorems}{true}}{}

\ifthenelse{\boolean{needstheorems}}{
\input{\macrospath/thm-environments}}{}

\newcommand{\myproof}[1]{
\ifthenelse{\boolean{withproofs}}{#1}{}}

\newcommand{\withproofs}[1]{
\ifthenelse{\boolean{withproofs}}{#1}{}}

\newcommand{\withoutproofs}[1]{
\ifthenelse{\boolean{withproofs}}{}{#1}}

\newcommand{\red}[1]{{\color{red} #1}}

\newcommand{\tm}{t}
\newcommand{\tmtwo}{u}
\newcommand{\tmthree}{r}
\newcommand{\tmfour}{w}

\newcommand{\var}{x}
\newcommand{\vartwo}{y}

\newcommand{\Rew}[1]{\rightarrow_{#1}}

\renewcommand{\to}{\Rew{}}

\newcommand{\towh}{\Rew{wh}}

\newcommand{\symfont}[1]{\mathsf{#1}}

\newcommand{\varsym}{{\symfont{var}}}

\newcommand{\ctxholep}[1]{[#1]}
\newcommand{\ctxhole}{\ctxholep{\cdot}}

\newcommand{\ctx}{C}
\newcommand{\ctxtwo}{D}

\newcommand{\ctxp}[1]{\ctx\ctxholep{#1}}

\newcommand{\nbvctxtwo}[1]{\nbvctxtwo{#1}}

\newcommand{\defeq}{:=}
\newcommand{\eqdef}{=:}
\newcommand{\grameq}{::=}

\newcommand{\isub}[2]{\{#1/#2\}}

\newcommand{\llbrace}{\{ \kern -0.27em \vert}
\newcommand{\rrbrace}{\vert \kern -0.27em \}}

\newcommand{\grammarpipe}{\mathrel{\big |}}

\renewcommand{\l}{\lambda}
\newcommand{\ie}{\emph{i.e.}\xspace}
\ifthenelse{\boolean{LMCSstyle}}{}{
	\newcommand{\eg}{e.g.\xspace}
	}
\newcommand{\ih}{\textit{i.h.}\xspace}

\newcommand{\blue}[1]{{\color{blue} {#1}}}

\newcommand{\ignore}[1]{}

\newcommand{\myinput}[1]{\ifthenelse{\boolean{withimages}}{\input{#1}}{}}

\newcommand{\nat}{\mathbb{N}}

\newcommand{\size}[1]{|#1|}

\ifthenelse{\boolean{PACMPL}}{\renewcommand{\state}{s}}{
	\ifthenelse{\boolean{acmartstyle}}{\renewcommand{\state}{s}}{
		\newcommand{\state}{s}
	}
}
\newcommand{\statetwo}{{\state'}}
\newcommand{\statethree}{\state''}

\newcounter{numberone}

\newcounter{numbertwo}

\newcommand{\TrPoss}{Logged Positions\xspace}

\newcommand{\Log}{Log\xspace}

\renewcommand{\ctxholep}[1]{\langle #1\rangle}
\newcommand{\ctxtwop}[1]{\ctxtwo\ctxholep{#1}}

\newcommand{\dom}[1]{\mathsf{dom}(#1)}

\newcommand{\tmp}{\tm'}

\newcommand{\reflemma}[1]{Lemma~\ref{l:#1}}

\newcommand{\reflemmaeq}[1]{L.~\ref{l:#1}}

\newcommand{\refprop}[1]{Prop.~\ref{prop:#1}}

\newcommand{\refsect}[1]{Sect.~\ref{sect:#1}}

\newcommand{\refremark}[1]{Remark~\ref{rem:#1}}
\newcommand{\refcoro}[1]{Corollary~\ref{coro:#1}}
\newcommand{\refdef}[1]{Def.~\ref{def:#1}}
\newcommand{\refthm}[1]{Thm.~\ref{thm:#1}}

\newcommand{\reffig}[1]{Fig.~\ref{fig:#1}}

\ifthenelse{\boolean{LMCSstyle}}{	
	
	}{
	
	}

\renewcommand{\isub}[2]{\{#1{\shortleftarrow}#2\}}

\newcommand{\resm}{\psym}
\renewcommand{\resm}{\bullet}

\newcommand{\lpos}{p}
\renewcommand{\lpos}{l}
\newcommand{\lpostwo}{{\lpos'}}

\newcommand{\upp}{\blue{\uparrow}}
\newcommand{\downp}{\red{\downarrow}}
\newcommand{\uppt}{\red{\uparrow}}
\newcommand{\downpt}{\blue{\downarrow}}
\newcommand{\tlog}{L}
\newcommand{\tlogtwo}{\tlog'}

\newcommand{\tape}{T}

\newcommand{\pol}{d}
\newcommand{\poltwo}{\pol'}

\newcommand{\run}{\rho}

\newcommand{\nopolkstate}[6]{(#1,#2,#4,#3,#5,#6)}

\newcommand{\dkstate}[5]{(\red{\underline{#1}},#2,#4,#3,#5)}

\newcommand{\ukstate}[5]{(#1,\blue{\underline{#2}},#4,#3,#5)}

\newcommand{\dstatetab}[4]{\red{\underline{#1}} & #2 & #4 & #3 }

\newcommand{\ustatetab}[4]{#1 & \blue{\underline{#2}} & #4 & #3 }

\newcommand{\ndstatetab}[5]{\red{\underline{#1}} & #2 & #4 & #3 & #5}
\newcommand{\nustatetab}[5]{#1 & \blue{\underline{#2}} & #4 & #3 & #5}

\newcommand{\cons}{{\cdot}}

\newcommand{\IAM}{IAM\xspace}

\newcommand{\JAM}{JAM\xspace}

\newcommand{\PAM}{PAM\xspace}

\newcommand{\SIAM}{SIAM\xspace}
\newcommand{\SJAM}{SJAM\xspace}

\newcommand{\KAM}{KAM\xspace}

\newcommand{\jump}[1]{\mathsf{jump}(#1)}
\newcommand{\depth}[1]{\mathsf{depth}(#1)}

\newcommand{\KJAM}{PaJAM\xspace}
\newcommand{\KSJAM}{SPaJAM\xspace}
\newcommand{\kparam}{k}

\newcommand{\JAMold}{\mathrm{JAM}}

\newcommand{\tomachhole}[1]{\rightarrow_{#1}}
\newcommand{\tomach}{\tomachhole{}}

\newcommand{\btsym}{\mathsf{bt}}
\newcommand{\tomachdotone}{\tomachhole{\resm 1}}
\newcommand{\tomachdottwo}{\tomachhole{\resm 2}}
\newcommand{\tomachvar}{\tomachhole{\varsym}}
\newcommand{\tomachbttwo}{\tomachhole{\btsym 2}}

\newcommand{\iamdap}{\tomachdotone}
\newcommand{\iamdlamone}{\tomachdottwo}
\newcommand{\iamdvar}{\tomachvar}

\newcommand{\argsym}{\mathsf{arg}}
\newcommand{\tomachdotthree}{\tomachhole{\resm 3}}
\newcommand{\tomachdotfour}{\tomachhole{\resm 4}}
\newcommand{\tomacharg}{\tomachhole{\argsym}}
\newcommand{\tomachbtone}{\tomachhole{\btsym 1}}
\newcommand{\iamuapltwo}{\tomachdotthree}
\newcommand{\iamulam}{\tomachdotfour}
\newcommand{\iamuaplone}{\tomacharg}

\newcommand{\jumpsym}{\mathsf{jmp}}
\newcommand{\tomachjump}{\tomachhole{\jumpsym}}
\newcommand{\iamujump}{\tomachjump}

\newcommand{\tosiam}{\rightarrow_{\textsc{SIAM}}}

\newcommand{\toksjam}{\rightarrow_{\mathrm{\KSJAM}}}

\newcommand{\tokjam}{\rightarrow_{\mathrm{\KJAM}}}

\newcommand{\stempty}{\epsilon}

\newcommand{\la}[1]{\lambda #1.}

\newcommand{\exstates}{\mathcal{E}}

\newcommand{\midd}{\; \; \mbox{\Large{$\mid$}}\;\;}

\newcommand{\bigo}[1]{\mathcal{O}(#1)}

\newcommand{\cbn}{CbN\xspace}
\newcommand{\ccbn}{Closed \cbn}

\newcommand{\mset}[1]{[#1]}
\newcommand{\emmset}{[\,]}

\newcommand{\initty}{\star}

\newcommand{\linty}{A}
\newcommand{\lintytwo}{\linty'}

\newcommand{\lintyb}{B}

\newcommand{\ltyctx}{\mathbb{\linty}}
\newcommand{\ltyctxp}[1]{\ltyctx\ctxholep{#1}}
\newcommand{\ltyctxtwo}{\ltyctx'}
\newcommand{\ltyctxtwop}[1]{\ltyctxtwo\ctxholep{#1}}
\newcommand{\ltyctxthree}{\ltyctx''}
\newcommand{\ltyctxthreep}[1]{\ltyctxthree\ctxholep{#1}}

\newcommand{\ltyctxb}{\mathbb{\lintyb}}
\newcommand{\ltyctxbp}[1]{\ltyctxb\ctxholep{#1}}
\newcommand{\ltyctxtwob}{\ltyctxb'}
\newcommand{\ltyctxtwobp}[1]{\ltyctxtwob\ctxholep{#1}}
\newcommand{\ltyctxthreeb}{\ltyctxb''}
\newcommand{\ltyctxthreebp}[1]{\ltyctxthreeb\ctxholep{#1}}

\newcommand{\mty}{\mathcal{A}}
\newcommand{\mtytwo}{\mathcal{B}}
\renewcommand{\mty}{S}
\renewcommand{\mtytwo}{\mty'}
\newcommand{\arr}[2]{#1\rightarrow #2}

\newcommand{\mtyctx}{\mathbb{\mty}}

\newcommand{\tye}{\Gamma}
\newcommand{\tyetwo}{\Delta}

\newcommand{\tjudg}[3]{#1\vdash #2:#3}

\newcommand{\tjudgi}[3]{#1\vdash_i #2:#3}

\newcommand{\tjudgw}[4]{#1\vdash^{\textcolor{violet}{#2}} #3:#4}

\newcommand{\tyvar}{\textsc{T-var}}
\newcommand{\tylamstar}{\tylam_\star}
\newcommand{\tylam}{\textsc{T-}\lambda}
\newcommand{\tyapp}{\textsc{T-@}}

\newcommand{\tyd}{\pi}

\newcommand{\pof}{\;\triangleright}

\newcommand{\WeightTimekJAM}[2]{\mathbf{W}^{#2}_{\mathrm{\KJAM}}(#1)}

\newcommand{\occstar}[1]{\norm{#1}}

\newcommand{\DiPref}[1]{\mathsf{DiPref}(#1)}
\newcommand{\myldots}{...}

\newcommand{\ruleoc}{J}

\newcommand{\extsym}{\symfont{ext}}
\newcommand{\bisimtypes}{\simeq_{\extsym}}
\newcommand{\etape}[1]{\tape_{\extsym}(#1)}
\newcommand{\etapeaux}[2]{\tape_{\extsym}^{#2}(#1)}
\newcommand{\etapeauxs}[1]{\tape_{\extsym}^{\state}(#1)}
\newcommand{\elog}[1]{\tlog_{\extsym}(#1)}
\newcommand{\elpos}[1]{\lpos_{\extsym}(#1)}
\newcommand{\estate}[1]{\state_{\extsym}(#1)}

\newcommand{\focus}{f}
\newcommand\mydots{\hbox to .6em{.\hss.}}

\renewcommand{\state}{q}

\newcommand{\sizeto}[1]{|{#1}|_{\rightarrow}}

\newcommand{\sizemult}[1]{|{#1}|_{[]}}
\newcommand{\maxto}[1]{\#_{\rightarrow}({#1})}
\newcommand{\maxmult}[1]{\#_{[]}({#1})}

\usepackage{thmtools}
\usepackage{thm-restate}

\hideLIPIcs

\title{On Jumps, Interactions, and Intersection Types} 

\author{Stefano Catozi}{Université Sorbonne Paris Nord, Villetaneuse, France }{catozi@lipn.univ-paris13.fr}{https://orcid.org/0009-0007-0829-5789}{}

\author{Ugo Dal Lago}{Università di Bologna, Italia and Inria, France }{ugo.dallago@unibo.it}{https://orcid.org/0000-0001-9200-070X}{}

\author{Gabriele Vanoni}{IRIF, Université Paris Cité, France }{gabriele.vanoni@irif.fr}{https://orcid.org/0000-0001-8762-8674}{}

\authorrunning{S. Catozi, U. Dal Lago and G. Vanoni} 

\ccsdesc[100]{}

\keywords{lambda-calculus, geometry of interaction, intersection types, abstract machines}

\category{} 

\relatedversion{} 

\funding{The third author is supported by the European Union's Horizon 2020 research and innovation programme under the Marie Sklodowska-Curie grant agreement No.~101034255.}

\nolinenumbers 

\begin{document}

\maketitle

\begin{abstract}
    The Jumping Abstract Machine (JAM), an evaluation mechanism for the
    $\lambda$-calculus, was introduced by Danos and Regnier 
    as an optimization of the Interaction Abstract
    Machine (IAM), itself an operational counterpart to Girard's Geometry of
    Interaction and Abramsky \emph{et al}. game semantics. Moreover, the JAM is isomorphic to the  Pointer Abstract Machine (PAM), the syntactical counterpart of Hyland and Ong's game semantics. We study a generalization of the JAM, that we call the
    Parametric Jumping Abstract Machine (PaJAM) and show that there
    is a tight correspondence between the PaJAM and non-idempotent
    intersection types: given a normalizing term $t$, the number of steps 
    taken by the PaJAM when evaluating $t$ can be extracted from
    its non-idempotent intersection type derivation. Remarkably, fixing the
    backtracking depth of the PaJAM, one can easily recover
    both the JAM/PAM, when the depth is constrained to be zero, and the IAM, when
    it is instead unconstrained. Exploiting type-theoretic machinery, we analyze the complexity of the PaJAM, showing that it is \emph{polynomial} in the number of weak head $\beta$ steps, giving rise to a \emph{reasonable} cost model, for each \emph{finite} bound on the backtracking depth.
\end{abstract}

\section{Introduction}

This paper deals with two central concepts in the theory of the $\lambda$-calculus, namely \emph{abstract machines}~\cite{landin_mechanical_1964,DBLP:journals/tcs/Plotkin75,DBLP:conf/icfp/AccattoliBM14} on the one hand and \emph{intersection types}~\cite{CoppoD80,BucciarelliKV17,DBLP:conf/lics/BonoD20} on the other. We are particularly interested in the relationship between these two concepts and our main contribution is a new generalized tight correspondence result. This will allow us to analyze the efficiency of a family of machines via type-theoretic tools. We give a bit of context before delving into the technical details.

\subparagraph{Abstract Machines.} An abstract machine is an automaton designed to evaluate a term of the 
$\lambda$-calculus (or of similar calculi), \ie to compute a representation of 
its normal form. In doing so, the abstract machine usually 
conforms to a notion of reduction, e.g., \emph{call-by-value} or 
\emph{call-by-name} reduction~\cite{DBLP:journals/tcs/Plotkin75}. Abstract machines, unlike mere rewriting, are 
thought of as (relatively) low-level, first-order, descriptions of the evaluation process, 
such that each individual step is in some sense \emph{elementary}. As it is well 
known, there are a variety of abstract machines, each with its own correctness 
and efficiency properties. Among the best-studied machines, there is the so-called 
Krivine Abstract Machine (KAM)~\cite{krivine_call-by-name_2007}, which implements call-by-name (a.k.a. weak-head) reduction. Machines which are similar in spirit to the Krivine machine but implementing 
call-by-value or call-by-need reduction have also been introduced~\cite{DBLP:conf/icfp/AccattoliBM14,DBLP:conf/ppdp/AccattoliB17}.

\subparagraph{Game Machines.} Since the pioneering work of Danos, Herbelin and Regnier~\cite{DBLP:conf/lics/DanosHR96}, it is well 
known that there are machines implementing weak-head reduction like the 
KAM, but which have an even lower-level behavior, closely related to game 
semantics and Girard's geometry of interaction (GoI)~\cite{girard_geometry_1989}. In particular, the 
so-called Interaction Abstract Machine (IAM)~\cite{mackie_geometry_1995,DR99,PPDP2020} can be seen 
as an automata-theoretic counterpart of the GoI or of AJM games~\cite{DBLP:journals/iandc/AbramskyJM00}, and evaluates terms
through a purely \emph{local} and \emph{reversible} process. There is also the so-called 
Pointer Abstract Machine (PAM)~\cite{DBLP:conf/lics/DanosHR96,Danos04headlinear}, which is instead inspired, as its name says, 
by the mechanism of justification pointers from Hyland and Ong's games~\cite{DBLP:journals/iandc/HylandO00}.  As discovered by Danos and
Regnier~\cite{DR99}, the PAM is isomorphic to the Jumping Abstract Machine (JAM), an optimization of the IAM obtained by allowing the latter to have, in certain circumstances, the possibility to \emph{jump} from one subterm to another, thus losing locality and reversibility.
These machines have been studied with respect to their correctness and 
relative performance. In particular, Accattoli, Dal Lago and Vanoni, have 
recently shown that the KAM can be seen as an improvement to the JAM (and thus of the PAM), and that the latter is an improvement to the IAM~\cite{POPL2021,TCS2026}.  Noticeably, while switching from the KAM to the JAM involves a \emph{polynomial} overhead, the IAM can be \emph{exponentially} less efficient than the JAM.

\subparagraph{Intersection Types and the KAM.} What role do intersection types play in all this? In their non-idempotent variant~\cite{DBLP:conf/tacs/Gardner94,DBLP:journals/logcom/Kfoury00}, intersection types are known to be a syntactic counterpart of the \emph{relational model} of linear logic~\cite{DBLP:journals/apal/BucciarelliE01}. Starting from de Carvalho's results~\cite{deCarvalho18}, the size of the type derivation for a term $\tm$ has been put in relation to the number of reduction steps needed to evaluate the term $\tm$ using the KAM. The correspondence can be made tight by considering so-called \emph{tight typings}~\cite{DBLP:journals/jfp/AccattoliGK20} or by introducing a special type $\initty$ corresponding to normal forms. In such correspondences, each KAM step is put in relation to the application of a typing rule, and this relation is bijective: subterms that are used many times appear many times in the derivation, while terms that are \emph{not} used simply are not subject to typing.

\subparagraph{Intersection Types and Game Machines.} Accattoli \emph{et al.} recently showed that the correspondence just 
mentioned, surprisingly, scales to the IAM~\cite{POPL2021}. Remarkably, this does not require 
any modification to the type system itself, but only amounts to altering the 
way each typing rule is weighted. While in the KAM each rule occurrence counts as 
\emph{one}, the IAM requires assigning a weight equal to \emph{the size of the conclusion
type}. In other words, the number of IAM steps is precisely reflected by the 
number of occurrences of the type $\star$ to the right of $\vdash$ in the 
corresponding type derivation. Since types can be exponentially bigger than terms, this explains the aforementioned efficiency gap.
But what about the JAM or the PAM? Would it be possible to precisely characterize the 
dynamic behavior of the JAM  by means of non-idempotent intersection types? More fundamentally: given that there is a phase transition between the JAM and the IAM, what is it that \emph{drives} this transition? And why, by contrast, do the KAM and the JAM turn out to have both a polynomial overhead? These are precisely the questions that this work addresses. And the results we 
obtain confirm that non-idempotent intersection types are indeed a very flexible 
tool. Once again, and surprisingly, it is only a matter of modifying the weight given to each 
rule. If in the IAM we count \emph{all} occurrences of $\star$ in the 
conclusion of each rule, the JAM requires to count only those occurrences of 
$\star$ that are at \emph{a depth equal to $0$ or $1$}. As a result, the blowup can no longer be exponential and becomes polynomial instead. In other words, the \JAM/\PAM is much closer to the \KAM than to the \IAM, even if it was designed as an optimization of the latter. The situation, a bit more formally, is similar to the one in the figure below, where $n$ is the total size of the types occurring in the derivation.
	\begin{center}
		\includegraphics{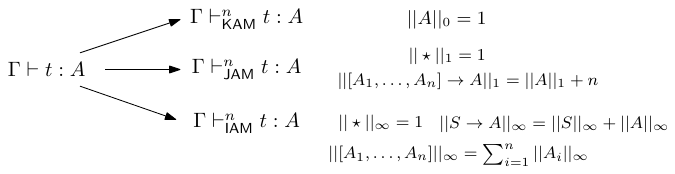}
	\end{center}
As we explain below, however, we do much more, showing that the IAM and the JAM are both instances of a parametric version of the latter. Indeed, looking at the figure above, it is natural to wonder whether there is anything \emph{in between} the JAM and the IAM in which some (but not all) occurrences of $\star$ at depth higher than $1$ are counted. Moreover, are these intermediate machines efficient?

\subparagraph{Contributions.} 
This work introduces an abstract machine, called the Parametric Jumping Abstract Machine (PaJAM), obtained by generalizing Danos and Regnier’s jumps so that they \emph{do not} occur until a certain backtracking depth $k$. Indeed, the intuition is that the \IAM computes using a possibly nested backtracking mechanism and jumping allows one to skip this backtracking phase. However, because of the nesting, one could be inside $k$ backtracking phases and decide at some point not to go one level deeper, and instead jump, thus remaining at depth $k$. More specifically, we develop the following results:
\begin{itemize}
	\item We define a generalization of the JAM, the \KJAM, which is a parametric machine, in \refsect{k-machine}. The \JAM can be recovered by setting the backtracking depth $k$ as $0$, while the IAM can be seen as a degenerate version of the PaJAM where $k = \infty$.
	\item As shown in \refsect{measures}, the number of steps performed by the PaJAM when evaluating a term $t$ can be measured, following the spirit of figure above, by letting the size of a type depend on $k$ and counting occurrences of the base type $\star$ up to depth $k$:
	
	\begin{center}
	\includegraphics{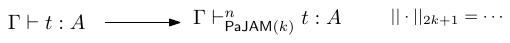}
	\end{center}
	\item These results are then used in \refsect{complexity} to show that every instance of the PaJAM with $k<\infty$ simulates $\beta$ reduction with a polynomial overhead, thus being time invariant.
\end{itemize}
 There are two take-home messages: not only non-idempotent intersection types 
allow a fine-grained and direct analysis of the dynamic properties of reduction, 
even when the latter is performed by abstract machines, but the 
performances of the JAM (and thus of the PAM) and the IAM can be studied within a \emph{single} framework, which also gives a very simple account of their relative 
performance. 

\section{The Closed Call-by-Name \texorpdfstring{$\l$}{Lambda}-Calculus}

Let $\mathcal{V}$ be a countable set of \emph{variables}. The \emph{terms}
of the $\lambda$-calculus are defined as follows:
\[ \begin{array}{rcl}
		\tm,\tmtwo,\tmthree,
		\tmfour             & \grameq              &
		x\in\mathcal{V}\midd \lambda x.\tm\midd \tm\tmtwo
	\end{array} \]
\emph{Free} and \emph{bound variables} are defined as usual:
$\la\var\tm$ binds $\var$ in $\tm$. Terms are considered modulo
$\alpha$-equivalence, and $\tm\isub\var\tmtwo$ denotes the
capture-avoiding (meta-level) \emph{substitution} of all free
occurrences of $\var$ in $\tm$ for $\tmtwo$.
\emph{Contexts} are $\lambda$-terms containing exactly 
one occurrence of a special symbol, the hole $\ctxhole$, intuitively standing for a removed subterm. Here we adopt 
\emph{levelled contexts}, whose index, \ie\ the level, stands for the number of 
arguments (\ie\ the 
number of !-boxes in linear logic terminology) the hole lies in:
\begin{center}$
	\begin{array}{rclrrrcl}
		\ctx_0		& \grameq &	\ctxhole \midd \la\var\ctx_0 \midd \ctx_0\tm\ ;
		&&&
		\ctx_{n+1}	& \grameq &	\ctx_{n+1}\tm\midd\la\var\ctx_{n+1}\midd\tm\ctx_{n}.
	\end{array}
	$\end{center}
We simply write $\ctx$ for a context whenever the level is not relevant. 
The operation replacing the hole $\ctxhole$ with a term $\tm$ 
in a context $\ctx$ is noted $\ctxp\tm$ and called \emph{plugging}. This operation can potentially capture free variables.
We consider a \emph{call-by-name} (CbN) operational semantics which we restrict to \emph{closed} terms, \ie without any free variable. This is defined as follows:
\[
	(\la\var \tm) \tmtwo \tmthree_1 \ldots
	\tmthree_h \ \ \towh \ \ \tm \isub\var \tmtwo
	\tmthree_1 \ldots \tmthree_h \qquad\qquad h\geq 0.
\]
This way, $\lambda$-abstractions are the only normal forms.

For any notion of reduction $\to$, we denote by $\to^*$ the reflexive and transitive closure of $\to$, and we
define $\to^n$ by induction as $\to^0 \defeq \mathsf{id}$ (the identity
relation), and $\to^{n+1}\defeq \to\,\to^n$.

\section{The Parametric JAM}
\label{sect:k-machine}
The Jumping Abstract Machine ($\JAMold$) is introduced in \cite{DR99} as an optimization of the interaction abstract machine (\IAM)~\cite{mackie_geometry_1995,DR99,PPDP2020}. The difference between them lies in the fact that a complex, and possibly nested, \emph{backtracking} mechanism, which is the peculiarity of the \IAM, is short-circuited in the \JAM, making it more efficient.
In this section, we introduce a new machine, the Parametric \JAM (or \KJAM), which is a machine that behaves like the \IAM until a fixed backtracking nesting depth (specified by the \emph{initial parameter} of the machine), and then proceeds, avoiding backtracking, as the \JAM.

\begin{figure}[t]
	\input{machines/LJAM-strat-data}
	\\[3pt]
	\scalebox{0.97}{\input{machines/LJAM-strat}}
	\caption{The data structures and transitions of the Parametric \JAM (\KJAM).}
	\label{fig:LJAM-strat}  
\end{figure}

Intuitively, the behaviour of the \KJAM, generalizing that of the \IAM and the \JAM, can be seen as that of a token that travels around the syntax tree of the program under evaluation, possibly making some \emph{jumps}. The evaluation is done in two phases. In one, noted $\downp$, the machine looks for the head variable, while in the other one, noted $\upp$, the machine looks for the argument which should replace it. We redirect the reader to~\cite{PPDP2020,POPL2021} for detailed explanations on the behavior of this family of machines.

\subparagraph{\KJAM States.}
The transitions of the \KJAM and all the data structures are defined in \reffig{LJAM-strat}. As we can see from the definition, the machine relies on two mutually inductive data structures, namely \emph{logs} and \emph{logged positions}.
We note with $\tlog$ a log of arbitrary length.
In order to implement the optimization mechanism introduced earlier, each machine state carries an integer parameter $k \in\nat\cup \{\infty\}$, the \emph{backtracking (nesting) depth}, indicating how many backtracking levels the machine may still enter.
Initial states have the form $\state_{\tm}^{\kparam} \defeq (\red{\tm},\ctxhole,\epsilon,\epsilon,\kparam)$, with $\kparam$ being the initial parameter of the machine.
Directions are often omitted and represented via colors and underlining: 
$\downp$ is represented by a
\red{red} and underlined code term, $\upp$ by a \blue{blue} and underlined code context.

\subparagraph{Transitions.} The transitions of the \KJAM are in \reffig{LJAM-strat} and their union is noted $\tokjam$. Note that the machine is deterministic, but not reversible. A \emph{run} $\run: \state \tomach^*\statetwo$ is a possibly empty sequence of transitions, whose length is noted 
$\size\run$. If it starts from an initial state it is called \emph{initial}, and if moreover it ends on a final state it is called \emph{complete}.
Consider an execution of the \KJAM starting from the initial state $\state_{\tm}^\kparam$, only two cases can happen: either the execution diverges or it terminates in a state $\state$ of the form 
$\state=\dkstate{\la\var\tmtwo}{\ctx}{\epsilon}{\tlog}{\kparam}$. We call such states \emph{final}. 
The fact that no other shapes are possible can proved as in~\cite{POPL2021}.
Notice how the parameter $k$ is decremented by 1 after entering a backtracking phase via a $\tomachbtone$ transition and is incremented by 1 after exiting a backtracking with a $\tomachbttwo$ transition. When $k$ reaches $0$, the $\tomachbtone$ transition is replaced by a $\tomachjump$, allowing the machine to avoid the backtracking runs after a certain depth.

\begin{remark}
It is clear that if we start from an initial state $\state_\tm^0$, the \KJAM run never performs backtracking and thus behaves exactly like the \JAM~\cite{POPL2021}.
On the other hand, when the initial parameter is set to $\infty$, no jump is ever performed, the \KJAM thus behaving like the \IAM of~\cite{PPDP2020}.
\end{remark}
The \IAM is well-known to be correct for Closed CbN. This property is extended to the \JAM in \cite{POPL2021}, using a simulation argument.
The proof smoothly adapts to the \KJAM, and this is why we do not detail it here. 
\begin{proposition}[Correctness]
	\label{prop:implements-cbn}
	$\state_\tm^k$ terminates if and only if $\tm$ has normal form.
\end{proposition}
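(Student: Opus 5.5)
The plan is to adapt the simulation argument of~\cite{POPL2021} for the \JAM, relating \KJAM runs to \KAM runs and hence to $\towh$ reduction. Throughout, \KJAM states are treated as implicit representations of \KAM states.

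The first step is a \emph{read-back} (decoding) function. Given a reachable \KJAM state in direction $\downp$, with code $\tmtwo$ in context $\ctx$, log $\tlog$ and tape $\tape$, it produces a \KAM state: a closure $(\tmtwo,e)$ together with a stack of closures. The environment $e$ is reconstructed from the logged positions in $\tlog$. Each free variable of $\tmtwo$ bound in $\ctx$ is mapped to the argument position that the corresponding logged position points to, recursively decoded. The stack is reconstructed from the $\resm$ entries on the tape. The key point is that the parameter $\kparam$ and the choice between $\tomachbtone$ and $\tomachjump$ only affect \emph{how} the machine recovers the logged position of a variable. The backtracking phase between $\tomachbtone$ and $\tomachbttwo$ is, in both machines, an excursion that eventually pushes onto the tape the same logged position that $\tomachjump$ places there in one step. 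Since the log data is the same either way, the read-back can be defined uniformly for every $\kparam$, exactly as for the \JAM.

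Next I would prove the simulation. Call a \KJAM run segment \emph{principal} when it goes from one $\downp$ state reached after a $\tomachdottwo$-like step (i.e.\ a $\beta$-like interaction with an abstraction facing a nonempty tape) to the next such state. The claims are:
\begin{itemize}
\item each $\tomachdotone$ step maps to a \KAM push;
\item each $\tomachdottwo$ step maps to a \KAM $\beta$/binding step;
\item the whole stretch from a $\tomachvar$ step, through the $\upp$ phase (including nested backtracking or jumps), to the return in $\downp$ mode at the argument maps to a single \KAM variable-lookup step;
\item all other transitions leave the decoding unchanged.
\end{itemize}
Combined with the known bisimulation between the \KAM and $\towh$ on closed terms, this gives the "only if" direction. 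If $\state^\kparam_\tm$ terminates, it ends in a final state $\dkstate{\la\var\tmtwo}{\ctx}{\epsilon}{\tlog}{\kparam}$, which decodes to a \KAM final state with an abstraction and an empty stack. Hence $\tm$ reaches a weak head normal form.

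The main obstacle is the "if" direction, which needs \emph{termination of each $\upp$ phase}: the machine must not loop during backtracking at some intermediate depth $0<\kparam<\infty$. I would prove this by induction on the nesting depth, using an invariant on reachable states. The invariant says that every logged position in the log and tape is well-formed, in the sense that it genuinely points to the argument of an application previously traversed in $\downp$ mode, and that nested backtracking at depth $d$ concerns a strictly shorter log prefix. A measure combining the log length and the position in the term then decreases along each $\upp$ excursion. Jumps only shortcut these excursions, so termination for the \IAM case (as in~\cite{PPDP2020}) and for the \JAM case both transfer to mixed depths. Given this, if $\tm\towh^n$ reaches a normal form, the \KAM terminates in $n'$ steps, and every simulated step is implemented by a finite \KJAM segment. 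Since the machine is deterministic, the \KJAM reaches a final state.
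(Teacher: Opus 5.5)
Your overall plan matches what the paper defers to \cite{POPL2021}, and ``jumps only shortcut these excursions'' is the right key idea. However, the lemma that makes that idea true is only asserted, and the argument you offer in its place fails. What has to be proved is this: from every reachable state, the excursion $\tomachbtone\cdots\tomachbttwo$ started on the log head $(\var,\ctxtwo,\tlog')$ terminates and ends exactly in the state $(\var,\ctxtwo,\tlog',\tape)$, direction $\upp$, that $\tomachjump$ produces. Note that $\tomachjump$ leaves the tape $\tape$ untouched; it does not push anything on it. This lemma is the real content of the statement. In the typed setting the paper obtains it as \refprop{jump-well-def} and the $\tomachjump$ case of the bisimulation, via the J-exhaustible invariant (\reflemma{K-invariant-ksjam-str}). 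The untyped counterpart is the exhaustible invariant of \cite{PPDP2020,POPL2021}: an inductively defined property of reachable states, proved by induction on the length of the run.

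Your measure cannot replace this invariant. Take $(\la f (\la z z)((\la{z'} z')(f\,\mathsf{I})))(\la y y)$. The $\upp$ phase started at $y$ exits $\la y y$ with a log of length $1$. After the jump (or the backtracking excursion, if the parameter is positive) it resumes at the occurrence of $f$, with a log of length $2$ and at a strictly deeper position. In general a jump lands on an arbitrary earlier variable occurrence. For pure jumps, what is well founded is the nesting of logged positions: the new log is the one stored inside the old log head. Inside a backtracking excursion, however, $\tomacharg$ steps push freshly created logged positions, so no such syntactic measure survives nesting.

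Your read-back also needs the same invariant, and so does your ``only if'' direction. The log does not store environments. It only records, for each argument enclosing the current code, which variable occurrence that argument replaced. The closure bound to a free variable $\vartwo$ of the current code must be recomputed by an $\upp$-search from the binder of $\vartwo$, that is, by running the machine. So the decoding is well defined only on states all of whose tests terminate, including tests for closures that the actual run never looks up.

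Once the invariant and the landing lemma are available, the detour through the \KAM is unnecessary. The \KJAM run from $\state^\kparam_\tm$ is the \IAM run from $\state^\infty_\tm$ in which every backtracking excursion opened while already inside $\kparam$ nested excursions is collapsed into a single $\tomachjump$. The two runs therefore terminate together, and both directions follow from the correctness of the \IAM. This is, in essence, the adaptation of \cite{POPL2021} that the paper alludes to.
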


The direction of reachable states is directly related to the backtracking nesting depth, as the next invariant shows.
Let $|T|_l$ be the number of logged position in a tape $T$.
\begin{proposition}[Phase Invariant]
	\label{prop:tape-logs-jam-strat}
	Let $s=(t,C,L,T,d,k')$ be a $\mathrm{\KJAM}$ state reachable from $\state_u^{k<\infty}$. If $d=\uppt$ then $|T|_l = 2(\kparam-k')$ and if $d=\downpt$ then $|T|_l = 2(\kparam-k')+1$. 
\end{proposition}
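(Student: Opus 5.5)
The plan is to prove the statement by induction on the length of the initial run $\state_u^{\kparam}\tomach^*s$, that is, to show that the invariant holds for the initial state and is preserved by every transition in \reffig{LJAM-strat}. Write $f(\pol)$ for $0$ when $\pol$ is the first direction of the statement and $1$ when it is the second. The goal is then the uniform equation $|T|_l = 2(\kparam-k') + f(\pol)$.

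\emph{Base case.} The initial state $\state_u^{\kparam}=(\red{u},\ctxhole,\epsilon,\epsilon,\kparam)$ has an empty tape, so $|T|_l=0$. It has $k'=\kparam$, and its direction is the one the statement assigns the value $0$. Hence $0=2\cdot 0+0$.

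\emph{Inductive step.} I would split the transitions according to how they act on the three quantities $|T|_l$, $k'$ and $\pol$, and check that the change in the left-hand side equals the change in the right-hand side.

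\begin{itemize}
\item The structural transitions $\tomachdotone,\tomachdottwo,\tomachdotthree,\tomachdotfour$ keep the direction and $k'$ unchanged. They only push or pop the symbol $\resm$, or move logged positions between log and tape without changing their number on the tape. So both sides are unchanged.
\item The transition $\tomachvar$ turns the search for the head variable into the search for its argument. It switches the direction, so $f$ goes from $0$ to $1$, and it pushes one logged position on the tape, so $|T|_l$ grows by $1$. It leaves $k'$ untouched.
\item Symmetrically, $\tomacharg$ pops one logged position from the tape into the log and switches the direction back. Both sides decrease by $1$.
\item The backtracking transition $\tomachbtone$ pushes a logged position on the tape, switches from the $f=1$ to the $f=0$ direction, and decrements $k'$. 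The right-hand side changes by $+2-1=+1$, matching the $+1$ on the tape.
\item Dually, $\tomachbttwo$ pops a logged position, switches from $f=0$ to $f=1$, and increments $k'$. The right-hand side changes by $-2+1=-1$, again matching the tape.
\item Finally, $\tomachjump$ fires only when $k'=0$ in place of $\tomachbtone$. It does not change $k'$, and I expect that it leaves both the number of logged positions on the tape and the direction unchanged, jumping directly to the binder's argument using the logged position.
\end{itemize}

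The main obstacle is simply making sure the bookkeeping is read off correctly from \reffig{LJAM-strat}. Three points need care: exactly which transitions move logged positions (as opposed to $\resm$ symbols) between the tape and the log, which direction $\tomachbtone$ and $\tomachbttwo$ land in, and what $\tomachjump$ does to the tape. If $\tomachjump$ did in fact change the direction or the tape, the case would still balance only if it preserves $2(\kparam-k')+f(\pol)-|T|_l$. I would check it first, since it is the only transition not inherited from the \IAM.

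The hypothesis $\kparam<\infty$ is used only to make the arithmetic $\kparam-k'$ meaningful. With $\kparam=\infty$ the statement degenerates, and no $\tomachjump$ ever fires.
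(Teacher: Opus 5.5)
Your proof is correct and follows the paper's route. The paper proves the type-theoretic counterpart, Prop.~\ref{prop:depth-context-SJAM-strat}, by the same induction on the length of the initial run with the same per-transition bookkeeping: the depth of the type context, which the bisimulation identifies with $|T|_l$, goes up by one on $\tomachvar$ and $\tomachbtone$, down by one on $\tomacharg$ and $\tomachbttwo$, and is unchanged on the $\resm$-transitions and on $\tomachjump$. Arguing directly on the \KJAM{} as you do is therefore the same proof, and your hedges resolve as you expected: the structural transitions only push or pop $\resm$ and never move logged positions, and $\tomachjump$ leaves the tape, the direction and the parameter $0$ unchanged, as the example in \refsect{k-machine} shows.
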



\subparagraph{An Example.}
\label{para:ex-sjam}
We give an example of execution of the \KJAM with parameter $\kparam=1$. The first part of the execution looks for the head variable of the term $(\lambda x.xx)(\lambda y.y)$, namely $x$.
\[{\footnotesize
	\begin{array}{l|c|c|c|c|c|c}
		&\mathsf{Sub}\mbox{-}\mathsf{term} & \mathsf{Context} & \mathsf{\Log} & 
		\mathsf{Tape} & \mathsf{Dir} & \mathsf{Depth}
		\\
		\cline{1-7}
		&\ndstatetab{(\la x xx)(\la y y)} {\ctxhole} {\epsilon} {\epsilon} 
		\downp & 1\\
		\iamdap&\ndstatetab{\la x xx} {\ctxhole(\la y y)} {\resm} {\epsilon} 
		\downp& 1\\
		\iamdlamone&\ndstatetab{xx} {(\la x \ctxhole)(\la y y)} {\epsilon} 
		{\epsilon} \downp& 1\\
		\iamdap&\ndstatetab{x} {(\la x \ctxhole x)(\la y y)} {\resm} {\epsilon} 
		\downp& 1\\
		\iamdvar&\nustatetab{\la x xx} {\ctxhole(\la y y)} 
		{(x,(\la x \ctxhole x)(\la\vartwo\vartwo),\epsilon)\cdot\resm} 
		{\epsilon}{\upp}& 1 \\
\end{array}}
\]
Once the variable head $x$ is found, the machine switches to $\upp$ and starts to search for its argument, while saving the logged position of $x$ on the tape. We set $\lpos_\var\defeq(x,(\la x \ctxhole x)(\la\vartwo\vartwo),\epsilon)$.
\[{\footnotesize
	\begin{array}{l|c|c|c|c|c|c}
		&\mathsf{Sub}\mbox{-}\mathsf{term} & \mathsf{Context} & \mathsf{\Log} & 
		\mathsf{Tape} & \mathsf{Dir}& \mathsf{Depth}
		\\
		\cline{1-7}
		&\nustatetab{\la x xx} {\ctxhole(\la y y)} 
		{\lpos_\var\cdot\resm} 
		{\epsilon}{\upp}& 1 \\
		\iamuaplone&\ndstatetab{\la y y} {(\la x xx)\ctxhole} {\resm} 
		{\lpos_\var}\downp& 1 \\
		\iamdlamone&\ndstatetab{y} {(\la x xx)(\la y \ctxhole)} {\epsilon} 
		{\lpos_\var}\downp& 1 \\
		\iamdvar&\nustatetab{\la y y} {(\la x xx)\ctxhole} 
		{(y,(\la\var \var\var)(\la\vartwo\ctxhole),\lpos_\var)} {\lpos_\var}\upp& 1 \\
\end{array}}
\]
The $\tomacharg$ transition saves the position $l_x$ on the log and the machine starts looking for the head variable of the argument. Having found it, it saves its position on the tape. Now the machine starts looking for the argument of $\lambda y.y$ in $\upp$ mode.  However, $\lambda y.y$ was not the left side of an application forming a $\beta$-redex. Indeed, it was virtually substituted for the first occurrence of $x$, thus creating
the ``virtual'' redex $(\lambda y.y)x$. Its argument is thus the second occurrence of $x$. Since the backtracking parameter is $1$, the machine is allowed to perform one level of backtracking. It therefore starts a backtracking run with a $\tomachbtone$ transition, in order to find the variable associated with the logged position $l_x$ stored in the log. We set $\lpos_\vartwo\defeq (y,(\la\var\var)(\la\vartwo\ctxhole),\lpos_\var)$.\\
\[{\footnotesize
	\begin{array}{l|c|c|c|c|c|c}
		&\mathsf{Sub}\mbox{-}\mathsf{term} & \mathsf{Context} & \mathsf{\Log} & 
		\mathsf{Tape} & \mathsf{Dir}&\mathsf{Depth}
		\\
		\cline{1-7}
		&\nustatetab{\la y y} {(\la x xx)\ctxhole} 
		{\lpos_\vartwo} {(x,(\la x \ctxhole x)(\la\vartwo\vartwo),\epsilon)}\upp& 1 \\
		\tomachbtone&\ndstatetab{\la x xx} { \ctxhole(\la y y)} 
		{(x,(\la x \ctxhole x)(\la\vartwo\vartwo),\epsilon)\cdot\lpos_\vartwo} {\epsilon}\downp& 0 \\
		\tomachbttwo&\nustatetab{x} {(\la x \ctxhole x)(\la y y)} 
		{\lpos_\vartwo} {\epsilon}\upp& 1 \\
\end{array}}\]
If instead we had started with an initial parameter equal to $0$, the machine would have skipped the backtracking run between $\tomachbtone$ and $\tomachbttwo$, and would have jumped directly to the position stored in $l_x$, as shown below.
\[{\footnotesize
	\begin{array}{l|c|c|c|c|c|c}
		&\mathsf{Sub}\mbox{-}\mathsf{term} & \mathsf{Context} & \mathsf{\Log} & 
		\mathsf{Tape} & \mathsf{Dir}&\mathsf{Depth}
		\\
		\cline{1-7}
		&\nustatetab{\la y y} {(\la x xx)\ctxhole} 
		{\lpos_\vartwo} {(x,(\la x \ctxhole x)(\la\vartwo\vartwo),\epsilon)}\upp& 0 \\
		\iamujump&\nustatetab{x} {(\la x \ctxhole x)(\la y y)} 
		{\lpos_\vartwo} {\epsilon}\upp& 0 \\
\end{array}}
\]
Notice that the states reached after $\tomachbttwo$ and $\tomachjump$ are the same, except for the value of the parameter.
Continuing the original run, the machine can now find immediately the argument, being the second occurrence of $x$. Then, it starts looking for the argument of this variable, finding of course a new copy of $\lambda y.y$. 
\[{\footnotesize
	\begin{array}{l|c|c|c|c|c|c}
		&\mathsf{Sub}\mbox{-}\mathsf{term} & \mathsf{Context} & \mathsf{\Log} & 
		\mathsf{Tape} & \mathsf{Dir} & \mathsf{Depth}
		\\
		\cline{1-7}
		&\nustatetab{x} {(\la x \ctxhole x)(\la y y)} 
		{\lpos_\vartwo} {\epsilon}\upp& 1 \\
		\iamuaplone&\ndstatetab{x} {(\la x x\ctxhole)(\la y y)} {\epsilon} 
		{\lpos_\vartwo}\downp& 1 \\
		\iamdvar&\nustatetab{\la x xx} {\ctxhole(\la y y)} 
		{(x,(\la x x\ctxhole)(\la\vartwo\vartwo),\lpos_\vartwo)} {\epsilon}\upp & 1
		\\
		\iamuaplone&\ndstatetab{\la y y} {(\la x xx)\ctxhole} {\epsilon} 
		{(x,(\la x x\ctxhole)(\la\vartwo\vartwo),\lpos_\vartwo)}\downp& 1\\
\end{array}}\]
The computation then stops, signaling that the term has a normal form.

\section{The Sequence \KJAM}
In this section, we present the definition of the Sequence Parametric \JAM (\KSJAM), which is the type theoretic version of the \KJAM and which is strongly bisimilar to it. Instead of being defined in an automata-theoretic way, the \KSJAM is defined as a machine \emph{acting on} the intersection type derivation for the term under evaluation. Indeed, abstracting from the concrete data structures, as done in~~\cite{POPL2021,LICS2021}, allows us to better reason about the complexity of the machine. We do not have here the space to explain in detail how the type system works, and thus we refer to~\cite{BucciarelliKV17,DBLP:conf/lics/BonoD20}.

\subparagraph{Sequence Types.} We define sequence types, or non-idempotent and not commutative intersection types, as follows:
\begin{figure}[t]
	\[
	\begin{array}{c@{\hspace{1cm}}c@{\hspace{1cm}}c}
		\infer[\tyvar]{\tjudgw{\var:\mset{\linty}}{\occstar{\linty}_k}{\var}{\linty}}{}
		&
		\infer[\tylam]{\tjudgw{\tye}{w+\occstar{\arr\mty\linty}_k}{\lambda\var.\tm}{\arr{\mty}{\linty}}} 
		{\tjudgw{\tye,\var:\mty}{w}{\tm}{\linty}}
		&
		\infer[\tylamstar]{\tjudgw{}{0}{\lambda\var.\tm}{\initty}}{}
		\\[8pt]
		\multicolumn{3}{c}{\infer[\tyapp]{\tjudgw{\tye\uplus_i 
					 \tyetwo_i  }{w+\sum_i 
					v_i+\occstar{\linty}_k}{\tm\tmtwo}{\linty 
			}}{\tjudgw{\tye}{w}{\tm}{\arr{\mset{\lintytwo_1,\ldots,\lintytwo_n}}{\linty}}
				& 
				\mset{\tjudgw{\tyetwo_i}{v_i}{\tmtwo}{\lintytwo_i}}_{i\in\mset{1,\ldots,n}}}}
	\end{array}
	\]
	\vspace{-8pt}
	\caption{The weighted sequence type system.}
	\label{fig:mtypesystem}
\end{figure}
\begin{center}$
	\begin{array}{rrcl@{\hspace{1cm}}rrcl}
		\textsc{Linear 
			types}&\linty,\lintytwo&\grameq&\initty\grammarpipe\arr{\mty}{\linty} \\[3pt]
		\textsc{Sequence types}&\mty,\mtytwo&\grameq&\mset{\linty_1,\ldots,\linty_n}\;\;\;\; n\geq 0
	\end{array}$
\end{center}
Please note that there is only one ground type $\initty$. This type
will be used to type normal forms, and these are precisely abstractions
in \ccbn. The empty
sequence is denoted by $\emmset$ and is used to type subterms which will
be discarded along the evaluation. The concatenation of two sequences $\mty$
and $\mtytwo$ is denoted by $\mty\uplus\mtytwo$. Please note that the
type system is \emph{linear}. This means that weakening and contraction
are not allowed. 
Type judgments have the form $\tjudg{\tye}{\tm}{\linty}$, where $\tye$ is a type
environment, \ie a total function from variables to sequence types that
maps all but finitely many variables to $\emmset$. We write $\tye =
\var_1:\mty_1,\ldots,\var_n:\mty_n$ if $\dom\tye = \set{\var_1,\ldots,\var_n}$,
where $\dom{\tye}$ is defined as the (finite) set of variables in $\tye$ that
are mapped to non-empty sequence types. The union of two type environments
$\tye,\tyetwo$ is denoted by $\tye\uplus\tyetwo$ and stands for the type
environment mapping every variable $\var \in \dom{\tye} \cup \dom{\tyetwo}$ to
the sequence $\tye(\var)\uplus\tyetwo(\var)$. The typing rules are in Figure~\ref{fig:mtypesystem}, please ignore the weights for now.
Type derivations are denoted by $\tyd$, and we write $\tyd \pof
\tjudg{\tye}{\tm}{\linty}$ to denote a type derivation $\tyd$ ending in the
judgment $\tjudg{\tye}{\tm}{\linty}$.

\subparagraph{The \KSJAM.}
\label{para:sjam}

The idea behind the \KSJAM is simple: the machine moves around a fixed type derivation $\tyd\pof\tjudg{}{\tm}{\initty}$, to be thought of as the 
code. The \emph{focus} of the machine is expressed as an occurrence of a type judgment $\ruleoc$ 
of $\tyd$. Like the \KJAM, the \KSJAM has two possible directions, noted $\downpt$ and $\uppt$.\footnote{Type 
	derivations are upside-down w.r.t. to the term structure, then direction $\downp$ of the \KJAM becomes here $\uppt$, and 
	$\upp$ is $\downpt$.} In direction $\uppt$ the machine looks at the rule above the focused judgment, in direction 
$\downpt$ at the rule below. Each state contains a type context
$\ltyctx$ isolating an occurrence of 
$\initty$ in the type $\linty$ of the focused judgment (occurrence) 
$\tjudg{\tye}{\tmtwo}{\linty}$, defined as follows:
\begin{center}$
	\begin{array}{rrcll@{\hspace{.3cm}}rrcl}
		\textsc{Type ctxs}&\ltyctx & \grameq  &\ctxhole \grammarpipe \arr\mty\ltyctx 
		\grammarpipe 
		\arr{\mtyctx}\linty\\
		
		\textsc{Sequence ctxs}&\mtyctx&\grameq&\mset{\linty_1,.., 
			\ltyctx,..,\linty_n} \quad n\geq 0
	\end{array}$
\end{center}
Moreover, as in the \KJAM, each state carries a parameter $k\in\nat\cup\{\infty\}$.
Summing up, a state $\state$ is a 
quintuple $(\tyd, \ruleoc, \ltyctx, d,k)$. If $\ruleoc$ is in the form 
$\tjudg{\tye}{\tmtwo}{\linty}$, we often write $\state$ as 
$\tjudg{}{\tmtwo}{\ltyctxp{\initty^k_\pol}}$, where $\ltyctxp{\initty}=\linty$, in fact type environments play no role here. 
Given a derivation $\pi\pof\vdash t:\star$, the initial state of the machine will be $(\tyd, \ruleoc, \ctxhole, \uppt,\kparam)$, with $\ruleoc$ the final judgment of $\pi$ and $k\in\nat\cup\{\infty\}$ the initial parameter of the machine.
\begin{figure}[t]
	\scalebox{.93}{
\begin{tabular}{c}
{\footnotesize$\begin{array}{ccc||ccc}
	\infer{\tjudg{}{\red{\tm\tmtwo}}{\ltyctxp{\initty^k_\uppt} (=\linty)}} 
	{\tjudg{}{\tm}{\arr{\mty}{\linty}} & \mset{\vdash}} 
	&\tomachdotone&
	\infer{\tjudg{}{\tm\tmtwo}{\linty 
		}}{\tjudg{}{\red\tm}{\arr{\mty}{\ltyctxp{\initty^k_\uppt}}} & 
		\mset{\vdash}}
		&
	\infer{\tjudg{}{\red{\lambda\var.\tm}}{\arr{\mty} 
			{\ltyctxp{\initty^k_\uppt}}}} 
	{\tjudg{}{\tm}{\linty (= \ltyctxp{\initty})}}
	& \tomachdottwo &
	\infer{\tjudg{}{\lambda\var.\tm}{\arr{\mty}{\linty}}} 
	{\tjudg{}{\red\tm}{\ltyctxp{\initty^k_\uppt}}}
	 \\[8pt]\hhline{======}&&&\\
	\infer{\tjudg{}{\tm\tmtwo}{\linty(= \ltyctxp{\initty}) 
		}}{\tjudg{}{\blue\tm}{\arr{\mty}{\ltyctxp{\initty^k_\downpt}}} & 
		\mset{\vdash}}
	&\tomachdotthree&
		\infer{\tjudg{}{\blue{\tm\tmtwo}}{\ltyctxp{\initty^k_\downpt}}} 
		{\tjudg{}{\tm}{\arr{\mty}{\linty}} & \mset{\vdash}}
		&
		\infer{\tjudg{}{\lambda\var.\tm}{\arr{\mty}{\linty}}} 
		{\tjudg{}{\blue\tm}{\ltyctxp{\initty^k_\downpt} (=\linty)}}
		 & \tomachdotfour &
		\infer{\tjudg{}{\blue{\lambda\var.\tm}}{\arr{\mty} 
				{\ltyctxp{\initty^k_\downpt}}}} 
		{\tjudg{}{\tm}{\linty}}
		\\[8pt]\hhline{======}\\
		\end{array}$}
		\\
	{\footnotesize$\begin{array}{ccccccc }
	\infer*{\infer{\tjudg{}{\la\var\ctxp{\var}} 
			{\arr{\mset{\myldots\linty_i\myldots}}\lintytwo}}{}}
	{\infer[i]{\tjudg{}{\red\var}{\ltyctxp{\initty^k_\uppt}_i (= \linty_i)}}{}}   
	&\tomachvar&
	 \infer*{\infer{\tjudg{}{\blue{\la\var\ctxp{\var}}} 
			{\arr{\mset{\myldots\ltyctxp{\initty^k_\downpt}_i\myldots}}\lintytwo}}{}}
	{\infer[i]{\tjudg{}{\var}{\linty_i}}{}}
	\\[8pt]\hhline{===}\\
	
	\infer*{\infer{\tjudg{}{\red{\la\var\ctxp{\var}}} 
			{\arr{\mset{\myldots\ltyctxp{\initty^k_\uppt}_i\myldots}}\lintytwo}}{}}
	{\infer[i]{\tjudg{}{\var}{\linty_i (=\ltyctxp{\initty}_i)}}{}}
	 & \tomachbttwo &
	 \infer*{\infer{\tjudg{}{\la\var\ctxp{\var}} 
			{\arr{\mset{\myldots\linty_i\myldots}}\lintytwo}}{}}
	{\infer[i]{\tjudg{}{\blue\var}{\ltyctxp{\initty^{k+1}_\downpt}_i}}{}} 
	\\[8pt]\hhline{===}\\
		\infer{\tjudg{}{\tm\tmtwo}{\lintytwo}} 
		{\tjudg{}{\blue\tm}{\arr{\mset{\myldots 
						\ltyctxp{\initty^k_\downpt}_i\myldots}}{\lintytwo}}
			& \tjudgi{}{\tmtwo}{\linty_i (=\ltyctxp{\initty}_i)}}
		& \tomacharg &
		\infer{\tjudg{}{\tm\tmtwo}{\lintytwo}} 
		{\tjudg{}{\tm}{\arr{\mset{\myldots 
						\linty_i\myldots}}{\lintytwo}}
			& \tjudgi{}{\red\tmtwo}{\ltyctxp{\initty^k_\uppt}_i}}
		\\[8pt]\hhline{===}\\
		\infer{\tjudg{}{\tm\tmtwo}{\lintytwo}} 
		{\tjudg{}{\tm}{\arr{\mset{\myldots 
						\linty_i\myldots}}{\lintytwo}}
			& \tjudgi{}{\blue\tmtwo}{\ltyctxp{\initty^{k+1}_\downpt}_i (=\linty_i)}}
		 & \tomachbtone &
		\infer{\tjudg{}{\tm\tmtwo}{\lintytwo}} 
		{\tjudg{}{\red\tm}{\arr{\mset{\myldots 
						\ltyctxp{\initty^k_\uppt}_i\myldots}}{\lintytwo}}
			& \tjudgi{}{\tmtwo}{\linty_i}}
		\\[8pt]\hhline{===}\\
		\infer{\tjudg{}{\tm\tmtwo}{\lintytwo}} 
			{\tjudg{}{\tm}{\arr{\mset{\myldots 
				\linty_i\myldots}}{\lintytwo}}
			& q \defeq\, \tjudgi{}{\blue\tmtwo}{\ltyctxp{\initty^0_\downpt}_i (=\linty_i)}}
			\qquad & \tomachjump & \qquad
			\jump{q} =\, \infer{\tjudg{}{\blue\var}{\ltyctxp{\initty^0_\downpt}_i (=\linty_i)}}{}\\[5pt]
			\multicolumn{3}{l}{} \\[3pt]
				\multicolumn{3}{l}{ }
		\end{array}$}\\[-12pt]
	where  $\jump{\tjudg{}{\blue\tmtwo}{\ltyctxp{\initty^0_\downpt} }}$ is the first encountered state $\tjudg{}{\blue\var}{\ltyctxp{\initty^0_\downpt} }$  such that\\$\tjudg{}{\blue\tmtwo}{\ltyctxp{\initty^\infty_\downpt} } \toksjam^+ \tjudg{}{\blue\var}{\ltyctxp{\initty^\infty_\downpt} }$.
		\end{tabular}

}
	\caption{The transitions of the Sequence Parametric \JAM (\KSJAM).}
	\label{fig:sjam-strat}  
\end{figure}
Then, it moves from judgment to judgment, following occurrences of $\initty$ around $\tyd$. The 
transitions are in \reffig{sjam-strat}, their union noted $\toksjam$. They have the same 
labels as \KJAM transitions, because they correspond to each other, as we shall show. As in the case of the \KJAM, the \KSJAM is deterministic.

The \KSJAM is a parametric version of the \SIAM of~\cite{POPL2021}. Since most of its transitions coincide with those of the original machine, we refer to that work for a detailed explanation. We discuss the behavior of the parameter $k$. As in the \KJAM, the parameter $k$ is decremented after a transition $\tomachbtone$ and incremented after a transition $\tomachbttwo$. When $k$ reaches $0$, the backtracking mechanism is disabled and the $\tomachjump$ transition is enabled. This means that the transitions $\tomachbtone$ and $\tomachbttwo$, are substituted by the  non-local ``macro'' transition $\tomachjump$ that simulates the jump of the \KJAM. In fact, on type derivations we do not have logged positions, and thus we cannot define the jump in a direct way.
Please note that \emph{a priori} $\jump{\cdot}$ as defined in \reffig{sjam-strat} is a partial function. Indeed, it is not obvious that there always exists a run such as the one described in the definition of the $\jump{\cdot}$ function. We will prove in the following section that indeed $\jump{\cdot}$ is total if restricted to reachable states.


\begin{example}
\label{ex:skjam}
We present here the execution on the \KSJAM on the same term analyzed in \refsect{k-machine}, with parameter $\kparam=1$. We have reported its type derivation, with the occurrences of $\star$ on the right of $\vdash$ annotated with increasing integers, a direction and a backtracking parameter. The occurrence of $\star$ marked with $1$ represents the first state, and so on.
\[
\infer{\tjudg{}{(\la\var\var\var)(\la\vartwo\vartwo)}{\initty^1_{\uppt{\red{1}}}}}{
	\infer{\tjudg{}{\la\var{\var\var}}{\arr{\mset{\arr{\mset{\initty^0_{\uppt{\red
									{9}}}}}{\initty^1_{\downpt{\blue
								{5}}}},\initty^1_{\downpt{\blue 
							{12}}}}}{\initty^1_{\uppt{\red
						{2}}}}}}{
		\infer{\tjudg{\var:\mset{\arr{\mset{\initty}}{\initty},\initty}}{\var\var}{\initty^1_{\uppt{\red
						{3}}}}}{
			\infer{\tjudg{\var:\mset{\arr{\mset{\initty}}{\initty}}}{\var}{\arr{\mset{\initty^1_{\downpt{\blue
									{10}}}}}{\initty_{\uppt{\red
								{4}}}}}}{}
			& \infer{\tjudg{\var:\mset\initty}{\var}{\initty^1_{\uppt{\red 
							{11}}}}}{}}} &
	\infer{\tjudg{}{\la\vartwo\vartwo}{\arr{\mset{\initty^1_{\downpt{\blue 
							{8}}}}}{\initty^1_{\uppt{\red
						{6}}}}}}{
		\infer{\tjudg{\vartwo:\initty}{\vartwo}{\initty^1_{\uppt{\red 
						{7}}}}}{}} &
	\infer{\tjudg{}{\la\vartwo\vartwo}{\initty^1_{\uppt{\red 
					{13}}}}}{}}
\]
One can immediately notice that every occurrence of $\star$ is visited exactly once. Also, the
sequence of the visited subterms and transitions is the same as the one obtained in the example of \refsect{k-machine}.
As we did for the example on the \KJAM, we now provide an execution with initial parameter $\kparam=0$.
\[
\infer{\tjudg{}{(\la\var\var\var)(\la\vartwo\vartwo)}{\initty^0_{\uppt{\red{1}}}}}{
	\infer{\tjudg{}{\la\var{\var\var}}{\arr{\mset{\arr{\mset{\initty}}{\initty^0_{\downpt{\blue
								{5}}}},\initty^0_{\downpt{\blue 
							{11}}}}}{\initty^0_{\uppt{\red
						{2}}}}}}{
		\infer{\tjudg{\var:\mset{\arr{\mset{\initty}}{\initty},\initty}}{\var\var}{\initty^0_{\uppt{\red
						{3}}}}}{
			\infer{\tjudg{\var:\mset{\arr{\mset{\initty}}{\initty}}}{\var}{\arr{\mset{\initty^0_{\downpt{\blue
									{9}}}}}{\initty^0_{\uppt{\red
								{4}}}}}}{}
			& \infer{\tjudg{\var:\mset\initty}{\var}{\initty^0_{\uppt{\red 
							{10}}}}}{}}} &
	\infer{\tjudg{}{\la\vartwo\vartwo}{\arr{\mset{\initty^0_{\downpt{\blue 
							{8}}}}}{\initty^0_{\uppt{\red
						{6}}}}}}{
		\infer{\tjudg{\vartwo:\initty}{\vartwo}{\initty^0_{\uppt{\red 
						{7}}}}}{}} &
	\infer{\tjudg{}{\la\vartwo\vartwo}{\initty^0_{\uppt{\red 
					{12}}}}}{}}
\]
Notice that the two runs are identical up to the eighth state. In the latter example, the backtracking depth of this state is $0$, so the machine jumps directly to the axiom corresponding to $x$. In the former on the other hand, the machine initiates a backtracking run, passing once again through the subterm $\lambda x.xx$ before reaching the axiom.
\end{example}

The \KJAM and the \KSJAM are strongly bisimilar. The core idea underlying this bisimulation is that each state of the \KSJAM encodes the log of the \KJAM in the judgment occurrence and the tape in the type context. Since the proof of this correspondence essentially follows the same argument as in~\cite{POPL2021,LICS2021}, we defer the technical details to the Appendix and present here only its main consequence.

\begin{proposition}[\KJAM-\KSJAM Bisimulation]
    \label{prop:bisim-sjam-ljam-art}
	The $\mathrm\KJAM$ and the $\mathrm\KSJAM$ are strongly bisimilar.
\end{proposition}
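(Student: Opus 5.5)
The plan is to follow the strategy of \cite{POPL2021,LICS2021} for the IAM/SIAM. We define a relation $\bisimtypes$ between reachable \KJAM states and \KSJAM states on a fixed derivation $\tyd\pof\tjudg{}{\tm}{\initty}$. It is built from three extraction functions. The first, $\elpos{\cdot}$, sends a judgment occurrence $\ruleoc$ typing a variable occurrence to a logged position. The second, $\elog{\cdot}$, sends a judgment occurrence to the log of its path in $\tyd$. The third, $\etape{\cdot}$, sends a type context $\ltyctx$ to a tape. The log is read off the position of the focus, as the sequence of logged positions of the variable axioms whose sequence-indices were traversed when entering arguments. The tape is read off the type context, one entry for each $\mset{\myldots\ltyctx\myldots}$ on the path to the hole, plus the residual $\resm$'s for the arrows traversed on the right.

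We then set $(\tmtwo,\ctx,\tlog,\tape,\pol,\kparam')\bisimtypes(\tyd,\ruleoc,\ltyctx,\poltwo,\kparam'')$ when four conditions hold: $\ruleoc$ types $\tmtwo$ in context $\ctx$; $\tlog=\elog{\ruleoc}$ and $\tape=\etape{\ltyctx}$; the directions correspond as in the footnote; and $\kparam'=\kparam''$. Initial states are clearly related, with empty log, empty tape and hole context.

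The main step is to show that $\bisimtypes$ is a strong bisimulation, by case analysis on the transition.

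\textbf{Local transitions.} For $\tomachdotone,\ldots,\tomachdotfour$, $\tomachvar$, $\tomacharg$, $\tomachbtone$ and $\tomachbttwo$ the parameter is unaffected or changed identically on both sides. Here we reuse verbatim the IAM/SIAM argument: each \KJAM transition pushes or pops a $\resm$ or a logged position, and this corresponds exactly to moving the hole of $\ltyctx$ or moving the focus in $\tyd$. The only new check is that the side conditions $k>0$ versus $k=0$ enabling $\tomachbtone$ versus $\tomachjump$ coincide, and they do because the parameters are equal.

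\textbf{The jump.} This is the main obstacle. On the \KJAM side, the jump reads the logged position $\lpos$ at the top of the tape and moves directly to the variable occurrence recorded there. On the \KSJAM side, $\jump{\cdot}$ is defined by reference to the $\infty$-machine run. We first prove an auxiliary lemma: from a reachable \KSJAM state $\state$ with focus on $\blue\tmtwo$ at parameter $0$, the same state at parameter $\infty$ performs a backtracking run. This run is itself bisimilar to an IAM backtracking run from the corresponding \KJAM state with parameter $\infty$. By the IAM correctness of backtracking (\cite{PPDP2020,POPL2021}, and consistent with \refprop{tape-logs-jam-strat}), that run terminates in the axiom whose logged position is $\lpos$, with the tape's top entry consumed and the log unchanged.

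Hence $\jump{\state}$ is defined, which also proves totality of $\jump{\cdot}$ on reachable states. Moreover, $\jump{\state}$ is the judgment $\ruleoc'$ with $\elpos{\ruleoc'}=\lpos$, and its type context is the original one with the hole restored. So it is $\bisimtypes$-related to the \KJAM target state.

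For this lemma we will need that parameter changes do not influence the positional components. This follows by a simple induction, since the parameter occurs in no side condition other than the choice between backtracking and jumping.

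Finally, determinism of both machines turns the step-by-step simulation into strong bisimilarity, by induction on run length.
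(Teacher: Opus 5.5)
Your overall architecture (an extraction relation, one case per transition, determinism) is the paper's. The ingredient that makes it work is missing, though.

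\textbf{The extraction is not static.} The entries of the log and of the tape are not read off the focused judgment and $\ltyctx$. The logged position attached to a sequence element in the right premise of a \tyapp{} rule is in general not determined by its index. In $(\lambda x.xx)(\lambda y.y)$, the second occurrence of $x$ is the argument of the virtual redex formed with $\lambda y.y$. Its log entry is the logged position of the axiom for $y$. That axiom is found only by running the machine from the corresponding test, through the first $x$, the abstraction $\lambda x.xx$, and into $\lambda y.y$.

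This is why the paper extracts each log entry from a judgment test and each tape entry from a type test. A type test depends on the focused judgment, not on $\ltyctx$ alone, and the tests carry alternating directions. In each case the machine runs at parameter $\infty$ until it reaches an axiom respecting the test's focus. The mutual recursion defining $\elpos{\cdot}$ is well-founded only on J-exhaustible states.

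So before $\bisimtypes$ even makes sense you need \reflemma{K-invariant-ksjam-str}: every state reachable with parameter $k$ is J-exhaustible. This is proved by induction on runs, with one case per transition, $\tomachjump$ included. As written, your $\etape{\ltyctx}$ depends on $\ltyctx$ alone, which cannot be correct. Nothing in the proposal guarantees that the extraction runs exist.

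\textbf{The jump case inherits the same gap.} The IAM/SIAM backtracking results you cite hold for states that are reachable, or exhaustible, for the machine at parameter $\infty$. A state reached with parameter $k$ after earlier jumps is not a priori such a state. The induction you call simple is exactly where the parameter matters: at $0$ the run jumps, at $\infty$ it backtracks.

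What must be shown, simultaneously with totality of $\jump{\cdot}$, is that every $k$-reachable state is, up to the parameter, exhaustible for the $\infty$-machine. At a jump this needs three things:
\begin{itemize}
\item the source's first judgment test is positive;
\item its exhausting run lifts via \reflemma{type-context-lifting};
\item for the tape, the type tests of the target exhaust on the same states as those of the source (the paper uses reversibility of the $\infty$-machine plus lifting here).
\end{itemize}
Your route can be repaired this way. The jump target lies on the $\infty$-run from the source by definition, so you can prove that $k$-reachable states are $\infty$-reachable and import exhaustibility from \cite{POPL2021}. But that simultaneous induction is the missing content, not a side remark.

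Finally, you misdescribe the jump. It consumes the head $(\var,\ctx,\tlog')$ of the \emph{log}, replaces the log by $\tlog'$, and leaves the tape untouched. This matches the \KSJAM side, where $\tomachjump$ does not change $\ltyctx$. Checking the case against ``tape's top entry consumed and log unchanged'' compares with the wrong target state.
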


In particular, inside the technical development one proves that, whenever a $\tomachjump$ transition is performed from a reachable \KSJAM state, there always exists a run respecting the definition of the $\jump{\cdot}$ function given above.

\begin{restatable}[Well-Definedness of $\mathsf{Jump}$]{proposition}{cjumpwelldef}
	\label{prop:jump-well-def}
	Let $\state$ be a reachable $\mathrm\KSJAM$ state. Then $\jump{\state}$ is well-defined.
\end{restatable}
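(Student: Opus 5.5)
We prove the claim by relating the \KSJAM to the \SIAM of~\cite{POPL2021}, \ie the \KSJAM with parameter $\infty$, for which every run is a plain sequence of local transitions and no jump ever occurs. The reference to \SIAM runs is already built into the definition of $\jump{\cdot}$, since it asks for a run $\tjudg{}{\blue\tmtwo}{\ltyctxp{\initty^\infty_\downpt}} \toksjam^+ \tjudg{}{\blue\var}{\ltyctxp{\initty^\infty_\downpt}}$. So the task is to show that, from a reachable state $\state=\tjudgi{}{\blue\tmtwo}{\ltyctxp{\initty^0_\downpt}_i}$ in which a $\tomachjump$ fires, the parameter-$\infty$ machine started in $\state$ eventually reaches a state focused on a variable axiom, in direction $\downpt$, with the same type context. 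The natural candidate for this axiom is the variable occurrence whose logged position sits on the \KJAM tape at that point, \ie the target of the \KJAM jump.

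The key steps, in order, are as follows.

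First, we define an \emph{unfolding} map that sends any reachable \KSJAM state with parameter $k$ to the \SIAM state obtained by replaying the same derivation with $\infty$ in place of every parameter. Replaying means that each $\tomachjump$ is replaced by the backtracking segment it shortcuts.

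Second, we prove by induction on the length of the run reaching $\state$ the following invariant, which is a sequence-type version of the \SIAM "backtracking lemma" of~\cite{POPL2021}. For every reachable state in direction $\downpt$ that sits on an argument judgment $\tjudgi{}{\tmtwo}{\ltyctxp{\initty}_i}$ and has come out of an $\tomachvar$ phase, there is a prior \SIAM run that entered $\tmtwo$ via $\tomacharg$ from a variable axiom $\var$ carrying the same context $\ltyctxp{\cdot}_i$ in direction $\downpt$. The reversibility of the \SIAM then lets us run that segment backwards from the current position. Concretely, the \SIAM transitions taken in $\downpt$ mode from $\tmtwo$ mirror, step by step, the $\uppt$ transitions that led from the $\tomacharg$ entry to the current state. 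Since the \SIAM is deterministic and reversible and the type derivation is finite, the backward run can neither loop nor get stuck, and so it must reach the axiom $\var$.

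Third, we take the first state on this run that is of the form $\tjudg{}{\blue\var}{\ltyctxp{\initty^\infty_\downpt}}$ with the same type context. The Phase Invariant (\refprop{tape-logs-jam-strat}) together with the bisimulation of \refprop{bisim-sjam-ljam-art} identifies it with the position recorded in the \KJAM log, which yields totality of $\jump{\cdot}$.

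The main obstacle is the second step. We must show that the type context $\ltyctx$ returns \emph{exactly} the same context at the axiom, and that the first such axiom is the right one. The difficulty is that nested backtrackings inside the \SIAM run temporarily alter the context by pushing further positions into sequence types. We would handle this by a well-bracketing argument: $\tomachbtone$ and $\tomachbttwo$ pairs nest properly, and inside each nested block the outer context is frozen. We would formalise this by a measure on states, namely the nesting depth, as in \refprop{tape-logs-jam-strat}, and show that between the $\tomachjump$ point and the target the depth never drops below the starting value. The first return to that depth at a variable axiom is then the intended state.
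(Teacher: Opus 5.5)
Step~2 is where your argument breaks, and the depth-based remedy you sketch does not repair it.

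\textbf{The invariant is false as stated.} The type context carried when the jump fires need not be the one under which the argument was entered. In the second run of Example~\ref{ex:skjam} the jump fires at state~8, on the first copy of $\la\vartwo\vartwo$, with context $\arr{\mset{\ctxhole}}{\initty}$. The only $\tomacharg$ into that copy (states 4--6, starting from the axiom of the first $\var$) carried $\arr{\mset{\initty}}{\ctxhole}$. Reversing that segment does not even start from state~8, and it lands on that axiom with context $\arr{\mset{\initty}}{\ctxhole}$, not on state~9.

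\textbf{Reversibility does not give termination at the right place.} Determinism, reversibility and finiteness exclude cycles, but not ending in a blocked state. Reversal only helps once you already have a forward run whose endpoints are exactly dual to the ones you need.

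\textbf{The depth measure controls depth, not identity.} It shows at most that the first return to the starting depth is a $\tomachbttwo$ onto some axiom. It cannot show that the context there is $\ltyctx$ rather than $\ltyctxb\ctxholep{\ltyctx}$ for some depth-$0$ context $\ltyctxb$ such as $\arr{\mty}{\ctxhole}$. In other words, it cannot show that the axiom has type exactly $\linty_i$, which is what $\jump{\cdot}$ demands.

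\textbf{Step~3 is unnecessary and circular.} Existence plus determinism already yields a first such state. Moreover, the $\tomachjump$ case of the bisimulation presupposes that $\jump{\cdot}$ is defined.

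What is missing are the two ingredients of the paper's proof. First, formulate backtracking with the focus on the \emph{whole} argument type and empty type context, i.e.\ for the judgment test $\tjudg{}{\blue\tmtwo}{{\linty_i}^{\infty}_{\downpt},\ctxhole}$ of the Generalized \KSJAM. Then transport the resulting run $\tomachbtone\toksjam^*\tomachbttwo$ to the actual context $\ltyctx$ by \reflemma{type-context-lifting}. A run focused on $\linty_i$ as a unit never inspects its inside, so its lifting ends on an axiom of type exactly $\linty_i$ with context exactly $\ltyctx$; this also dissolves the context mismatch above.

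Second, to push this through an induction on the run, the invariant must cover every pending sequence layer of the type context (type tests), not just the log (judgment tests). In the $\tomacharg$ case the new judgment test is discharged by the first type test of the previous state, and the $\tomachvar$, $\tomachbtone$, $\tomachbttwo$ and $\tomachjump$ cases need type tests too. That is the J-exhaustible invariant of \reflemma{K-invariant-ksjam-str}, and the proposition is the judgment-test part of its $\tomachjump$ case plus determinism.

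You could keep your unfolding of step~1, done as a simultaneous induction since it presupposes that earlier jumps are defined. You would then import the exhaustible invariant for the \SIAM from~\cite{POPL2021} for the unfolded state, and lift. Either way, the work is done by a test-based invariant of this kind together with lifting, not by a reversal of the history.
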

\section{Measuring the \KJAM Runtime via Sequence Types}
\label{sect:measures}
In this section, we introduce the weight assignment system $\WeightTimekJAM{\pi}{k}$ and show that the weight of a typed $\lambda$-term $\tm$ corresponds exactly to the length of the complete \KJAM run on $\tm$.
The main idea we rely on is that some occurrences of $\star$ in types appearing on the right-hand side of judgments within a derivation $\pi \pof\vdash t : \star$ correspond to the states of the \KJAM run starting from $t$. Therefore, to determine the number of states reached by this run, we have to count the number of times such occurrences of $\star$ appear in $\pi$.
As we have seen, the $\tomachjump$ transition enables the machine to skip certain parts of the computation through a mechanism that depends on the initial parameter $\kparam$. In Example~\ref{ex:skjam}, we see that when the initial parameter is $k=0$, the occurrences of $\star$ that are visited are precisely those located within at most one sequence. In contrast, when $k=1$, the machine also reaches an occurrence of $\star$ nested within two sequences. We must therefore study how the presence of this parameter affects the occurrences of $\star$ in $\pi$ that are reachable by the \KSJAM from an initial state $\state^{k<\infty}_\tm$.

We first define formally the notion of depth of a type context:
	\[
	\depth{\ctxhole} \defeq 0 \quad \depth{\arr\mty\ltyctx} \defeq \depth{\ltyctx}
	\quad \depth{\arr{\mset{\myldots\ltyctx\myldots}}\linty}\defeq 1+ \depth{\ltyctx}
	\]
Then, we prove that all the \KSJAM states which are reachable from $\state^{k<\infty}_\tm$ have a type context of depth at most $2\kparam + 1$. As explained in the previous section, the bisimulation relation connects the type context of a \KSJAM state with the tape of the corresponding \KJAM state. In particular, given a \KSJAM\ state $\tjudg{}{\tmtwo}{\ltyctxp{\initty^k_\pol}}$, the number of logged positions on the tape of the corresponding \KJAM\ state is exactly $\depth{\ltyctx}$.
In Example~\ref{ex:skjam}, we observe a similar phenomenon in the \KSJAM. When $k=0$, the states reached by the \KJAM have at most one logged position in the tape. In contrast, when $k=1$, a state with two logged positions on the tape is reached. The following result can be seen as the type-theoretic counterpart of~\refprop{tape-logs-jam-strat}.
\begin{restatable}[\KSJAM Phase Invariant]{proposition}{propdepthcontextksjam}
	\label{prop:depth-context-SJAM-strat}
	Let $\state\defeq\,\tjudg{}{\tm}{\ltyctxp{\initty_\pol^p} }$ be a $\mathrm\KSJAM$ state reachable by the initial state $\state^{k}_u$.
	 If $\pol=\uppt$, then $\depth{\ltyctx}=2(\kparam-p)$, otherwise if $\pol=\downpt$, then $\depth{\ltyctx}=2(\kparam-p)+1$. In particular, $\depth{\ltyctx}\leq 2k+1$.
\end{restatable}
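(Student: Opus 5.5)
The plan is to prove the invariant by induction on the length of the run $\state^{k}_u \toksjam^* \state$, checking that every transition in \reffig{sjam-strat} preserves the relation between $\depth{\ltyctx}$, the direction $\pol$ and the current parameter $p$. In the base case, the initial state is $(\tyd,\ruleoc,\ctxhole,\uppt,\kparam)$. Here $\depth{\ctxhole}=0=2(\kparam-\kparam)$, so the claim holds.

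For the inductive step, I will go through the transitions one by one, using the definition of $\depth{\cdot}$.
\begin{itemize}
\item The transitions $\tomachdotone$ and $\tomachdottwo$ keep direction $\uppt$ and parameter $p$. They only move the type context between $\ltyctx$ and $\arr{\mty}{\ltyctx}$, which has the same depth. The same holds for $\tomachdotthree$ and $\tomachdotfour$ in direction $\downpt$.
\item The transitions $\tomachvar$ and $\tomacharg$ change the direction but not $p$. The transition $\tomachvar$ passes from $\ltyctx$ on an axiom $\var:\linty_i$ to $\arr{\mset{\myldots\ltyctx\myldots}}{\lintytwo}$ on the binder, so the depth grows by one, matching the move from $2(\kparam-p)$ to $2(\kparam-p)+1$. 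Symmetrically, $\tomacharg$ goes from $\arr{\mset{\myldots\ltyctx\myldots}}{\lintytwo}$ to $\ltyctx$ while switching $\downpt$ to $\uppt$, so the depth drops by one, again as required.
\item The transition $\tomachbtone$ goes from $\ltyctx$ in direction $\downpt$ with parameter $p+1$ to $\arr{\mset{\myldots\ltyctx\myldots}}{\lintytwo}$ in direction $\uppt$ with parameter $p$. The depth becomes $2(\kparam-p-1)+1+1=2(\kparam-p)$, as required.
\item Dually, $\tomachbttwo$ goes from $\uppt$ with parameter $p$ and context $\arr{\mset{\myldots\ltyctx\myldots}}{\lintytwo}$ to $\downpt$ with parameter $p+1$ and context $\ltyctx$. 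The new depth is $2(\kparam-p)-1=2(\kparam-(p+1))+1$.
\end{itemize}

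The main obstacle is $\tomachjump$, since it is a macro step defined through an auxiliary $\infty$-run and is not local. By \refprop{jump-well-def} the target exists. Both source and target have direction $\downpt$ and parameter $0$, and both are displayed with the same context $\ltyctxp{\initty}$ of the same shape, so the depth is unchanged. I would still need to justify that $\jump{\cdot}$ really returns a judgment whose type context has the same depth. The plan is to prove an auxiliary lemma for the $\infty$-run from $\tjudg{}{\blue\tmtwo}{\ltyctxp{\initty^\infty_\downpt}}$ to the first axiom $\tjudg{}{\blue\var}{\ltyctxp{\initty^\infty_\downpt}}$. The lemma says that this run passes through a $\tomachbtone$ step into a nested backtracking, which adds one level, and is closed by the matching $\tomachbttwo$, which removes it. So it starts and ends at the same relative depth, by the same bookkeeping as above with $\infty$ treated as a formal counter. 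If this proves fiddly, I would fall back on \refprop{bisim-sjam-ljam-art} together with \refprop{tape-logs-jam-strat}. Under the bisimulation $\depth{\ltyctx}$ equals $|T|_l$, and \refprop{tape-logs-jam-strat} states exactly the required equations for the \KJAM, so the result transfers directly.

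Finally, since $p\geq 0$, we get $\depth{\ltyctx}\leq 2\kparam+1$.
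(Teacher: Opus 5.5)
Your proposal is correct and follows essentially the paper's proof. Both argue by induction on the length of the run, with the same transition-by-transition depth bookkeeping: the depth is unchanged by $\tomachdotone$--$\tomachdotfour$, and it shifts by one under $\tomachvar$, $\tomacharg$, $\tomachbtone$ and $\tomachbttwo$ in step with the direction and the parameter. For $\tomachjump$ the paper simply notes that the type context and the direction are unchanged. This holds by the very definition of $\jump{\cdot}$, whose target carries the same $\ltyctx$, so you need neither your auxiliary lemma on the $\infty$-run nor the fallback through the bisimulation and \refprop{tape-logs-jam-strat}.
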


We define a norm on linear and sequence types, which counts the number of occurrences of the ground type $\initty$ inside at most $k\geq 1$ nested sequences:
\[\begin{array}{c@{\hspace{1cm}}c}
	\begin{array}{rcl}\occstar{\initty}_k &\defeq& 1\\[3pt]
	\occstar{\arr{\mty}{\linty}}_k &\defeq& \occstar{\mty}_k+\occstar{\linty}_k 
\end{array}
&
	\occstar{\mset{\linty_1,\ldots,\linty_n}}_k \defeq 
    	\begin{cases}
      		n & \text{if $k=1$}\\
      		\sum_{1\leq i \leq n}\occstar{\linty_i}_{k-1} & \text{if $k>1$}
    	\end{cases}	
\end{array}
  \]
This is extended to type derivations with the weight system $\WeightTimekJAM{\cdot}{k}$ of~\reffig{mtypesystem}.
\begin{proposition}[Weights Bound \KJAM Time]
	\label{prop:low-bound}
	Let $\tyd\pof\tjudgw{}{}{\tm}{\initty}$ be a type derivation and $n$ the length of the complete $\mathrm\KJAM$ run from the initial state $\state_\tm^\kparam$. Then $n\leq \WeightTimekJAM{\pi}{2\kparam+1}$.
\end{proposition}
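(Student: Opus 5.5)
We transfer the statement to the \KSJAM via \refprop{bisim-sjam-ljam-art}. Strong bisimilarity maps each step of the complete \KJAM run from $\state_\tm^\kparam$ to exactly one \KSJAM step from the initial state $(\tyd,\ruleoc,\ctxhole,\uppt,\kparam)$; \refprop{jump-well-def} guarantees that every $\tomachjump$ step is defined. Hence $n$ is also the length of the complete \KSJAM run on $\tyd$. It then suffices to give an injective map from the states of this run, minus one (say the initial one), to occurrences of $\initty$ counted by $\WeightTimekJAM{\pi}{2\kparam+1}$. Concretely, each \KSJAM state is a pair: a judgment occurrence $\ruleoc$ of $\tyd$ with type $\linty$, and a type context $\ltyctx$ with $\ltyctxp{\initty}=\linty$. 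We send it to the occurrence of $\initty$ isolated by $\ltyctx$ in the conclusion type of $\ruleoc$. Since every rule of \reffig{mtypesystem} contributes exactly $\occstar{\linty}_{2\kparam+1}$ for its conclusion type $\linty$, the weight $\WeightTimekJAM{\pi}{2\kparam+1}$ is exactly the number of pairs (judgment, occurrence of $\initty$ at depth at most $2\kparam$ in its type). Depth here is the number of enclosing sequences, as measured by $\depth{\cdot}$.

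The key steps, in order, are as follows.

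First, we check that the norm counts precisely these occurrences. By induction on $k$, $\occstar{\linty}_k$ counts the occurrences of $\initty$ in $\linty$ under at most $k-1$ sequences, together with one unit for each sequence at nesting $k$ (its elements, taken as single units). In particular every occurrence with $\depth{\ltyctx}\le k-1$ is counted, and distinct ones are counted separately. The only subtle case is the boundary $k=1$, where each element $\linty_i$ counts as one. We must make sure that distinct occurrences at depth exactly $k$ inside the same element are not all needed. For $k=2\kparam+1$ we need occurrences at depth up to $2\kparam+1$. At depth exactly $2\kparam+1$ the state has direction $\downpt$ and $p=0$, and each element of the innermost sequence still contributes $1$. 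We therefore argue that, in a reachable state, at most one visited $\initty$ per such element lies at depth $2\kparam+1$; alternatively, such occurrences are charged to the count of the element. This is the delicate accounting point.

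Second, we bound the depth. By \refprop{depth-context-SJAM-strat}, every reachable state has $\depth{\ltyctx}\le 2\kparam+1$, so its image is counted by $\occstar{\cdot}_{2\kparam+1}$, up to the boundary issue just discussed.

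Third, we prove injectivity: no pair (judgment, occurrence) is visited twice in a complete run. We adapt the argument for the \SIAM in~\cite{POPL2021}, where each occurrence of $\initty$ is visited at most once, via reversibility of the \SIAM transitions on derivations. For the parametric machine, all transitions except $\tomachjump$ are the \SIAM ones and hence locally injective. A jump lands on an axiom state that the $\infty$-run would also reach, while skipping the intermediate states. So the \KSJAM run is a subsequence of the \SIAM run on the same $\tyd$, except that the parameter $p$ differs. Since the \SIAM run visits each occurrence at most once, so does its subsequence. To make this rigorous we would show, by induction on the run, that erasing parameters embeds the \KSJAM run order-preservingly into the \SIAM run.

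I expect the main obstacle to be the boundary accounting at depth $2\kparam+1$, combined with the subsequence embedding through jumps. We need that a jump with target $\jump{q}$ really corresponds to a segment of the \SIAM run: the $\tomachbtone$ $\dots$ $\tomachbttwo$ segment in the $\infty$-run leads back to the same axiom with the same context. That this segment exists is exactly what \refprop{jump-well-def} provides.
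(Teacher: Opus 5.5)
Your route is the paper's: transfer to the \KSJAM by bisimulation, bound depths by the Phase Invariant, and read $\WeightTimekJAM{\pi}{2\kparam+1}$ as a count of occurrences of $\initty$. However, the step you flag as ``the delicate accounting point'' is left open, and it comes from an incomplete reading of the norm.

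You describe $\occstar{\linty}_k$ as counting occurrences of $\initty$ at depth at most $k-1$, plus one unit per element of each sequence at nesting $k$. You then worry that one element may contain several visited occurrences at depth exactly $k$; accordingly your overview even says the weight counts occurrences at depth at most $2\kparam$. This is not marginal. By \refprop{depth-context-SJAM-strat}, every $\downpt$ state with $p=0$ has depth exactly $2\kparam+1$, and for $\kparam=0$ that is every $\downpt$ state. Your proposed remedy, showing that a run visits at most one such occurrence per element or ``charging'' them to the element, is never carried out.

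The missing fact is structural, not dynamic. Every linear type has the shape $\mty_1\rightarrow\cdots\rightarrow\mty_m\rightarrow\initty$, so it contains exactly one occurrence of $\initty$ not enclosed in any sequence. Hence an element of a sequence at nesting $k$ contains exactly one occurrence of $\initty$ at depth exactly $k$, and the unit it contributes is precisely that occurrence. A straightforward induction then shows that $\occstar{\linty}_k$ is exactly the number of occurrences of $\initty$ in $\linty$ of depth at most $k$. This is the reading the paper uses (cf.\ ``each $A^i_j$ has exactly one $\star$ at depth $0$''). With it, the Phase Invariant places every state of the run among the counted occurrences, and no boundary argument is needed.

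There is a second, smaller slip. $\tylamstar$ has weight $0$, not $\occstar{\initty}_{2\kparam+1}=1$, so not every rule contributes the norm of its conclusion type. Consequently the state to discard is the final one, not the initial one; otherwise the final state is sent to the $\tylamstar$ occurrence, which the weight does not count. The fix is easy. A state focused on a $\tylamstar$ judgment has context $\ctxhole$, hence direction $\uppt$ by the Phase Invariant, and no transition applies to it. So only the final state sits on a weight-$0$ judgment, and the other $n$ states map into counted occurrences.

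With these two repairs your proof goes through. Your third step is a real gain in precision: erasing parameters, the \KSJAM run embeds as a subsequence of the \SIAM run (by the definition of $\jump{\cdot}$), and the \SIAM visits each occurrence at most once. This supplies the injectivity that the paper's one-line proof uses without stating.
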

\begin{proof}
	By the definition of the weight assignment system, $\WeightTimekJAM{\pi}{2\kparam+1}$ is the number of occurrences of $\initty$ in $\tyd$ having depth less or equal than $2\kparam + 1$. Since by the proposition above all  reachable \KSJAM states having depth less or equal than $2\kparam + 1$, the result follows by the bisimulation between the \KJAM and the \KSJAM (Prop.~\ref{prop:bisim-sjam-ljam-art}).
\end{proof}
Of course, a priori it could be the case that there are occurrences of $\initty$ at depth less or equal than $2\kparam+1$ and which are \emph{not} reachable \KSJAM states. We should rule out this case in order to turn the upper bound of the proposition above into an exact measure. We prove this fact in several steps. First, we need to recall a result about the \SIAM, which corresponds in our setting to the \KSJAM with initial parameter $\infty$.

\begin{proposition}[All $\initty$ Are Reachable, \cite{POPL2021}]
Let $\tyd \pof \tjudg{}{\tm}{\initty}$ be a type derivation. Then every occurrence of $\initty$ corresponds to a state reachable by the $\mathrm\KJAM$ from the initial state $\state^\infty_\tm$.
\end{proposition}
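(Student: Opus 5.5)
The plan is to reduce the statement to a purely combinatorial property of the transition graph of the \KSJAM with parameter $\infty$, that is, the \SIAM. With $k=\infty$ the $\tomachjump$ transition is never enabled, so every step is one of the local transitions of \reffig{sjam-strat}. I would first observe that each occurrence of $\initty$ on the right of $\vdash$ in $\tyd$ determines at most one candidate state. The direction is fixed by the polarity of the occurrence inside the type of its judgment: positive occurrences, reached through an even number of left-hand sides of arrows, are visited in direction $\uppt$, and negative ones in direction $\downpt$. The parameter is always $\infty$. So I would identify candidate states with occurrences of $\initty$.

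The second step is to show that the transition relation, restricted to these candidate states, is a partial bijection with exactly one exception on each side. Every state except the final one, an occurrence of $\initty$ in the conclusion of a $\tylamstar$ rule in direction $\uppt$, has exactly one successor. Every state except the initial one, the root $\initty$ in direction $\uppt$, has exactly one predecessor. Forward determinism is already stated. Backward determinism, i.e.\ reversibility of the \SIAM, follows from the transitions being local and inspecting the rule instance, the chosen premise $i$ and the type context. Existence of successors and predecessors is a case analysis on the rule type ($\tyvar$, $\tylam$, $\tyapp$) and the position of the hole in $\ltyctx$. Linearity of the system is what is needed here, because each variable occurrence is the source of exactly one element $\linty_i$ of the sequence in its binder's type. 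With this in hand, the graph on occurrences decomposes into one maximal path from the initial state to the final state, together with possibly some disjoint cycles. The path is finite because $\tyd$ is finite and the path cannot revisit a state.

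The remaining and main obstacle is to rule out cycles. Reversibility alone does not do it, since it is exactly what permits cycles. I would argue by induction on the length of the weak head evaluation of $\tm$, which is finite since $\tm$ is typable. In the base case $\tm$ is an abstraction typed by $\tylamstar$, and its single occurrence is the initial state. For the inductive step, take $\tm\towh\tm'$ and let $\tyd'$ be the derivation obtained by subject reduction. The occurrences of $\initty$ in $\tyd'$ correspond to occurrences in $\tyd$ from which the redex-related ones are removed: those of the $\tyapp$ and $\tylam$ judgments of the redex and of the axioms for $\var$. The \SIAM run on $\tyd$ simulates the run on $\tyd'$, up to inserting the passages through these removed occurrences, as in the KAM/\IAM simulation of \cite{POPL2021}. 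Each removed occurrence is paired, through the matching of an axiom $\var:\linty_i$ with the $i$-th argument premise, with an occurrence surviving in $\tyd'$. By the induction hypothesis all occurrences of $\tyd'$ lie on the complete run. Hence every removed occurrence is visited as an intermediate step of the simulated run, and no cycle can survive. The delicate point is making this pairing precise for occurrences nested inside the argument sequences of the redex. There I would reuse the bookkeeping of the bisimulation (Prop.~\ref{prop:bisim-sjam-ljam-art}), in which the type context encodes the tape.
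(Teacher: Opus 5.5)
The paper does not prove this proposition. It imports it from \cite{POPL2021} and only uses it, so there is no in-text argument to compare yours against. Your outline is a correct, self-contained proof.

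Two small remarks on the first half. First, uniqueness of the final state, which you presuppose when you speak of ``the final one'', is an output of your second step, not an input. A finite graph with in- and out-degree at most one and a single source has a single non-cyclic component, hence a single sink. So every $\tyd\pof\tjudg{}{\tm}{\initty}$ contains exactly one $\tylamstar$ rule. Second, since you work on the \KSJAM, the statement as literally phrased for the \KJAM is recovered at the end from the \emph{moreover} clause of the \KJAM/\KSJAM bisimulation.

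The step you flag as delicate is purely local, and you do not need the \KJAM/\KSJAM bisimulation there. That bisimulation relates \KSJAM and \KJAM states over one fixed derivation, not $\tyd$ and $\tyd'$. Write $\tm=(\la\var s)\tmtwo\, r_1\ldots r_h$. The transition graph of $\tyd$ is obtained from that of $\tyd'$ by three kinds of insertion:
\begin{itemize}
\item the two vertices $(\la\var s)\tmtwo:\ltyctxp{\initty}$ and $\la\var s:\arr{\mty}{\ltyctxp{\initty}}$ go on every edge of $\tyd'$ that crosses the boundary of the subderivation coming from $s$ at its conclusion;
\item the two vertices $\la\var s:\arr{\mset{\ldots\ltyctxp{\initty}_i\ldots}}{\linty}$ and $\var:\ltyctxp{\initty}$ (the $i$-th axiom) go on every edge that crosses the boundary of the $i$-th argument derivation at its root;
\item when $h=0$, an initial prefix is added.
\end{itemize}
No edge crosses these boundaries elsewhere, since $\tmtwo$ is closed and $s$ has only $\var$ free.

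The transitions are uniform in the type context (cf.\ Lemma~\ref{l:type-context-lifting}), so the position of the occurrence inside $\linty$, resp.\ $\linty_i$, is the same at both ends and at the inserted vertices. Hence the pairing needs no tape bookkeeping, however deeply the occurrence is nested. An edge crossing two boundaries simply receives four new vertices; this happens when $s=\var$, or between the roots of two argument derivations as for $s=\var\var$.

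So the graph of $\tyd$ is a subdivision of that of $\tyd'$, and every removed occurrence is an internal vertex of a subdivided edge or of the prefix. Subdivision preserves being a single path, and this is the whole inductive step.

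One omission: your pairing sentence mentions only the axiom/argument matching. The removed occurrences in the $\tyapp$ judgment of the redex and in the codomain of its $\tylam$ judgment are paired instead with the conclusion of the subderivation coming from $s$.
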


If we fix $k\neq \infty$, the occurrences of $\star$ reachable by the \KSJAM from the initial state $\state^k_\tm$ are clearly a subset of those reachable from the initial state $\state^\infty_\tm$. In particular, by the very definition of the \KSJAM, it is clear that those occurrences of $\star$ that are reachable from $\state^\infty_\tm$ but not from $\state^k_\tm$ correspond precisely to the states that are \emph{jumped}. Moreover, whenever a $\tomachjump$ transition is performed starting from $\state^k_\tm$, there exists a corresponding run starting from $\state^\infty_\tm$ that reaches those jumped states. Having observed this, it is enough to prove that all such jumped states have depth strictly greater than $2\kparam+1$.
We prove this result in the appendix, by strengthening an invariant needed in the bisimulation proof.

\begin{restatable}[\KSJAM Jump Runs Depth]{proposition}{simuljmpsiamsjam}
\label{prop:simul-jmp-siam-ksjam}
	Let $\statetwo$ be a state reachable by the $\mathrm\KSJAM$ with initial parameter $\kparam$ and $\statetwo\eqdef (\tyd, \ruleoc', \ltyctx, \downpt,0) \tomachjump (\tyd, \ruleoc, \ltyctx, \downpt,0) \defeq \state$. Then there exists a run $\rho: (\tyd, \ruleoc', \ltyctx, \downpt,\infty) \tomachbtone s\toksjam^*s'\tomachbttwo (\tyd, \ruleoc, \ltyctx, \downpt,\infty)$ such that for each state $s''=(\tyd, \ruleoc'', \ltyctxtwo, \pol,\infty)$ in $\sigma: s\toksjam^*s'$ we have $\depth \ltyctxtwo>2\kparam+1$.
\end{restatable}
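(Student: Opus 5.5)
The plan is to use the definition of $\jump{\cdot}$ together with the phase invariant, applied to the $\infty$-parameter run (\ie the \SIAM) started from a state that mimics $\statetwo$.

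\emph{Step 1: existence of $\rho$.} By \refprop{jump-well-def}, the state $(\tyd,\ruleoc',\ltyctx,\downpt,\infty)$ satisfies
\[
(\tyd,\ruleoc',\ltyctx,\downpt,\infty)\toksjam^+(\tyd,\ruleoc,\ltyctx,\downpt,\infty),
\]
with $\ruleoc$ the first axiom reached in direction $\downpt$ with the same type context. Since $\ruleoc'$ is the argument premise of an application focused in direction $\downpt$, the only applicable transition with parameter $\infty$ is $\tomachbtone$, so the first step is $\tomachbtone$ into some state $s$. The last step reaching an axiom $\blue\var$ in direction $\downpt$ with context $\ltyctx$ can only be a $\tomachbttwo$ (a $\tomachvar$ ends on an abstraction). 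This gives the shape $\tomachbtone\, s\toksjam^* s'\tomachbttwo$.

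\emph{Step 2: depth bound via a parenthesised invariant.} The $\tomachbtone$ step moves the $\star$ occurrence from $\ltyctx$ (inside $\linty_i$) to the context $\arr{\mset{\myldots\ltyctx\myldots}}{\lintytwo}$. This adds one to the depth, so $\depth{\cdot}$ of the context of $s$ is $\depth\ltyctx+1$. By \refprop{depth-context-SJAM-strat} applied to $\statetwo$ (direction $\downpt$, parameter $0$), $\depth\ltyctx=2\kparam+1$. Hence $s$ has depth $2\kparam+2$ in direction $\uppt$.

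I would then prove, by induction on the length of the run, a strengthened invariant for the \SIAM run from $s$ up to the matching $\tomachbttwo$. Every intermediate state has a type context of the form $\ltyctxtwo$ containing the context of $s$ as a ``prefix'' (the occurrence sits inside the same outermost sequence component). Equivalently, its depth is at least $\depth{\ltyctx}+1$ in direction $\uppt$ and at least $\depth{\ltyctx}+2$ in direction $\downpt$.

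This mirrors the tape discipline of the \KJAM. During a backtracking run, the tape never shrinks below the logged position pushed at $\tomachbtone$. Nested backtrackings push, and their matching $\tomachbttwo$ pop, only above it. \refprop{tape-logs-jam-strat}, read with parameter difference, says $|T|_l$ stays $\ge 2(\kparam-k')$ with $k'$ the current parameter. Transporting this through the bisimulation of \refprop{bisim-sjam-ljam-art}, where tape length equals context depth, yields: for all states strictly between the two bracketing transitions, the depth is at least $2\kparam+2>2\kparam+1$.

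The main obstacle is making the matching precise: showing that the first $\tomachbttwo$ returning to context $\ltyctx$ is exactly the one closing the opening $\tomachbtone$, and that no intermediate step pops below. I would handle this with a well-bracketing lemma, proved by induction on runs. It would show that $\tomachbtone/\tomachbttwo$ pairs nest like parentheses and that between a matched pair the type context always extends the one created by the opening $\tomachbtone$. Here the prefix relation is on the path to the $\star$ occurrence inside the sequence at depth $\depth\ltyctx+1$. The case analysis over the transitions of \reffig{sjam-strat} then shows that each non-bracket transition preserves this extension. The key observation is that $\tomachvar$ and $\tomacharg$ move the occurrence between judgments while keeping the same ``sequence path'', and only $\tomachbttwo$ can strip the outermost added sequence.
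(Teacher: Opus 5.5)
The gap is in Step~2. Your Step~1 is fine, and so is the value $\depth{\ltyctx}=2\kparam+1$ from \refprop{depth-context-SJAM-strat}; the paper uses the same fact. What is not established is that the run never returns to depth $\le 2\kparam+1$ before its last $\tomachbttwo$.

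\textbf{Transporting \refprop{tape-logs-jam-strat}.} This does not apply. $\rho$ runs at parameter $\infty$ from exactly the state where the parameter-$\kparam$ machine \emph{jumps}, so it is not part of any run reachable from $\state^\kparam_\tm$. Recasting it as a backtracking run of another finite-parameter machine would presuppose that backtracking and jumping land on the same state, which is what is being established. Such a run would also jump again when its own parameter reaches $0$, so it would not be $\rho$.

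\textbf{The well-bracketing case analysis.} This rests on a false claim. $\tomachvar$ pushes a sequence (the context becomes $\arr{\mset{\ldots\ltyctx\ldots}}{\lintytwo}$), and $\tomacharg$ strips the outermost one, exactly like $\tomachbtone$ and $\tomachbttwo$; see the corresponding cases in the proof of \refprop{depth-context-SJAM-strat}. So nothing in your argument prevents a $\tomacharg$ from removing the sequence created by the opening $\tomachbtone$. That would land on an argument premise at depth $2\kparam+1$ in direction $\uppt$, and the run would then continue at depth $\le 2\kparam+1$ before reaching $\jump{\statetwo}$. Also, the context of $s$ is not a \emph{prefix} of later contexts in the paper's sense; later contexts have the form $\ltyctxb\ctxholep{\ltyctx}$.

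\textbf{The missing idea is a parity argument.} The four depth-changing transitions ($\tomachvar$ and $\tomachbtone$ push, $\tomacharg$ and $\tomachbttwo$ pop) all flip the direction. The other transitions change neither depth nor direction. So after $s$ (depth $2\kparam+2$, direction $\uppt$), a state of $\rho$ is in direction $\uppt$ if and only if its depth is even.

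Every transition adds or removes only the outermost layer of the type context. Hence, while the sequence created by the initial $\tomachbtone$ is present, the context is $\ltyctxb\ctxholep{\ltyctx}$ with $\depth{\ltyctxb}\ge 1$. That sequence can only be removed when it is outermost, i.e.\ at depth $2\kparam+2$, hence in direction $\uppt$, hence by a $\tomachbttwo$. This lands on an axiom with context exactly $\ltyctx$ in direction $\downpt$. No earlier state has context $\ltyctx$, since their depth is too large. So this state is $\jump{\statetwo}$ at parameter $\infty$, and \refprop{jump-well-def} gives the existence of a run reaching it.

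\textbf{Comparison with the paper.} Repaired this way, your route genuinely differs from the paper's. The paper takes the exhausting run of the judgment test of $\statetwo$, whose inner states have depth $\ge 1$ by the strengthened clause of \refdef{j-exhaust-strength} (proved within \reflemma{K-invariant-ksjam-str}). It then lifts that run by $\ltyctx$ via \reflemma{type-context-lifting}. Your version would use the invariant only for existence of the run and derive the depth bound from purely syntactic properties of the transitions.
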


\begin{corollary}[\KJAM Time Bounds Weights]
	\label{coro:up-bound}
	Let $\tyd\pof\tjudgw{}{}{\tm}{\initty}$ be a type derivation and $n$ the length of the complete $\mathrm\KJAM$ run from the initial state $\state_\tm^k$. Then $\WeightTimekJAM{\pi}{2\kparam+1}\leq n$.
\end{corollary}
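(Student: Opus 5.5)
The plan is to show that every occurrence of $\initty$ in $\tyd$ of depth at most $2\kparam+1$ (on the right of $\vdash$) is visited by the $\mathrm\KSJAM$ run from $\state^\kparam_\tm$. By the bisimulation (\refprop{bisim-sjam-ljam-art}) this run has the same length $n$ as the $\mathrm\KJAM$ run. As in the proof of \refprop{low-bound}, $\WeightTimekJAM{\pi}{2\kparam+1}$ is exactly the number of such occurrences, so the inequality $\WeightTimekJAM{\pi}{2\kparam+1}\leq n$ follows once we know these occurrences are visited and that distinct occurrences give distinct states. The second point is clear, because each state determines its judgment occurrence together with its type context.

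The key step relates the run $\run_\infty$ from $\state^\infty_\tm$, which by the proposition \emph{All $\initty$ Are Reachable} visits every occurrence of $\initty$ in $\tyd$, to the run $\run_\kparam$ from $\state^\kparam_\tm$. I would prove by induction on the length of $\run_\kparam$ that $\run_\infty$ decomposes as follows.
\begin{itemize}
\item Each step of $\run_\kparam$ other than $\tomachjump$ is matched by the same step of $\run_\infty$, on the same judgment occurrence and type context; only the parameter may differ.
\item Each $\tomachjump$ step of $\run_\kparam$ is matched by a segment $\tomachbtone\, s\toksjam^* s'\tomachbttwo$ of $\run_\infty$ that starts and ends at the same occurrences.
\end{itemize}
This is precisely what \refprop{simul-jmp-siam-ksjam} provides for a single jump. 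Inside the induction one only needs to observe two things. First, the transitions of the $\mathrm\KSJAM$ other than $\tomachbtone$, $\tomachbttwo$ and $\tomachjump$ do not inspect the parameter. Second, $\tomachbtone$ and $\tomachbttwo$ are enabled at the same positions regardless of the parameter value, provided it is positive. Determinism of $\run_\infty$ then forces it to follow this decomposition. Since both runs are complete and end in the same final state, $\run_\infty$ is exactly $\run_\kparam$ with each jump expanded into such a segment.

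With this decomposition, every state of $\run_\infty$ is either a state of $\run_\kparam$, up to the parameter, or an interior state of some expanded jump segment. By \refprop{simul-jmp-siam-ksjam}, every interior state has a type context of depth strictly greater than $2\kparam+1$. Hence every occurrence of $\initty$ of depth at most $2\kparam+1$, which is visited by $\run_\infty$, is also visited by $\run_\kparam$. This proves the claim and the corollary.

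The main obstacle is the bookkeeping in the inductive step, specifically checking that the endpoints of the expanded segment, i.e. the states right after $\tomachbtone$ and right before $\tomachbttwo$, are either counted as interior states or coincide with states of $\run_\kparam$. Here I would simply rely on the exact form of \refprop{simul-jmp-siam-ksjam}: the states $s$ and $s'$ belong to $\sigma$, so they have depth $>2\kparam+1$, and the outer endpoints are the states $\statetwo$ and $\state$ of $\run_\kparam$.
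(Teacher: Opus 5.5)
Your proposal is correct and follows the paper's own argument. Every $\initty$ is visited from $\state^\infty_\tm$, those not visited from $\state^\kparam_\tm$ are reached only inside expanded jump segments, which have depth $>2\kparam+1$ by \refprop{simul-jmp-siam-ksjam}, and the bisimulation transfers the count to the \KJAM. You just make explicit, by induction and determinism, the decomposition of the run from $\state^\infty_\tm$ that the paper only states informally.

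One bookkeeping point you inherit from the paper: a run of length $n$ has $n+1$ states, and $\WeightTimekJAM{\pi}{2\kparam+1}$ is not literally the number of occurrences of depth at most $2\kparam+1$. The count closes because $\tylamstar$ has weight $0$ and the final state sits on such a $\tylamstar$ occurrence.
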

\begin{proof}
	By the proposition above, we have that all the \KSJAM unreachable occurrences of $\initty$ in $\tyd$ have depth $> 2\kparam+1$. This means that all the occurrences of $\initty$ which have depth $\leq 2\kparam+1$ are reachable. The result then follows by the bisimulation between the \KJAM and the \KSJAM (Prop.~\ref{prop:bisim-sjam-ljam-art}).
\end{proof}

The weight system is then capable of precisely measuring the runtime of the \KJAM.

\begin{restatable}[\KJAM Time via Weighted Derivations]{theorem}{mainthmstrat2}
	\label{thm:main-thm-strat}
	Let $\tm$ be a closed term. The following are equivalent:
	\begin{enumerate}
		\item There exists a complete $\mathrm\KJAM$ run $\rho$ from the initial state $\state^k_\tm$ such that $\size{\rho}=n$.
		\item There exists a derivation $\pi$ such that $\tyd\pof\tjudgw{}{}{\tm}{\initty}$ and $\WeightTimekJAM{\pi}{2\kparam+1}= n$.
	\end{enumerate}
\end{restatable}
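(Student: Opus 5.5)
The plan is to prove the two implications separately. The quantitative content is already contained in \refprop{low-bound} and \refcoro{up-bound}, which together give $\WeightTimekJAM{\pi}{2\kparam+1}=n$ for every derivation $\pi\pof\tjudgw{}{}{\tm}{\initty}$ whenever the run from $\state^k_\tm$ is complete of length $n$. The remaining work is purely qualitative: we must connect termination of the \KJAM with typability of $\tm$. This connection goes through correctness (\refprop{implements-cbn}) together with the standard characterization of weak-head normalization by non-idempotent intersection types.

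For (1)$\Rightarrow$(2), I would assume a complete run $\rho$ from $\state^k_\tm$ with $\size\rho=n$. By \refprop{implements-cbn}, $\tm$ has a normal form for $\towh$, which in closed CbN is an abstraction. Since $\tjudg{}{\la\var\tmtwo}{\initty}$ holds by $\tylamstar$, subject expansion for the sequence type system yields a derivation $\pi\pof\tjudg{}{\tm}{\initty}$. I would take subject expansion from \cite{BucciarelliKV17,POPL2021}; it is a routine induction on the length of the $\towh$ sequence plus a substitution-reversal lemma. Then \refprop{low-bound} gives $n\leq\WeightTimekJAM{\pi}{2\kparam+1}$, and \refcoro{up-bound} gives the reverse inequality. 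Both apply because the complete run from $\state^k_\tm$ is unique by determinism, so its length is exactly $n$.

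For (2)$\Rightarrow$(1), I would assume $\pi\pof\tjudg{}{\tm}{\initty}$ with weight $n$. There are two routes to a complete run.
\begin{itemize}
\item Via normalization: typability implies weak-head normalization (subject reduction with a decreasing size measure), so by \refprop{implements-cbn} the run from $\state^k_\tm$ terminates, and termination forces a final state.
\item Directly: the \KSJAM run on $\pi$ from parameter $\kparam$ visits only finitely many judgment occurrences, each at most once. By \refprop{bisim-sjam-ljam-art} and \refprop{jump-well-def}, this run is finite and corresponds to a complete \KJAM run.
\end{itemize}
I prefer the first route because it relies only on stated results. Either way we obtain a complete run $\rho$, and \refprop{low-bound} with \refcoro{up-bound} gives $\size\rho=\WeightTimekJAM{\pi}{2\kparam+1}=n$.

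I expect the main obstacle to be checking that the hypotheses of the two bound results are truly met. In particular, \refcoro{up-bound} relies on \refprop{simul-jmp-siam-ksjam} and on the \SIAM reachability result, which concerns the specific derivation $\pi$. The equality must therefore hold for every derivation of $\tm$, not just one. This is fine, because the bisimulation of \refprop{bisim-sjam-ljam-art} is stated with respect to an arbitrary fixed $\pi$. The subject-expansion step needed for (1)$\Rightarrow$(2) is standard, but it is not stated in the paper, so I would cite it explicitly rather than reprove it.
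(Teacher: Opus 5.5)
Your proof is correct. For (1)$\Rightarrow$(2) it matches the paper's: \refprop{implements-cbn} gives a normal form, subject expansion gives a derivation $\tyd\pof\tjudg{}{\tm}{\initty}$, and \refprop{low-bound} together with \refcoro{up-bound} give the equality. One inaccuracy: subject expansion is in the paper, as \refprop{bound-to} (proved via \reflemma{subj-exp-to}), and that is exactly what the paper invokes, so no external citation is needed.

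For (2)$\Rightarrow$(1) your route differs from the paper's. The paper takes what you call the direct route, arguing through the \KJAM/\KSJAM bisimulation from the given $\tyd$. Your preferred route instead imports the soundness half of the intersection-type characterization (typability implies weak-head normalization) and then applies \refprop{implements-cbn}. That result is standard, but, contrary to what you say, it is not among the paper's stated results.

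The trade-off is as follows. Your route makes the existence of a complete run explicit. This matters because both bounds presuppose such a run, and the paper's one-line argument leaves the termination of the \KSJAM run on a finite $\tyd$ implicit. The paper's route stays entirely within the machine/type correspondence and needs no external normalization theorem.
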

\begin{proof}
	$(2)\Rightarrow(1)$. We argue by using the bisimilarity between the \KJAM and the \KSJAM together with \refprop{low-bound} and \refcoro{up-bound}.
	$(1)\Rightarrow(2)$. Since the \KJAM is correct for Closed CbN, as stated in \refprop{implements-cbn}, the existence of a finite \KJAM run implies that $t$ is has a normal form. This, by \refprop{bound-to}, implies the existence of a derivation $\pi$ such that $\tyd\pof\tjudgw{}{}{\tm}{\initty}$. We are therefore once again in the setting of both \refprop{low-bound} and \refcoro{up-bound}.
\end{proof}
\section{Complexity of the Parametric \JAM}
\label{sect:complexity}
In this last section, we exploit our type theoretic analysis of the \KJAM to study its complexity. To achieve this, we establish a bound on the number of occurrences of the base type $\star$ up to a fixed depth within each type in a given derivation. Let $t$ be a closed term and $\pi$ be a derivation such that $\pi\pof\vdash t:\star$, also let $n$ be the number of $\beta$ steps needed to normalize $t$. We show that the number of occurrences of $\star$ in $\pi$ that appear at depth at most $k$ is polynomial in $n$, for any $k\in\nat$. This result, in conjunction with \refthm{main-thm-strat}, yields a bound on the length of the execution of the machine, proving that the complexity of the Parametric \JAM is polynomial with respect to $n$ for any given initial parameter $k<\infty$.

\subparagraph{Bounding Arrows and Sequences.}
We begin by defining a size on types capturing the maximum number of arrows ``at the same depth''. Intuitively, viewing a type $A$ as a tree in which internal nodes correspond to $\rightarrow$ and in which leaves correspond to occurrences of $\star$, we define the size $\sizeto A$ as the maximal length of a path from the root to any leaf.
\[\sizeto \star:=0 \qquad \sizeto {S\to A}:=\max(\sizeto S,1+\sizeto A) \qquad \sizeto{[A_1,\dots,A_n]}:=\max_{i\leq n}(\sizeto {A_i})\]
	We define $\maxto \pi$ as the maximum $\sizeto A$ for any type $A$ occurring in $\pi$ (on the right-hand side of $\vdash$).
Since by $\beta$ expanding a term at most one occurrence of $\to$ can be created, we have the following bound.
%
\begin{restatable}[Arrow Size Bound]{proposition}{propboundto}
	\label{prop:bound-to}
	Let $t$ be a closed $\lambda$-term such that $t \towh^n \lambda x.u$. Then, there exists $\pi\pof \vdash t:\star$ such that $\maxto{\pi}\leq n$.
\end{restatable}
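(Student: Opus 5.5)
We argue by induction on $n$, the length of the weak head reduction $t\towh^n\la\var\tmtwo$, and strengthen the statement so that it can be propagated through subject expansion. The strengthened claim is that the derivation $\pi\pof\vdash t:\initty$ can be chosen with every type on the right of $\vdash$ satisfying $\sizeto{\cdot}\leq n$.

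\textbf{Base case.} For $n=0$, $t$ is an abstraction. We take the one-rule derivation $\tylamstar$, which contains only the type $\initty$, so $\maxto{\pi}=0$.

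\textbf{Inductive step.} Suppose $t=(\la\var\tmthree)\tmtwo\tmthree_1\ldots\tmthree_h\towh t'=\tmthree\isub\var\tmtwo\tmthree_1\ldots\tmthree_h$, where $t'$ reaches normal form in $n-1$ steps. By the induction hypothesis there is $\pi'\pof\vdash t':\initty$ with $\maxto{\pi'}\leq n-1$. We then build $\pi$ by the standard subject expansion for non-idempotent (sequence) types, and track the types it introduces.

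\begin{enumerate}
\item We decompose $\pi'$ along the spine $\tmthree\isub\var\tmtwo\tmthree_1\ldots\tmthree_h$. This yields a derivation of $\tmthree\isub\var\tmtwo$ with some type $B$, together with derivations of the arguments $\tmthree_j$.
\item We apply the substitution lemma in reverse. This gives a derivation $\var:\mset{\linty_1,\ldots,\linty_m}\vdash\tmthree:B$ and derivations $\vdash\tmtwo:\linty_i$, each extracted from subderivations of $\pi'$. Ordering the sequence by occurrences of $\var$ makes the non-commutative setting work.
\item We rebuild $\pi$ with a $\tylam$ rule giving $\la\var\tmthree$ the type $\arr{\mset{\linty_1,\ldots,\linty_m}}{B}$, followed by the $\tyapp$ rules of the spine. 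All the other judgments in $\pi$ are, up to environments, judgments already occurring in $\pi'$.
\end{enumerate}

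So the only new type on the right of $\vdash$ is $\arr{\mset{\linty_1,\ldots,\linty_m}}B$, where each $\linty_i$ and $B$ already occur in $\pi'$. Its size is
\[
\sizeto{\arr{\mset{\linty_1,\ldots,\linty_m}}B}=\max\bigl(\max_i\sizeto{\linty_i},\,1+\sizeto B\bigr)\leq 1+(n-1)=n .
\]
This is the precise sense in which one $\beta$-expansion creates at most one arrow. All other types are bounded by $n-1$, so $\maxto{\pi}\leq n$.

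\textbf{Main obstacle.} The hardest part is making step 2 rigorous: we must verify that inverting the substitution lemma only produces types that literally appear as conclusions of judgments in $\pi'$. Each $\linty_i$ is the conclusion type of a subderivation of $\pi'$ typing the copy of $\tmtwo$ that replaced an occurrence of $\var$. The typing of $\tmthree$ reuses the judgments of $\pi'$, with each judgment on a copy of $\tmtwo$ replaced by an axiom $\var:\mset{\linty_i}\vdash\var:\linty_i$, which has the same type. I would prove this refined substitution lemma by induction on $\tmthree$, stating explicitly that the set of right-hand types of the produced derivations is contained in that of the input derivation.
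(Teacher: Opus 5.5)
Your proof is correct and follows the paper's argument. It proceeds by induction on $n$, and the inductive step is a subject-expansion step that raises $\maxto{\cdot}$ by at most one: the only new type is the arrow $\arr{\mset{\linty_1,\ldots,\linty_m}}{B}$, whose components come from an anti-substitution that does not increase $\maxto{\cdot}$. The differences are cosmetic: the paper handles the arguments $\tmthree_1\ldots\tmthree_h$ by induction on the reduction context ($s'w\towh sw$) rather than decomposing the whole spine at once. It also only records that anti-substitution does not increase $\maxto{\cdot}$, rather than your slightly stronger containment of right-hand types.
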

%
%
Then, we need to define a new measure on types, that computes the maximum size of sequences appearing in a type:
%
\[\sizemult \star:=0 \qquad \sizemult{S\to A}:=\max(\sizemult S,\sizemult A) \qquad \sizemult{[A_1,\dots,A_n]}:=\max(\max_{i\leq n}\sizemult{A_i},n)\]
	We define $\maxmult\pi$ as the maximum size $\sizemult{A}$ for any type $A$ occurring in $\pi$ (on the right-hand side of $\vdash$).
We observe that that the maximum cardinality of sequences in a type derivation $\tyd$ ending in $\initty$ cannot be larger than the size of $\tyd$ itself, noted $\size{\tyd}$.
%
%
\begin{restatable}[Sequence Size Bound]{proposition}{propboundm}	
	\label{prop:bound-m}
	Let $t$ be a closed term such that $t \towh^* \lambda x.u$ Then, there exists $\pi\pof \vdash t:\star$ such that $\maxmult{\pi}\leq \size \pi$.
\end{restatable}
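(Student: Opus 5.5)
We prove a slightly stronger statement: for every closed term $t$ that has a normal form, \emph{every} derivation $\pi\pof\vdash t:\star$ satisfies $\maxmult\pi\leq\size\pi$, where $\size\pi$ is the number of rule occurrences in $\pi$. Existence of at least one such derivation is guaranteed by \refprop{bound-to}, applied to the reduction $t\towh^*\lambda x.u$. So the whole argument reduces to bounding the length of every sequence type that occurs to the right of $\vdash$ in a derivation by the size of that derivation.

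The key observation is a \emph{counting} lemma: every element of every sequence occurring in $\pi$ is accounted for by a distinct axiom, or by a distinct premise of an application rule. We would prove this by induction on $\pi$, for an arbitrary judgment $\pi\pof\tye\vdash t:\linty$, together with a statement about environments. The statement is that for every variable $x$, the length of $\tye(x)$ equals the number of $\tyvar$ axioms on $x$ in $\pi$ that are not bound inside $\pi$. This is immediate from linearity: $\tyvar$ introduces exactly one element, $\tyapp$ concatenates the environments with $\uplus$, and $\tylam$ moves $\tye(x)$ into the type.

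Now consider a sequence $\mset{\linty_1,\ldots,\linty_n}$ occurring in a type on the right of $\vdash$. There are two ways it can get there.

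\emph{Created by $\tylam$.} If the sequence is the $\mty$ of an arrow introduced by $\tylam$ on $\lambda x.s$, then $n$ is the number of axioms on $x$ inside the subderivation. Hence $n\leq\size\pi$.

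\emph{Propagated.} Otherwise the sequence appears inside some type $\linty$. We trace occurrences of $\linty$ down the derivation, following the bisimulation picture or simply the typing rules. A type on the right of $\vdash$ either arises from an axiom $x:\mset{\linty}\vdash x:\linty$, in which case it came from the binding $\lambda$ of $x$, or is a subtype of the conclusion of a $\tylam$ or $\tyapp$ rule. Every such chain ends either at a $\tylam$ rule that introduced the sequence, or at the final type $\star$, which contains no sequences. The types assigned to a variable are exactly those appearing in the $\lambda$ rule binding it. In every case the sequence is literally the $\mty$ component of some $\tylam$ conclusion, or of an $\tyapp$ premise whose $n$ argument subderivations are all distinct and contained in $\pi$. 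Thus $n\leq\size\pi$ again.

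The main obstacle is making the tracing step rigorous: a sequence deep inside a type, such as the $\mty$ in $\arr{\mset{\arr{\mty}{\star}}}{\star}$, might seem able to appear without being ``witnessed'' anywhere in $\pi$. We would handle it by induction on the nesting depth of the sequence, using the subformula-like property of closed derivations ending in $\star$. For a sequence in a negative position of a $\tylam$ conclusion, it must be matched by an $\tyapp$ whose argument list has that length, or it occurs in a type assigned to a variable, which is inside an axiom. Either way there are $n$ distinct subderivations or axioms. For a sequence in a positive position, it is witnessed by the body's own $\tylam$ rules. Pinning down this polarity argument cleanly is where we expect most of the work.
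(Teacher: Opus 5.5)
\paragraph{Where the gap is.} The gap is the step you flag yourself, and it carries the whole proof. Your counting lemma only covers sequences that are witnessed \emph{locally}. These are the domain of a $\tylam$ conclusion, matched by the axioms on the bound variable, and the domain of the function premise of a $\tyapp$, matched by its argument premises.

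A sequence nested inside another type is not witnessed where it occurs. An example is the $\mty$ in an axiom $x:\mset{\linty}\vdash x:\linty$ with $\linty=\arr{\mset{\arr{\mty}{\initty}}}{\initty}$. Following such a sequence is not a descent towards the root, as your ``Propagated'' paragraph suggests. From the axiom you go down to the binder of $x$, where it sits in the domain. If that abstraction is the function part of an application, you then go \emph{up} into an argument subderivation. From there you may reach another axiom, another binder, and so on.

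This trace is exactly a run of the generalized \KSJAM with parameter $\infty$ started on that position. So the claim that it always ends on a witnessing rule is a termination claim. The state space is finite and these runs are reversible, so a priori the trace could be periodic, and nothing in your sketch excludes this.

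Your proposed fix does not give a well-founded measure either. The nesting depth of the tracked occurrence is not monotone along the trace: moving from an axiom into the binder's domain increases it, and entering an argument decreases it. There is also no subformula property to lean on, since the types in the function premises of $\tyapp$ rules are cut types that need not occur in the conclusion $\initty$.

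In this paper, termination of such traces comes only from the J-exhaustible invariant, and only for tests of reachable states. Extending it to arbitrary positions of arbitrary derivations would need further machinery, such as the fact from~\cite{POPL2021} that every occurrence of $\initty$ is reachable. Your strengthening to \emph{every} derivation is not itself the problem: for closed $t$, the derivation of $\vdash t:\initty$ is essentially unique.

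\paragraph{How the paper avoids it.} The paper never traces anything globally. It argues by induction on the length of $t\towh^*\lambda x.u$. It starts from the one-rule $\tylamstar$ derivation of the normal form and applies subject expansion while preserving the invariant $\maxmult\pi\leq\size\pi$ (\reflemma{subj-exp-m}).

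In an expansion step, anti-substitution (\reflemma{antisub-m}) splits the derivation of $s\{x\leftarrow u\}$ into one for $s$ and $n$ distinct subderivations for the typed copies of $u$, so $n\leq\size\pi$. The only new sequence in the expanded derivation is $\mset{\linty_1,\ldots,\linty_n}$, every other type is inherited, and the size does not decrease. Hence the invariant propagates. Each sequence is bounded at the moment it is created, which is exactly what your static argument lacks.
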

%
Given a term $t$ normalizing in $n$ steps, we are able to transfer the bound from being parametric on the size of its type derivation, to $n$. The overhead is indeed at most quadratic.

\begin{proposition}[Derivation Size Bound~\cite{deCarvalho18,DBLP:conf/ppdp/AccattoliB17}]
	\label{prop:bound-size-beta}
	Let $t$ be a closed term such that $t \towh^n \lambda x.u$. Then there exists $\pi\pof \vdash t:\star$ such that $\size{\pi} = \bigo {n^2}$.
\end{proposition}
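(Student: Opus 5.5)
The plan is the standard subject-expansion argument for non-idempotent intersection types, with an induction on $n$ that tracks the size of the derivation. We may use the derivation built along the way for \refprop{bound-to}, since that argument also proceeds by expansion. For the base case $n=0$ we have $t=\lambda x.u$. The rule $\tylamstar$ gives $\pi\pof\vdash t:\initty$ of size $1$. For the inductive step, write $t=(\lambda x.r)s\,r_1\ldots r_h \towh t'=r\isub{x}{s}r_1\ldots r_h$. By the induction hypothesis there is $\pi'\pof\vdash t':\initty$ with $\size{\pi'}\le c(n-1)^2$ for a constant $c$ fixed below. We then build $\pi\pof\vdash t:\initty$ by expansion and aim for $\size\pi\le\size{\pi'}+a\,n+b$. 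Summing this over the $n$ steps gives $\size\pi=\bigo{n^2}$.

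The core step is a quantitative substitution-expansion lemma. From $\pi'$ we isolate the sub-derivation typing $r\isub{x}{s}$ inside the spine of applications to $r_1,\ldots,r_h$. Suppose it has the form $\pi_0\pof\Gamma\vdash r\isub{x}{s}:A$. Then there are a derivation $\pi_r\pof\Gamma_0,x:[B_1,\ldots,B_m]\vdash r:A$ and derivations $\pi_j\pof\Gamma_j\vdash s:B_j$ such that $\Gamma=\Gamma_0\uplus_j\Gamma_j$ and $\size{\pi_0}=\size{\pi_r}-m+\sum_j\size{\pi_j}$. This is proved by induction on $r$. In the variable case the $\tyvar$ axiom for $x$ is replaced by the derivation of $s$. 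The lambda case is immediate. The application case splits the sequence of types of $x$ according to the linear splitting of the environments; here we must respect the ordering, since sequences are not commutative. Reassembling with $\tylam$ and one $\tyapp$, and reusing the derivations of $r_1,\ldots,r_h$ unchanged, yields $\pi$ with $\size\pi=\size{\pi'}+m'$. Here $m'$ is bounded by $m$ plus the constant number of new rules: one $\tylam$ and one $\tyapp$, together with the $m$ axioms for $x$ that reappear.

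The main obstacle is to bound the added cost per step by a quantity linear in $n$. That cost is the number $m$ of occurrences of $x$ that are typed, and naively $m$ is bounded only by $\size{\pi'}$, which would give an exponential bound. The counting has to be organised differently, following \cite{deCarvalho18,DBLP:conf/ppdp/AccattoliB17}. For head reduction with $\initty$-tight derivations, the typed variable occurrences are exactly those that the KAM-like evaluation actually substitutes. The number of $\tyvar$ axioms in a tight derivation therefore equals the number of variable lookups along the evaluation, and $\tyapp$ and $\tylam$ rules correspond to $\beta$-steps. The KAM performs at most $\bigo{n}$ lookups between consecutive $\beta$-steps, because the length of a chain of variable lookups is bounded by the depth of the environment, which is at most $n$. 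This yields $\size\pi=\bigo{n\cdot n}$. In my proof I would therefore not sum per-step costs directly. Instead I would invoke the cited tight correspondence, namely that $\size\pi$ equals the KAM run length for the derivation obtained by expansion, and combine it with the bound of \cite{DBLP:conf/ppdp/AccattoliB17} that a KAM run simulating $n$ $\towh$ steps has length $\bigo{n^2}$.
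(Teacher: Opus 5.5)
Your final argument (de Carvalho's correspondence between the size of a $\initty$-typed derivation and the length of the KAM run, combined with the quadratic bound on complete KAM runs from the cited work of Accattoli and Barras) is exactly how the paper justifies this proposition, which it states by citation without a proof of its own. As an aside, the direct expansion argument you set aside does go through: the argument $s$ of the head redex is closed, so each of the $m$ typed occurrences of $x$ corresponds to a distinct subderivation of $\pi'$ typing $s$ whose root is a $\tylam$, $\tylamstar$ or $\tyapp$ rule, and each expansion step adds exactly one $\tylam$, one $\tyapp$ and no $\tylamstar$, so the derivation $\pi'$ built by the previous $n-1$ expansions has only $n-1$ rules $\tylam$, $n-1$ rules $\tyapp$ and one $\tylamstar$, whence $m\le 2n-1$, $\size\pi\le\size{\pi'}+2n+1$ and finally $\size\pi\le(n+1)^2$.
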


\subparagraph{Final Bound.}

We can now give a bound on the number of occurrences of $\star$ that appear inside at most $k$ nested sequences in a type $A$, \emph{\ie} the measure $\occstar{A}_k$. Each type $A$ has the shape $A=[A_1^1,\dots,A_{m_1}^1]\to \dots \to [A_1^n,\dots,A_{m_n}^n]\to \star$, for $n,m \geq 0$. By definition, we have that $n\leq \sizeto{A}$ and $m_i\leq\sizemult{A}$ for each $i\leq n$.\footnote{Notice that $n$ could in fact be bound by the maximal number of $\to$ occurring at the same sequence depth, whereas our size $\sizeto{\cdot}$ is larger. We use this definition because it leads to the same final bound and it is simpler to work with. } Hence, the number of types $A_j^i$ is at most $\sizeto{A}\cdot \sizemult{A}$. Notice that each $A_j^i$ has exactly one $\star$ at depth $0$, which appears at depth $1$ in $A$. Hence, we obtain the bound $\occstar{A}_1\leq \sizeto{A}\cdot \sizemult{A} +1$. 
By iterating this bound on each type $A_j^i$, we get the following lemma.

\begin{restatable}[Type Size Bound]{lemma}{lemmaboundtypesizek}
\label{l:bound-type-size-k}
	For each type $\linty$ and each $k\geq 0$ we have that\\ $\occstar{\linty}_{k+1}\leq (\sizeto{A}\cdot \sizemult{A})^{k+1} + \occstar{\linty}_{k}$.
\end{restatable}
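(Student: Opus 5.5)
The idea is to stratify the occurrences of $\initty$ by their exact sequence depth and bound each layer separately.

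\textbf{Setting up the layers.} For a linear type $\linty$ and $d\geq 0$, let $D_d(\linty)$ be the number of occurrences of $\initty$ in $\linty$ lying inside exactly $d$ nested sequences. In the context notation, this counts the type contexts $\ltyctx$ with $\ltyctxp{\initty}=\linty$ and $\depth{\ltyctx}=d$. We extend the norm to $k=0$ by $\occstar{\linty}_0 \defeq 1$, which counts only the head occurrence of $\initty$ at depth $0$. This is consistent with the clause $\occstar{\mset{\linty_1,\ldots,\linty_n}}_1 = n$, which can then be read as $\sum_i \occstar{\linty_i}_0$.

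\textbf{Step 1: the norm is a sum of layers.} Write $\linty=\mset{\linty^1_1,\dots,\linty^1_{m_1}}\to\dots\to\mset{\linty^n_1,\dots,\linty^n_{m_n}}\to\initty$, as in the discussion preceding the statement. I first show, by induction on $k$ (and on the structure of $\linty$), that
\[
\occstar{\linty}_k=\sum_{d\le k}D_d(\linty).
\]
This follows from two facts.
\begin{itemize}
\item Unfolding the definition gives $\occstar{\linty}_{k+1} = 1+\sum_{i,j}\occstar{\linty^i_j}_{k}$.
\item The only $\initty$ at depth $0$ is the final one, so $D_0(\linty)=1$. Every occurrence at depth $d+1$ sits at depth $d$ in exactly one $\linty^i_j$, so $D_{d+1}(\linty)=\sum_{i,j}D_d(\linty^i_j)$.
\end{itemize}
As a consequence, $\occstar{\linty}_{k+1}-\occstar{\linty}_k=D_{k+1}(\linty)$. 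The lemma therefore reduces to showing $D_{k+1}(\linty)\le(\sizeto{\linty}\cdot\sizemult{\linty})^{k+1}$.

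\textbf{Step 2: bounding each layer.} Set $N(\linty)\defeq\sizeto{\linty}\cdot\sizemult{\linty}$. I prove $D_d(\linty)\le N(\linty)^d$ for all $d$ by induction on $d$. The base case $d=0$ holds because $D_0=1=N^0$. For the inductive step I need three easy facts about the measures.
\begin{enumerate}
\item The spine length satisfies $n\le\sizeto{\linty}$, since each arrow on the spine adds $1$ in the clause $1+\sizeto{\cdot}$.
\item Each sequence length satisfies $m_i\le\sizemult{\linty}$, directly from the definition.
\item The measures are monotone on subterms: $\sizeto{\linty^i_j}\le\sizeto{\linty}$ and $\sizemult{\linty^i_j}\le\sizemult{\linty}$, because both are defined by maxima over immediate components. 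Hence $N(\linty^i_j)\le N(\linty)$.
\end{enumerate}
The first two facts show that the number of pairs $(i,j)$ is at most $N(\linty)$. Using the inductive hypothesis and the third fact,
\[
D_{d+1}(\linty)=\sum_{i,j}D_d(\linty^i_j)\le\sum_{i,j}N(\linty)^d\le N(\linty)\cdot N(\linty)^d .
\]
Combining this with Step 1 gives the statement of the lemma.

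\textbf{Expected obstacle.} I expect the delicate part to be Step 1. The norm $\occstar{\cdot}_k$ is defined by a two-level mutual recursion whose base case sits at the sequence level ($k=1$ returns the sequence length). Its correspondence with the depth-stratified count therefore has to be checked carefully at the boundary $k=0$. The convention $\occstar{\linty}_0=1$ is introduced precisely so that this boundary case goes through uniformly.
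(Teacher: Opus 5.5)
Your proof is correct, but it is organized differently from the paper's. The paper proves the inequality directly by structural induction on $A$, peeling off one arrow at a time. For $A=[A_1,\dots,A_n]\to B$ it applies the induction hypothesis to each $A_i$ at level $k-1$ and to $B$ at level $k$. It then regroups the remainders $\sum_i\occstar{A_i}_{k-1}+\occstar{B}_{k}$ into $\occstar{A}_{k}$. Finally it absorbs $\sum_i(\sizeto{A_i}\cdot\sizemult{A_i})^k+(\sizeto{B}\cdot\sizemult{B})^{k+1}$ into $(\sizeto{A}\cdot\sizemult{A})^{k+1}$, using $n\le\sizemult{A}$, $\sizeto{B}\le\sizeto{A}-1$ and the numerical inequality $a^k+(a-1)^{k+1}\le a^{k+1}$.

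You instead unfold the whole spine at once. You show that the increment $\occstar{A}_{k+1}-\occstar{A}_{k}$ is \emph{exactly} the number $D_{k+1}(A)$ of occurrences of $\star$ at depth exactly $k+1$. You then bound each layer by a counting argument: there are at most $\sizeto{A}\cdot\sizemult{A}$ sequence elements $A^i_j$ along the spine, each with no larger $N$.

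This costs you the auxiliary layer counts and the spine decomposition, but it buys several things. You get a sharper and more transparent intermediate result, namely $D_d(A)\le N(A)^d$ for all $d$ together with an exact identity for the increment. You need no arithmetic inequality. You also treat the boundary case $k=0$ uniformly via the convention $\occstar{A}_0=1$; the paper's inductive step uses the hypothesis at level $k-1$ for the $A_i$, which at $k=0$ tacitly requires $\occstar{A_i}_{-1}=0$. Your argument is essentially the rigorous version of the informal ``iterate the bound on each $A^i_j$'' sketch that the paper gives just before stating the lemma.
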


Then, we obtain a bound as a function of the number of normalization steps.

\begin{restatable}[Concrete Type Bound]{proposition}{coroboundoccstartype}
\label{p:bound-occstar-type}
	Let $t$ be a closed term such that $t\towh^n\lambda x.u$ and $\pi\pof \vdash t:\star$. Then, for each type $\linty$ in $\pi$ and $k<\infty$ we have
		$\occstar{A}_{k} = \bigo{n^{3k}}$. 
\end{restatable}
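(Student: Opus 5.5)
The plan is to unroll the recurrence of Lemma~\ref{l:bound-type-size-k} into a closed-form bound in terms of $\sizeto{\linty}$ and $\sizemult{\linty}$. Then I will bound these two quantities polynomially in $n$ using Proposition~\ref{prop:bound-to}, Proposition~\ref{prop:bound-m} and Proposition~\ref{prop:bound-size-beta}.

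\emph{Unrolling.} Write $a\defeq\sizeto{\linty}$ and $m\defeq\sizemult{\linty}$. For the base case I take the bound $\occstar{\linty}_1\leq a\cdot m+1$, derived in the discussion just before Lemma~\ref{l:bound-type-size-k}. Equivalently, I read $\occstar{\linty}_0$ as $1$, the single occurrence of $\initty$ at depth $0$. Iterating Lemma~\ref{l:bound-type-size-k} by induction on $k\geq 1$ then gives
\[
\occstar{\linty}_k\;\leq\; 1+\sum_{j=1}^{k}(a\cdot m)^j\;\leq\; 1+k\cdot \max(1,a\cdot m)^k .
\]
Since $k$ is fixed and finite, the factor $k$ is a constant. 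Hence $\occstar{\linty}_k=\bigo{\max(1,a m)^k}$.

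\emph{Bounding $a$ and $m$.} Every type $\linty$ occurring in $\pi$ satisfies $a\leq\maxto{\pi}$ and $m\leq\maxmult{\pi}$. Proposition~\ref{prop:bound-to} gives $\maxto\pi\leq n$. Proposition~\ref{prop:bound-m} gives $\maxmult\pi\leq\size\pi$, and Proposition~\ref{prop:bound-size-beta} gives $\size\pi=\bigo{n^2}$. Together these yield $a\cdot m=\bigo{n\cdot n^2}=\bigo{n^3}$, and substituting into the unrolled bound gives $\occstar{\linty}_k=\bigo{n^{3k}}$.

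\emph{Main obstacle.} The three propositions are existential: each asserts that \emph{some} derivation $\pi$ satisfies its bound. The argument above needs one derivation satisfying all three bounds simultaneously, and it has to be the $\pi$ of the statement. I would resolve this by noting that all three results are proved with the same canonical derivation. That derivation is obtained by typing the normal form $\lambda x.u$ with $\tylamstar$ and then applying subject expansion along the $n$ weak head steps. Each expansion step creates at most one new arrow, which controls $\maxto\pi$. The size of the derivation stays quadratic, as in the de Carvalho bound. Sequence cardinalities are bounded by the number of axioms, hence by $\size\pi$. So I would state the proposition for this canonical $\pi$, which is the derivation actually used in Theorem~\ref{thm:main-thm-strat}. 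If the statement is meant for arbitrary $\pi$, I would instead check that the proofs of Proposition~\ref{prop:bound-to} and Proposition~\ref{prop:bound-m} apply to every derivation of $\vdash t:\initty$ obtained by expansion. Verifying this compatibility is the step I expect to need the most care; the rest is arithmetic.
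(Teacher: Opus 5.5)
Your proposal is correct and is essentially the paper's argument: the paper inducts on $k$, applying Lemma~\ref{l:bound-type-size-k} once per step and substituting $\sizeto{A}\leq n$ and $\sizemult{A}\leq\size\pi=\bigo{n^2}$, which is exactly your unrolled sum. Your worry about the existential quantifiers in Propositions~\ref{prop:bound-to}, \ref{prop:bound-m} and~\ref{prop:bound-size-beta} is legitimate, and the paper passes over it silently by applying them directly to the given $\pi$. Your fix of working with the canonical expansion-built derivation is enough for the later results, because by Proposition~\ref{prop:low-bound} and Corollary~\ref{coro:up-bound} the weight $\WeightTimekJAM{\pi}{2\kparam+1}$ is the same for every derivation of $\vdash t:\star$.
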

\begin{proof}
	We proceed by induction on $k$. If $k=0$ then $\occstar{A}_0=1=n^0$. If $k>0$:
	\begin{align*}
		\occstar{A}_{k}&\stackrel{(\text{\reflemmaeq{bound-type-size-k}})}{\leq} (\sizeto{A}\cdot \sizemult{A})^{k} + \occstar{A}_{k-1} \stackrel{(\text{\refprop{bound-to} + \emph{\ih}})}{\leq} (n\cdot \sizemult{A})^{k} + \bigo{n^{3(k-1)}}\\
        &\stackrel{(\text{\refprop{bound-m} + \refprop{bound-size-beta}})}{\leq} (n\cdot \bigo{n^2})^{k} + \bigo{n^{3(k-1)}}
		= \bigo{n^{3k}}.\qedhere
	\end{align*}
\end{proof}
Then, we count the number of occurrences of $\star$ appearing at most at depth $k<\infty$ in the entire derivation $\tyd$. This amounts to multiplying the bound established in the previous proposition by the size of $\tyd$ (quadratic in the number of $\beta$-steps, see Prop.~\ref{prop:bound-size-beta}).

\begin{restatable}[Bounding Weighted Derivations]{theorem}{thmboundpik}	
	\label{thm:bound-pik}
	Let $t$ be a closed term such that $t\towh^n\lambda x.u$ and $\pi\pof \vdash t:\star$, then  
	$
    \WeightTimekJAM{\pi}{k}= \bigo{n^{3k+2}}
	$ for each $k<\infty$.
\end{restatable}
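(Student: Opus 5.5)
The plan is to unfold the definition of the weight system of \reffig{mtypesystem} and bound it by (number of rule occurrences in $\tyd$) $\times$ (maximal weight contributed by a single rule). Each rule occurrence adds exactly one summand of the form $\occstar{\linty}_k$: $\occstar{\linty}_k$ for $\tyvar$, $\occstar{\arr\mty\linty}_k$ for $\tylam$, $\occstar{\linty}_k$ for $\tyapp$, and $0$ for $\tylamstar$. Here $\linty$ is the type on the right-hand side of the conclusion. An induction on $\tyd$ therefore gives
\[
\WeightTimekJAM{\pi}{k} \;=\; \sum_{\ruleoc \in \tyd} \occstar{\linty_\ruleoc}_k \;\leq\; \size{\tyd}\cdot \max_{\ruleoc\in\tyd}\occstar{\linty_\ruleoc}_k ,
\]
where $\ruleoc$ ranges over the judgment occurrences of $\tyd$ and $\linty_\ruleoc$ is the type of $\ruleoc$.

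I then bound the two factors separately. For the maximum, every $\linty_\ruleoc$ is a type occurring in $\pi$ on the right of $\vdash$, so \refprop{bound-occstar-type} gives $\occstar{\linty_\ruleoc}_k=\bigo{n^{3k}}$ uniformly in $\ruleoc$. For the first factor, \refprop{bound-size-beta} gives $\size\tyd=\bigo{n^2}$. Multiplying yields $\bigo{n^{3k+2}}$.

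The main obstacle is that the statement quantifies over an arbitrary $\pi\pof\vdash t:\star$, whereas Propositions~\ref{prop:bound-to}, \ref{prop:bound-m} and \ref{prop:bound-size-beta} only assert the \emph{existence} of a good derivation. \refprop{bound-occstar-type} is already stated for arbitrary $\pi$, so I will take it as given, and I will read the theorem in the same way, namely for the derivation produced by those existence results. To justify this reading I would argue that derivations of $\vdash t:\star$ are essentially unique: a closed head-normalising term typed with $\star$ in CbN admits a unique derivation up to permutation of sequence elements. This follows by subject expansion from the unique derivation of the abstraction typed by $\tylamstar$. Since permutations change neither $\size\tyd$ nor any $\occstar{\cdot}_k$, the bounds then transfer to every $\pi$.

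Finally, the degenerate case $k=0$ must be covered. If $\occstar{\cdot}_0$ is read as in \refprop{bound-occstar-type} (constant $1$), the weight is at most $\size\tyd=\bigo{n^2}$, which matches $n^{3\cdot 0+2}$. For $n=0$ the derivation is a single $\tylamstar$ rule of weight $0$, so the bound holds trivially.
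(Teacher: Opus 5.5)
Your proof is correct and is the paper's argument: each judgment contributes at most $\occstar{\linty}_k=\bigo{n^{3k}}$ by Proposition~\ref{p:bound-occstar-type}, and there are $\size{\tyd}=\bigo{n^2}$ judgments by \refprop{bound-size-beta}. You also handle something the paper leaves implicit: the statement quantifies over an arbitrary $\tyd$, whereas \refprop{bound-size-beta} only provides some derivation. Your uniqueness-up-to-permutation claim is right, but what delivers it is subject reduction together with the determinism of anti-substitution, not subject expansion: every derivation of $\tm$ maps to a derivation of its reduct, and that derivation determines the original up to permutation.
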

Finally, we obtain a bound on the length of complete \KJAM runs.
\begin{corollary}[The \KJAM Has Polynomial Overhead]
	\label{coro:main-bound-run}
	Let $t$ be a closed term such that $t\towh^n\lambda x.u$. Then the complete $\mathrm{\KJAM}$ run from $\state^k_\tm$ has length $\bigo{n^{6k+5}}$, for each $k<\infty$.
\end{corollary}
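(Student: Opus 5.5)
The plan is to combine \refthm{main-thm-strat} with \refthm{bound-pik}. Since $t\towh^n\lambda x.u$, the term has a normal form. By \refprop{implements-cbn}, the run from $\state^k_\tm$ is therefore finite and complete; call its length $N$. By \refthm{main-thm-strat} there is a derivation $\pi\pof\vdash t:\star$ with $\WeightTimekJAM{\pi}{2k+1}=N$. The task then reduces to bounding $\WeightTimekJAM{\pi}{2k+1}$ polynomially in $n$.

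The subtlety is which derivation we get. \refthm{bound-pik} is stated for an arbitrary $\pi\pof\vdash t:\star$, but its proof goes through \refprop{bound-to}, \refprop{bound-m} and \refprop{bound-size-beta}, which only assert the \emph{existence} of a suitable derivation. I will handle this through the direction $(1)\Rightarrow(2)$ of \refthm{main-thm-strat}. Its proof picks the derivation supplied by \refprop{bound-to}. It then applies \refprop{low-bound} and \refcoro{up-bound}, which show that the weight equals the run length for \emph{every} derivation of $\vdash t:\star$.

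Consequently we may take $\pi$ to be any derivation we like, in particular one that simultaneously satisfies:
\begin{itemize}
\item $\maxto\pi\le n$;
\item $\maxmult\pi\le\size\pi$;
\item $\size\pi=\bigo{n^2}$.
\end{itemize}
If the cited propositions do not already produce a single derivation meeting all three, I would argue that the canonical derivation obtained by subject expansion from the normal form $\lambda x.u:\star$ satisfies all of them at once. This is the step I expect to be the only real obstacle, and I would check it by inspecting the expansion construction.

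With such a $\pi$, I apply \refthm{bound-pik} with parameter $2k+1$ in place of $k$, which gives
\[N=\WeightTimekJAM{\pi}{2k+1}=\bigo{n^{3(2k+1)+2}}=\bigo{n^{6k+5}}.\]
Unfolding this, each of the at most $\size\pi=\bigo{n^2}$ judgments contributes, by \refprop{p:bound-occstar-type} at depth $2k+1$, a weight of $\bigo{n^{6k+3}}$, and $n^2\cdot n^{6k+3}=n^{6k+5}$. As a corner case, if $n=0$ the run has constant length and the bound holds trivially once the $\bigo{\cdot}$ is read with $\max(n,1)$.
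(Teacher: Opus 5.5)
Your proposal is correct and matches the paper's proof: take the derivation supplied by \refthm{main-thm-strat} and apply \refthm{bound-pik} at parameter $2k+1$, obtaining $\bigo{n^{3(2k+1)+2}}=\bigo{n^{6k+5}}$. Your extra worry is legitimate: \refthm{bound-pik} is stated for every $\pi$, yet its proof rests on the existential \refprop{bound-to}, \refprop{bound-m} and \refprop{bound-size-beta}. Your repair is sound, because \refprop{low-bound} and \refcoro{up-bound} make the weight independent of the derivation, so one good derivation suffices; the first two bounds hold for a single derivation, since \reflemma{subj-exp-to} and \reflemma{subj-exp-m} use the same expansion construction.
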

\begin{proof}
	Let $m$ be the length of the complete run. By \refthm{main-thm-strat}, we have $\pi\pof\vdash t:\star$ such that
	$
		m\stackrel{(\text{\refthm{main-thm-strat}})}{=}\WeightTimekJAM{\pi}{2k+1}\stackrel{(\text{\refthm{bound-pik}})}{=}\bigo{n^{3(2k+1)+2}}=\bigo{n^{6k+5}}$. \qedhere
\end{proof}
Now, the natural question is if this bound is tight. We believe it is not, as in the case of the \JAM, \ie when $k=0$, we already know from~\cite{POPL2021} that another bound is $\bigo{n^4\cdot\size{t}}$, likely not tight as well, instead of $\bigo{n^5}$. We leave for future work understanding the relationship between our complexity analysis and the one carried out in~\cite{POPL2021}, which seem to be orthogonal. Intuitively, here we have analyzed type derivations ``horizontally'', judgment by judgment, while in~\cite{POPL2021} they have been analyzed ``vertically'', hence the dependence on the size of the term, which bounds derivation height.

\section{Conclusion}
In this paper, we introduced a new parametric machine, the \KJAM, which behaves like the \IAM up to a fixed backtracking nesting depth, and then switches to behaving like the \JAM. We showed that the runtime of this machine can be characterized using intersection type derivations, extending the results about the \KAM and the \IAM. Furthermore, thanks to the aforementioned type theoretic framework, we were able to analyze its complexity. This allowed us to prove that the number of steps the \KJAM needs to evaluate a term $\tm$ is \emph{polynomial} in the number of $\beta$-steps needed to normalize it, thus making the \KJAM a \emph{reasonable} cost model~\cite{DBLP:journals/entcs/Accattoli18}.

Although, from a technical viewpoint, the analysis of the PaJAM just described is certainly of interest, the most significant conceptual contribution lies in the exploration of the space between the JAM and the IAM, the former being obtained as an optimization of the latter and thereby achieving, on certain terms, an exponential speedup. On the one hand, our work shows that the phase transition occurs ‘on the IAM side,’ only when jumps are never performed. On the other hand, it demonstrates that, from a quantitative perspective, the JAM is more similar to the KAM than to the machine of which it is an optimization.

In future work, we would like to investigate the relationship between the \KJAM and game semantics. Indeed, it would be interesting to analyze our results on non-idempotent intersection types from the perspective of game semantics, \eg following the lines of~\cite{TsukadaAO17,DBLP:conf/lics/ClairambaultF24}, and understand how the \KJAM relates with game-theoretical models, enriching the already well-known connections between the \JAM and HO games (via the PAM-JAM isomorphism) and the \IAM and AJM games.

Moreover, we are interested in the \emph{space} performances of the \KJAM, in particular depending on the choice of the parameter. This analysis could be carried out either directly on the machine, as in~\cite{LICS2022,LMCS2024}, or via intersection types, as in~\cite{LICS2021,ICALP2022}. The problem is non-trivial, as one should in this case take into account how data structures are represented at low-level, and how garbage collection is implemented.

\bibliography{refs}

\newpage

\appendix

\section{Proof of the J-exhaustible invariant}
\label{sect:apx4}

This section contains proofs which are rephrasing of proofs appearing in~\cite{POPL2021}. We added them for the sake of completeness. Please note, however, the strengthening of Definition~\ref{def:j-exhaust-strength}.
\subparagraph{The Exhaustible Invariant.} This technique was first introduced in~\cite{PPDP2020}, where it was used to prove the correctness of the \IAM, and later adapted to intersection type-based machines in~\cite{POPL2021}, to prove the correspondence with automata-theoretic ones. Roughly, the exhaustible invariant says that reachable states are \emph{consistent}, in the sense that all the data structures are coherent. Here, in the type theoretic formulation of the \KJAM, we use it to prove that the function $\jump{\cdot}$ is always defined on reachable states, and to allow us to define the bisimulation between the \KJAM and the \KSJAM. The definition of the invariant itself is highly technical, and requires some auxiliary concepts.

\subparagraph{Preliminaries for the Invariant.} First, we need to generalize the \KSJAM to arbitrary positions. This means that the token does not necessarily move on an occurrence of $\initty$ any more, but it can travel on any linear type $\linty$. A \KSJAM state 
is a quintuple $(\tyd, \ruleoc, \ltyctx, \pol,k)$ where $\ruleoc$ is an 
occurrence 
of a judgment 
$\tjudg{\tye}{\tmtwo}{\linty}$ in $\tyd$, $\pol$ is a direction, $k$ is a parameter, and $\ltyctx$ 
is a type context isolating an occurrence of 
$\initty$ in $\linty$. The generalization simply is to consider type contexts 
$\ltyctx$ such that $\ltyctxp{\lintytwo} = \linty$ 
for some $\lintytwo$, that is, not necessarily isolating $\initty$. A pair 
$(\lintytwo, \ltyctx)$ such that $\ltyctxp{\lintytwo} = 
\linty$ is called a position in $\linty$. 

Note that the \KSJAM can be naturally adapted to this more general notion of state, that follows an arbitrary formula 
$\lintytwo$, not necessarily $\initty$. 
It amounts to simply replacing $\initty$ 
with $\lintytwo$. We give the transitions of the Generalized \KSJAM in \reffig{ksjam-gen}.
To easily manage \KSJAM states, we also use a concise notation, writing 
$\tjudg{}{\tm}{\linty^k_d,\ltyctx}$ for a state 
$\state= (\tyd, \ruleoc, (\linty,\ltyctx), \pol,k)$ where $\ruleoc$ is 
$\tjudg{\tye}{\tm}{\ltyctxp\linty}$ for some $\tye$, potentially 
specifying the direction via colors and under/over-lining.
As we did in the \KSJAM, we define $\jump{\tjudg{}{\blue\tmtwo}{\ltyctxp{\lintyb^0_\downpt} }}$ as the first encountered state $\tjudg{}{\blue\var}{\ltyctxp{\lintyb^0_\downpt} }$ such that  $\tjudg{}{\blue\tmtwo}{\ltyctxp{\lintyb^{\infty}_\downpt} } \toksjam^+ \tjudg{}{\blue\var}{\ltyctxp{\lintyb^{\infty}_\downpt} }$.

\subparagraph{\SIAM Tests.} Given a \SJAM state $\state= (\tyd, \ruleoc, 
(\linty,\ltyctx), \pol,k)$, the underlying idea is that 
the judgment occurrence $\ruleoc$ encodes the log of the \KJAM, while the type 
context $\ltyctx$ encodes the 
tape. It is then natural to define two kinds of test, one for judgments and 
one for type contexts.

The intuition is that a test focuses on (the occurrence of) an element 
$\lintytwo$ 
of a sequence $\mty$ related to 
$\state$, and that these sequence elements play the role of logged positions in the \KJAM. These sequence elements 
are of two kinds:
\begin{enumerate}
	\item \emph{Elements containing $\ruleoc$}: those in which the focused 
	judgment $\ruleoc$ itself is contained, 
	corresponding to the logged positions in the log of the \KJAM. Note that the positions on the log are those for which 
	the \KJAM has previously found the corresponding arguments. In the \KSJAM these arguments are exactly those in which the 
	focused judgment is contained.
	
	\item \emph{Elements appearing in $\ltyctx$}: those in 
	the right-hand type of $\state$ in which the focused type $\linty$ is 
	contained, 
	corresponding to the logged positions on 
	the tape of the \KJAM. They correspond to \KJAM queries for which the argument has not yet been found.
\end{enumerate}
Each one of these elements is then identified by a judgment occurrence 
$\ruleoc'$ and a position $(\lintytwo,\ltyctxtwo)$ 
in the right-hand type of $\ruleoc'$. 

\begin{definition}[Focus]
	A \emph{focus} $\focus$ in a derivation $\tyd$ is a pair $\focus = (\ruleoc, 
	(\linty,\ltyctx))$ of a judgment occurrence 
	$\ruleoc$ and of a type position $(\linty,\ltyctx)$ in the right-hand type 
	$\ltyctxp\linty$ of $\ruleoc$.
\end{definition}

The intuition is that exhausting a test $\state_{\ruleoc, (\linty,\ltyctx)}$ in 
$\tyd$ shall amount to retrieve the 
axiom of 
$\tyd$ of type $\linty$ that would be substituted by that sequence element of 
type $\linty$ by reducing $\tyd$ via 
cut-elimination---the 
definition of exhaustible tests is given below, after the definition of tests.

\begin{definition}[Judgment Tests]
	Let $\state=(\tyd, \ruleoc, (\linty,\ltyctx), \pol,k)$ be a \KSJAM state. Let 
	$r_i$ be $i$-th $\tyapp$ rule found traversing $\tyd$ 
	by descending from the focused judgment $\ruleoc$ towards the final judgment 
	of $\tyd$. Let $\ruleoc_i$ be the 
	judgment of the sequence $\mty_i$ in the right premise of $r_i$ traversed in 
	such a descent (careful: $\ruleoc_i$ is 
	the $j$-th judgment of $\mty_i$ for some $j$, that is, the index $i$ denotes 
	the connection with rule $r_i$, not the 
	position in $\mty_i$). 
	Let $\ruleoc_i$ be $\tjudg{\tye}{\tm}{\lintytwo}$. Then 
	$\state_{\focus}^i = (\tyd,\ruleoc_i,(\lintytwo,\ctxhole), 
	\downpt,\infty)$ is the $i$-th judgment test of $\state$, having as focus $\focus 
	\defeq (\ruleoc_i,(\lintytwo,\ctxhole))$. 
\end{definition}
We 
often
omit the judgment from the focus, writing simply 
$\state_{(\lintytwo,\ctxhole)}$, 
and even concisely 
note $\state_{\focus}$ as 
$\tjudg{}{\blue\tm}{\lintytwo_\downpt,\ctxhole}$.
Note that judgment tests always have type context $\ctxhole$.

\subparagraph{Type (Context) Tests.} While judgment tests depend only on the 
judgment occurrence $\ruleoc$ of a state 
$\state = (\tyd, \ruleoc, (\linty,\ltyctx), \pol,k)$, type context 
tests---dually---fix $\ruleoc$ and depend only on the 
type 
context $\ltyctx$ of $\state$, that is, they all focus on sequence elements of 
the form $(\ruleoc, (\lintytwo,\ltyctxtwo))$ 
where $\ltyctxtwop\lintytwo = \ltyctxp\linty$ and $\ltyctx = 
\ltyctxtwop\ltyctxthree$ for some type context $\ltyctxthree$. Namely, 
there is one type context test (shortened to \emph{type test}) for every sequence 
in which the hole of $\ltyctx$ is contained. We need some notions about type 
contexts, in particular a notion of level 
analogous to the one for term contexts.

\subparagraph{Terminology About Type Contexts.} Define type contexts $\ltyctx_n$ of 
level $n\in\nat$ as follows:

\[\begin{array}{lclr}
	\ltyctx_0 &\defeq &\ctxhole \mid \arr\mty\ltyctx_0
	\\
	\ltyctx_{n+1} &\defeq &\arr{\mset{\myldots\ltyctx_n\myldots}}\linty \mid 
	\arr\mty\ltyctx_{n+1}	
\end{array}\]
Clearly, every type context $\ltyctx$ can be seen as a type context $\ltyctx_n$ 
for a unique $n$, and vice versa a type 
context of level $n$ is also simply a type context---the level is then sometimes omitted.
A \emph{prefix} of a context $\ltyctx$ is a context $\ltyctxtwo$ such that 
$\ltyctxtwop\ltyctxthree = \ltyctx$ for some 
$\ltyctxthree$. Given $\ltyctx$ of level $n>0$, there is a smallest prefix 
context $\ltyctx|_i$ of level $0<i\leq 
n$, and it has the form 
$\ltyctxtwo\ctxholep{\arr{\mset{\myldots\ctxhole\myldots}}\linty}$ for a type 
context $\ltyctxtwo$ 
of level $i-1$.


\begin{definition}[Type Tests]
	Let $\state=(\tyd, \ruleoc, (\linty,\ltyctx), \pol,k)$ be a \KSJAM state and 
	$n$ 
	be the level of $\ltyctx$. The sequence 
	of directed prefixes $\DiPref\ltyctx$ of $\ltyctx$ is the sequence of pairs  
	$(\ltyctxtwo,\poltwo)$, where $\ltyctxtwo$ is a prefix of $\ltyctx$, 
	defined as follows:
	\[\begin{array}{lclll}
		\DiPref\ltyctx & \defeq & \mset\cdot & \mbox{if }n=0
		\\
		\DiPref\ltyctx & \defeq & \mset{(\ltyctx|_1,\uppt), 
			\ldots,(\ltyctx|_n,\uppt^{n-1})} &\mbox{if }n>0	
	\end{array}\]
	The $i$-th directed prefix (from left to right) $(\ltyctxtwo,\poltwo)$ in 
	$\DiPref\ltyctx$ induces the type test 
	$\state_{\focus}^i \defeq (\tyd, \ruleoc, (\ltyctxthreep\linty,\ltyctxtwo), 
	\poltwo,\infty)$ of $\state$ and focus $\focus \defeq 
	(\ruleoc,(\ltyctxthreep\lintytwo,\ltyctxtwo))$, where $\ltyctxthree$ is the 
	unique type context such that $\ltyctx = \ltyctxtwop\ltyctxthree$.
\end{definition}

\begin{remark}
	\label{rem:type-test-i}
	Notice that, by definition, for the type test $q^i_f$ we have that $\depth{\ltyctxtwo}=i$.
\end{remark}

\begin{definition}[State Respecting a Focus]
	Let $\focus=(\ruleoc, (\linty,\ltyctx))$ be a focus. A \KSJAM state $\state$ 
	respects $\focus$ if it is an axiom 
	$\tjudg{}{\blue\var}{\ctxholep{\linty^k_\downpt}}$ for some variable $\var$ and any $k\geq 0$ 
	(the typing context of $\state$, which is omitted 
	by convention, is $\var:\mset\linty$).
\end{definition}

\begin{definition}[J-exhaustible states]
	\label{def:j-exhaust-strength}
	The set $\exstates_{J}$ of J-exhaustible states is the 
	smallest set of generalized \KSJAM states such that if $\state\defeq (\tyd, \ruleoc, (\linty,\ltyctx), \pol,k)\in\exstates_{J}$, then for each type or 
	judgment test $\state_\focus$ of $\state$ of focus $\focus$ there exists a run 	
	$\run: \state_f \toksjam^*\tomachbttwo\statetwo\defeq (\tyd, \ruleoc', (\linty',\ltyctx'), \pol',\infty)$ where $\statetwo$ respects $\focus$ and for 
	the shortest such run $(\tyd, \ruleoc', (\linty',\ltyctx'), \pol',k)\in\exstates_{J}$.
	Moreover:
	\begin{itemize}
		\item if $\state_f$ is a judgment test then $\run:  \state_f\tomachbtone s\toksjam^*s'\tomachbttwo\statetwo$ and for each state $s_0=(\tyd, \ruleoc_0, (\linty_0,\ltyctx_0), \pol_0,\infty)$ in $\sigma: s\toksjam^*s'$ we have that $\depth {\ltyctx_0}\geq 1$;
		\item if $\state_f$ is a type test then $\run:  \state_f\toksjam^*s'\tomachbttwo\statetwo$ and for each state\\ $s_0=(\tyd, \ruleoc_0, (\linty_0,\ltyctx_0), \pol_0,\infty)$ in $\sigma: \state_f\toksjam^*s'$ we have that $\depth{\ltyctx_0} \geq 1$.
	\end{itemize}
\end{definition}

\begin{remark}
	\label{rem:test-indipendent-k}
Let $\state=(\tyd,\ruleoc,(\linty,\ltyctx),\pol,k)$ and $\statetwo=(\tyd,\ruleoc,(\linty,\ltyctx),\pol,k')$, with $k,k'\geq 0$, be two states that differ only in their backtracking parameter.
Observe that, in the definition of both type and judgment tests, the backtracking parameter is always taken to be $\infty$, therefore the two states $\state$ and $\statetwo$ have exactly the same tests. Moreover, by the definition of J-exhaustible states, if $\state$ is J-exhaustible, then so is $\statetwo$ and the corresponding test run $\rho$ starting from each of their test are the same. The J-exhaustibility is then a property of the occurences of the $\star$ within the derivation, and is independent of the backtracking depth of the state. 
\end{remark}

\begin{figure}[t]
	\scalebox{.95}{\footnotesize
\begin{tabular}{c}
$\begin{array}{ccc||ccc}
	\infer{\tjudg{}{\red{\tm\tmtwo}}{\ltyctxp{\lintyb^k_\uppt} (=\linty)}} 
	{\tjudg{}{\tm}{\arr{\mty}{\linty}} & \mset{\vdash}} 
	&\tomachdotone&
	\infer{\tjudg{}{\tm\tmtwo}{\linty 
		}}{\tjudg{}{\red\tm}{\arr{\mty}{\ltyctxp{\lintyb^k_\uppt}}} & 
		\mset{\vdash}}
		&
	\infer{\tjudg{}{\red{\lambda\var.\tm}}{\arr{\mty} 
			{\ltyctxp{\lintyb^k_\uppt}}}} 
	{\tjudg{}{\tm}{\linty (= \ltyctxp{\lintyb})}}
	& \tomachdottwo &
	\infer{\tjudg{}{\lambda\var.\tm}{\arr{\mty}{\linty}}} 
	{\tjudg{}{\red\tm}{\ltyctxp{\lintyb^k_\uppt}}}
	 \\[8pt]\hhline{======}&&&\\

	\infer{\tjudg{}{\tm\tmtwo}{\linty(= \ltyctxp{\lintyb}) 
		}}{\tjudg{}{\blue\tm}{\arr{\mty}{\ltyctxp{\lintyb^k_\downpt}}} & 
		\mset{\vdash}}
		
	&\tomachdotthree&
		\infer{\tjudg{}{\blue{\tm\tmtwo}}{\ltyctxp{\lintyb^k_\downpt}}} 
		{\tjudg{}{\tm}{\arr{\mty}{\linty}} & \mset{\vdash}}
		&
		\infer{\tjudg{}{\lambda\var.\tm}{\arr{\mty}{\linty}}} 
		{\tjudg{}{\blue\tm}{\ltyctxp{\lintyb^k_\downpt} (=\linty)}}
		 & \tomachdotfour &
		\infer{\tjudg{}{\blue{\lambda\var.\tm}}{\arr{\mty} 
				{\ltyctxp{\lintyb^k_\downpt}}}} 
		{\tjudg{}{\tm}{\linty}}
		\\[8pt]\hhline{======}\\
		\end{array}$
		\\
	$\begin{array}{ccccccc }
	\infer*{\infer{\tjudg{}{\la\var\ctxp{\var}} 
			{\arr{\mset{\myldots\linty_i\myldots}}\lintytwo}}{}}
	{\infer[i]{\tjudg{}{\red\var}{\ltyctxp{\lintyb^k_\uppt}_i (= \linty_i)}}{}}   
	&\tomachvar&
	 \infer*{\infer{\tjudg{}{\blue{\la\var\ctxp{\var}}} 
			{\arr{\mset{\myldots\ltyctxp{\lintyb^k_\downpt}_i\myldots}}\lintytwo}}{}}
	{\infer[i]{\tjudg{}{\var}{\linty_i}}{}}
	\\[8pt]\hhline{===}\\
	
	\infer*{\infer{\tjudg{}{\red{\la\var\ctxp{\var}}} 
			{\arr{\mset{\myldots\ltyctxp{\lintyb^k_\uppt}_i\myldots}}\lintytwo}}{}}
	{\infer[i]{\tjudg{}{\var}{\linty_i (=\ltyctxp{\lintyb}_i)}}{}}
	 & \tomachbttwo &
	 \infer*{\infer{\tjudg{}{\la\var\ctxp{\var}} 
			{\arr{\mset{\myldots\linty_i\myldots}}\lintytwo}}{}}
	{\infer[i]{\tjudg{}{\blue\var}{\ltyctxp{\lintyb^{k+1}_\downpt}_i},}{}} 
	\\[8pt]\hhline{===}\\
		\infer{\tjudg{}{\tm\tmtwo}{\lintytwo}} 
		{\tjudg{}{\blue\tm}{\arr{\mset{\myldots 
						\ltyctxp{\lintyb^k_\downpt}_i\myldots}}{\lintytwo}}
			& \tjudgi{}{\tmtwo}{\linty_i (=\ltyctxp{\lintyb}_i)}}
		& \tomacharg &
		\infer{\tjudg{}{\tm\tmtwo}{\lintytwo}} 
		{\tjudg{}{\tm}{\arr{\mset{\myldots 
						\linty_i\myldots}}{\lintytwo}}
			& \tjudgi{}{\red\tmtwo}{\ltyctxp{\lintyb^k_\uppt}_i}}
		\\[8pt]\hhline{===}\\

		\infer{\tjudg{}{\tm\tmtwo}{\lintytwo}} 
		{\tjudg{}{\tm}{\arr{\mset{\myldots 
						\linty_i\myldots}}{\lintytwo}}
			& \tjudgi{}{\blue\tmtwo}{\ltyctxp{\lintyb^{k+1}_\downpt}_i (=\linty_i)}}
		 & \tomachbtone &
		\infer{\tjudg{}{\tm\tmtwo}{\lintytwo}} 
		{\tjudg{}{\red\tm}{\arr{\mset{\myldots 
						\ltyctxp{\lintyb^k_\uppt}_i\myldots}}{\lintytwo}}
			& \tjudgi{}{\tmtwo}{\linty_i}}

		\\[8pt]\hhline{===}\\

		\infer{\tjudg{}{\tm\tmtwo}{\lintytwo}} 
			{\tjudg{}{\tm}{\arr{\mset{\myldots 
				\linty_i\myldots}}{\lintytwo}}
			& q \defeq\, \tjudgi{}{\blue\tmtwo}{\ltyctxp{\lintyb^0_\downpt}_i (=\linty_i)}}
			\qquad & \tomachjump & \qquad
			\jump{q} =\, \infer{\tjudg{}{\blue\var}{\ltyctxp{\lintyb^0_\downpt}_i (=\linty_i)}}{}
		
		\end{array}$
		\end{tabular}
		
}
	\vspace{-8pt}
	\caption{The transitions of the Generalized \KSJAM}
	\label{fig:ksjam-gen}  
\end{figure}

\begin{lemma}[Type context lifting]
	\label{l:type-context-lifting}
	If $\rho:\;\;\tjudg{}{\tm}{\linty^\infty_d,\ltyctxb}
	\toksjam^n
	\tjudg{}{\tmtwo}{\linty^{\infty}_{d'},\ltyctxtwob}$ with $\ltyctxtwop \lintytwo=\linty$ then it exists a run $\sigma:\;\;\tjudg{}{\tm}{	\lintytwo^\infty_{d},\ltyctxbp{\ltyctxtwo}}
	\toksjam^n
	\tjudg{}{\tmtwo}{\lintytwo^\infty_{d'},\ltyctxtwobp{\ltyctxtwo}}$. Moreover, at each step the transition applied in $\rho$ and $\sigma$ are the same.
\end{lemma}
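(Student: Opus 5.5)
The proof goes by induction on $n$, the length of $\rho$, working on the Generalized \KSJAM of \reffig{ksjam-gen} with parameter $\infty$. With parameter $\infty$ the $\tomachjump$ transition never fires, since it needs parameter $0$, and $\infty\pm1=\infty$. Hence every step of $\rho$ is one of $\tomachdotone,\tomachdottwo,\tomachdotthree,\tomachdotfour,\tomachvar,\tomacharg,\tomachbtone,\tomachbttwo$, and all of these are purely local.

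The key observation is that each of these transitions is defined uniformly in the tracked formula $\lintyb$ and the type context $\ltyctx$ that locates it. The judgment occurrence changes in a way that depends only on the term structure and the rule at the focus. The type context of the target state is obtained from that of the source by one of three operations:
\begin{itemize}
\item adding or removing an outer layer $\arr{\mty}{\cdot}$ (the transitions $\tomachdotone,\tomachdottwo,\tomachdotthree,\tomachdotfour$);
\item adding or removing an outer layer $\arr{\mset{\myldots\cdot\myldots}}{\lintytwo}$ at a fixed index $i$ (the transitions $\tomachvar,\tomacharg,\tomachbtone,\tomachbttwo$);
\item leaving it unchanged.
\end{itemize}
All three operations act only at the root of the context, never at its hole. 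So one step $\tjudg{}{\tm}{\linty^\infty_d,\ltyctxb}\toksjam\tjudg{}{\tmtwo}{\linty^\infty_{d'},\ltyctxtwob}$ by a transition $\mathsf{r}$ yields a step $\tjudg{}{\tm}{\lintytwo^\infty_d,\ltyctxbp{\ltyctxtwo}}\toksjam\tjudg{}{\tmtwo}{\lintytwo^\infty_{d'},\ltyctxtwobp{\ltyctxtwo}}$ by the same $\mathsf{r}$. The reason is that plugging $\ltyctxtwo$ into the hole commutes with wrapping or unwrapping outer layers. Moreover, $\ltyctxbp{\ltyctxtwop{\lintytwo}}=\ltyctxbp{\linty}$, so the lifted state is a legitimate generalized state on the same judgment occurrence.

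The plan has three steps. First, state and prove this single-step lemma by a case analysis on the transitions in \reffig{ksjam-gen}, checking for each that the side conditions are preserved. These are the direction, the rule shape at the focused judgment, the premise index $i$ and the axiom reached in $\tomachvar$/$\tomachbttwo$. They are preserved because none of them inspects the tracked formula or the inner part of the context. Second, for the $\tomachvar$ and $\tomachbttwo$ cases, check that the hypothesis on the form of the outer context identifies the same $i$-th element of the sequence and the same axiom occurrence regardless of what is plugged into the hole. Third, induct on $n$: the case $n=0$ is trivial, and the step case appends the lifted last step to the lifted run given by the induction hypothesis. The invariant $\ltyctxtwobp{\ltyctxtwo}$ matches the source context needed for the next step, and the transitions coincide step by step by construction.

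The only real obstacle is that first case analysis. In particular, for the transitions that move between an axiom and its binding abstraction, or between an argument and the function's sequence, I would need to argue carefully that the lifted state is indeed reachable by that transition. This amounts to checking that the context decomposition used in the rule, for example $\arr{\mset{\myldots\ltyctxp{\cdot}_i\myldots}}{\lintytwo}$, remains a decomposition after plugging. Nothing else in the argument depends on the parameter, since it stays $\infty$ throughout.
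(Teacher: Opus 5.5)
Your proposal is correct and takes essentially the same route as the paper. Both argue by induction on the length of $\rho$: at each step the same transition is re-applied in the plugged context $\ltyctxtwobp{\ltyctxtwo}$, which is justified by the locality of the Generalized \KSJAM transitions (the paper spells out only the $\tomachvar$ case). Your observation that $\tomachjump$ can never fire at parameter $\infty$ adds a useful precision, because the paper simply asserts that all transitions are local, and the jump is not.
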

\begin{proof}
	In the following we will not take in account the parameter $k,\,k'$ in the states to make the proof simpler to read. It is easy to see that this do not change the proof.
	By induction on the length of $\rho$. If $|\rho|=0$ the result it trivially holds. Let's fix 
	\[\rho:\;\; \vdash t:\linty^\infty_{d},\ltyctxb\toksjam^n\vdash t'': \linty^\infty_{d''},\ltyctxthreeb\toksjam \vdash u:\linty^\infty_{d'}, \ltyctxtwob\] 
	By $\ih$ it exists a run 
	$\sigma':\;\; \vdash t:\lintytwo^\infty_{d},\ltyctxbp \ltyctxtwo\toksjam^n \vdash t'': \lintytwo^\infty_{d''},\ltyctxthreebp \ltyctxtwo\; =: \state''$.
	We can conclude simply by noticing that all the Generalized \KSJAM transitions are local.
	We can perform the same transition from $\vdash t'': \linty^\infty_{d''},\ltyctxthreeb$ to $\vdash u:\linty^\infty_{d'}, \ltyctxtwob$ that we had in $\rho$, in the context $\ltyctxthreebp \ltyctxtwo$.

	We give an example for the case $\tomachvar$, all the others being similar. The last transition of $\rho$ has the following shape.
	\[
	\begin{array}{clc}
		\infer*{\infer{\tjudg{}{\la\var\ctxp{\var}} 
				{\arr{\mset{\ldots\linty_i\ldots}}\lintyb}}{}}
		{\infer[i]{\tjudg{}{\red\var}{\ltyctxthreebp{\linty^\infty_{\uppt}}_i(=\linty_i)}}{}}
		& 
		\tomachvar
		& \infer*{\infer{\tjudg{}{\blue{\la\var\ctxp{\var}}}
				{\arr{\mset{\ldots\ltyctxthreebp{\linty^\infty_{\downpt}}_i\ldots}}\lintyb}}{}}
		{\infer[i]{\tjudg{}{\var}{\linty_i}}{}}	
	\end{array}
	\]
	We have $\linty=\ltyctxtwop \lintytwo$ and we can perform the same $\tomachvar$ transition on $\state''$, obtaining

	\[
	\begin{array}{clc}
		\infer*{\infer{\tjudg{}{\la\var\ctxp{\var}} 
				{\arr{\mset{\ldots\linty_i\ldots}}\lintyb}}{}}
		{\infer[i]{\tjudg{}{\red\var}{\ltyctxthreebp{\ltyctxtwop {\lintytwo^\infty_{\uppt}}}_i(=\linty_i)}}{}}
		& 
		\tomachvar
		& \infer*{\infer{\tjudg{}{\blue{\la\var\ctxp{\var}}}
				{\arr{\mset{\ldots\ltyctxthreebp{\ltyctxtwop {\lintytwo^\infty_{\downpt}}}_i\ldots}}\lintyb}}{}}
		{\infer[i]{\tjudg{}{\var}{\linty_i}}{}}=\state		
	\end{array}
	\]

	We have that the context of the new state $q$ is exactly $\ltyctxtwobp \ltyctxtwo$.
\end{proof}

\begin{lemma}[J-exhaustible Invariant]
	\label{l:K-invariant-ksjam-str}
	Let $\tm$ be a closed term, $\tyd\pof\tjudg{\tye}{\tm}{\linty}$ a sequence 
	type derivation for it, and $\run:\ \tjudg{}{\tm}{\ctxholep{{\linty}^{\kparam}_\uppt}} 
	\toksjam^n
	\state$ an initial Generalized \KSJAM run. Then $\state$ is J-exhaustible. 
\end{lemma}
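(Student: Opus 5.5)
The plan is to argue by induction on the length $n$ of the initial run $\run$, following the corresponding J-exhaustible (exhaustible) invariant proof for the \SIAM in~\cite{POPL2021}. For the base case, the initial state $\tjudg{}{\tm}{\ctxholep{\linty^\kparam_\uppt}}$ has type context $\ctxhole$, so it has no type tests. Its focused judgment is the root of $\tyd$, so the descent to the final judgment crosses no $\tyapp$ right premise and there are no judgment tests either. Hence the initial state is vacuously J-exhaustible. By \refremark{test-indipendent-k}, J-exhaustibility does not depend on the parameter, so throughout I will ignore $k$ in the states, except when analysing $\tomachjump$ and the bt transitions.

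For the inductive step, assume $\state''\in\exstates_J$ and $\state''\toksjam\state$, and proceed by case analysis on the transition. For each case I will relate the tests of $\state$ to those of $\state''$.
\begin{itemize}
\item For $\tomachdotone,\tomachdottwo,\tomachdotthree,\tomachdotfour$ the judgment tests are unchanged, since the descent crosses the same right premises. The type tests are in bijection with those of $\state''$ via the obvious re-rooting of the type context, because the added or removed $\arr\mty\cdot$ does not change the level. The required runs therefore transfer directly, and the depth conditions ($\geq 1$) transfer as well.
\item For $\tomacharg$ the token enters the $i$-th argument. A new judgment test appears, namely the one focused on exactly that argument. The type test given by the outermost prefix of $\ltyctx$ is consumed. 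Its exhaustion run in $\state''$ is precisely the run needed for the new judgment test, after prefixing a $\tomachbtone$ step and re-lifting with \reflemma{type-context-lifting}. The depth condition $\geq 1$ holds because lifting only extends contexts.
\item For $\tomachvar$ the situation is dual: the first judgment test of $\state''$, which is the one for the argument, becomes a type test for the new outermost sequence prefix. Its run is obtained from the judgment-test run by dropping the initial $\tomachbtone$; the remaining states have depth $\geq 1$ by the strengthened hypothesis.
\item For $\tomachbtone$ and $\tomachbttwo$ the tests of the new state are tests of the state that started the backtracking, shifted by the lifting lemma. Here I use the clause ``for the shortest such run the reached state is in $\exstates_J$'' to conclude that the endpoint of a bt run is J-exhaustible.
\end{itemize}

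The main obstacle is the $\tomachjump$ case, where $\state=\jump{\state''}$ is non-local. My plan is to show first that the judgment test of $\state''$ for the current argument position, run with parameter $\infty$ and lifted by \reflemma{type-context-lifting} through the current type context $\ltyctx$, yields a run from $\state''$ with parameter $\infty$ of the shape $\tomachbtone\cdots\tomachbttwo$. Its target respects the focus, and the lifting lemma guarantees that this target has type context $\ltyctx$ and lies on an axiom of $x$. This simultaneously proves that $\jump{\state''}$ is defined, which is \refprop{jump-well-def}, and identifies $\jump{\state''}$ with the endpoint of the shortest such run. By the J-exhaustibility of $\state''$, that endpoint is in $\exstates_J$, and the parameter is irrelevant by \refremark{test-indipendent-k}.

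The delicate point is checking that the ``first encountered'' state in the definition of $\jump{\cdot}$ coincides with the endpoint of the shortest test run, rather than some earlier axiom with the same context. I would handle this by using the depth condition in the lifted run: the lifted states have context of the form $\ltyctxtwobp{\ltyctx}$ with depth at least $1+\depth{\ltyctx}$, so none of them carries the context $\ltyctx$ before the final $\tomachbttwo$.
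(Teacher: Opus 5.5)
The gap is in the $\tomachjump$ case: you never show that the type tests of $\jump{\state''}$ are positive. Your well-definedness argument for $\jump{\state''}$ is fine; the depth argument for ``first encountered'' is more explicit than the paper's.

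The J-exhaustibility of $\state''$ only says that the endpoint of the \emph{unlifted} judgment-test run lies in $\exstates_J$. That endpoint is the axiom of $\var$ with empty type context $\ctxhole$. Since it sits on the same judgment as $\jump{\state''}$, this settles the judgment tests of $\jump{\state''}$. But $\jump{\state''}$ carries the type context $\ltyctx$, which is non-empty: a $\downpt$ state at parameter $0$ has context depth $2\kparam+1$. So $\jump{\state''}$ also has one type test per sequence of $\ltyctx$. Neither the definition of $\exstates_J$ nor \refremark{test-indipendent-k} (which only concerns the parameter) says anything about these.

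The paper gives them a separate argument:
\begin{itemize}
\item A type test in direction $\downpt$ is reached by lifting the judgment-test run into the corresponding type test of $\state''$, so it is positive by determinism.
\item A type test in direction $\uppt$ starts on the axiom and must first travel back to the argument. The paper reverses the judgment-test run, using that the parameter-$\infty$ machine is reversible. It lifts the reversed run $\tomachvar\cdots\tomacharg$, giving a run from that type test of $\jump{\state''}$ to the corresponding type test of $\state''$. It then appends the latter's exhausting run and re-checks the depth condition along the whole run.
\end{itemize}
Another possible repair: your non-jump cases use only $\state''\in\exstates_J$, not reachability. Stated as a one-step preservation lemma, they can be iterated along the lifted parameter-$\infty$ run from $\state''$ to $\jump{\state''}$.

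Two local cases are also misdescribed.

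In the $\tomachvar$ step, no judgment test becomes a type test. The judgment tests of the new state are a subset of those of $\state''$: the ones inside the body of $\lambda\var$ are dropped. The new outermost type test sits on the binder in direction $\uppt$. It is exhausted by a single $\tomachbttwo$ back onto the axiom of $\state''$. That endpoint has empty type context and the judgment tests of $\state''$, hence lies in $\exstates_J$. The remaining type tests are one $\tomachvar$ or $\tomachbttwo$ step away from those of $\state''$. What you describe (a new type test lying on an old judgment-test run after its initial $\tomachbtone$) is actually the $\tomachbtone$ case.

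In the $\tomachbttwo$ case, appealing to ``the state that started the backtracking'' does not fit a one-step induction. It would require identifying the machine's actual backtracking run with a lifted test run. That run has a finite parameter and may itself contain jumps, so this identification is essentially what the invariant is meant to provide. The paper's argument is local. The first type test of $\state''$ performs exactly this $\tomachbttwo$, landing on the judgment of $\state$ with empty context. That state is in $\exstates_J$ by the shortest-run clause, which gives the judgment tests of $\state$. The type tests of $\state$ are again one step away from those of $\state''$.
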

\begin{proof}
	By induction on $n$. For $n=0$ there is nothing to prove because the initial 
	state $\state_0 \eqdef \,\tjudg{}{\tm}{\ctxholep{\linty^k_\uppt}}$ has no 
	judgement nor type tests. Then suppose
	$\run':\state_0\toksjam^{n-1}\statetwo$ and that the run continues with $\statetwo\toksjam\state$. By \ih, $\statetwo$ is 
	J-exhaustible.
	
	\emph{Terminology}: when a test state satisfies the clause in the definition of J-exhaustible states we say that it is \emph{positive}. 
	
	Cases of 
	$\statetwo\toksjam\state$:
		
	\begin{itemize}
		\item Case $\tomachdotone$.
		\[\begin{array}{clc}
			\statetwo=\infer{\tjudg{}{\red{\tm\tmtwo}}{\ltyctxp{\linty^k_{\uppt}} 
					(=\lintyb)}}
			{\tjudg{}{\tm}{\arr{\mty}{\lintyb}} & \mset\vdash} &
			\tomachdotone &
			\infer{\tjudg{}{\tm\tmtwo}{\lintyb 
			}}{\tjudg{}{\red\tm}{\arr{\mty}{\ltyctxp{\linty^k_{\uppt}}}} & 
				\mset\vdash}=\state
		\end{array}\]
		\begin{itemize}
			\item \emph{Judgement tests.} Note that $\state$ has the same judgement tests of $\statetwo$, which are 
			positive by the \ih
			\item \emph{Type tests.} We consider at first type tests with direction $\uppt$. Let $q_u$ be one of them. Since the type context $\ltyctx$ is the same in both $q$ and $q'$, there is a corresponding type test $q'_u$ of $q'$. This $q'_u$ is positive by \ih. We observe that $q'_u\tomachdotone q_u$, and since the machine is deterministic also $q_u$ is positive.
			We now consider type tests with direction $\downpt$, let now $q_d$ one of them. As before, there is a corresponding type test for $q'$, the we call $q'_d$ that is positive by \ih and such that $q_d\tomachdotthree q'_d$. By definition, also $q_d$ is positive. 
			In both cases, to prove the moreover part of the statement, we just have to notice that in the new transition that we defined the type context $\ltyctx$ is not changed, so neither is its depth. 
		\end{itemize}
		\item Case $\tomachdottwo$. Identical to the previous one.
		\item Case $\tomachvar$.
		\[
		\begin{array}{clc}
			\statetwo=\infer*{\infer{\tjudg{}{\la\var\ctxp{\var}} 
					{\arr{\mset{\ldots\linty_i\ldots}}\lintytwo}}{}}
			{\infer[i]{\tjudg{}{\red\var}{\ltyctxp{\linty^k_{\uppt}}_i(=\linty_i)}}{}}
			& 
			\tomachvar
			& \infer*{\infer{\tjudg{}{\blue{\la\var\ctxp{\var}}}
					{\arr{\mset{\ldots\ltyctxp{\linty^k_{\downpt}}_i\ldots}}\lintytwo}}{}}
			{\infer[i]{\tjudg{}{\var}{\linty_i}}{}}=\state		
		\end{array}
		\]
		\begin{itemize}
			\item \emph{Judgement tests.} Judgement tests of $\state$ are a subset 
			of judgement tests of $\statetwo$ and thus positive by \ih
			
			\item \emph{Type tests.}  Let $n$ be the level of $\ltyctx$ and $q^j$ the type test of $q$ associated to the $j$-th triple of $\DiPref{\arr{\mset{\ldots\ltyctx_i\ldots}}\lintytwo}$. We have three cases, depending on $j$.
			\begin{enumerate}
				\item $j=1$: $\state^1$ is $\tjudg{}{\red{\la\var\ctxp{\var}} }{\ltyctxp{\linty}^\infty_{i\uppt},\arr{\mset{\ldots\ctxhole\ldots}}\lintytwo}$. 
				Notice that $\state^1\tomachbttwo\,\tjudg{} 
				{\blue\var}{\ltyctxp{\linty}^\infty_{i\downpt},\ctxhole}\;=:\; \statethree$.
				Since its context is empty, by definition the state $\statethree$ has no type tests. Furthermore it has the same judgement tests of $\statetwo$ (see \refremark{test-indipendent-k}), which by \ih are positive.
				Also, it respects the focus of $\state^1$. For the moreover part of the J-exhaustible invariant we only have to notice that $\depth{\arr{\mset{\ldots\ctxhole\ldots}}\lintytwo}=1$.
				\item $j$ \emph{is even}: for $q^j$ (with direction $\downpt$) there is a corresponding type test $q'^{j-1}$ with odd index (having direction $\uppt$). We have that $q'^{j-1}\tomachvar q^j$ so we can conclude by \ih and determinism, since $q^j$ must be on the run given by the invariant on $q'^{j-1}$. So it is positive and the moreover condition must be satisfied.
				\item $j\neq 1$ \emph{is odd}: for $q^j$ (with direction $\uppt$) there is a corresponding $q'^{j-1}$ of even index (and direction $\downpt$). We have $q^j\tomachbttwo q'^{j-1}$ so we can conclude by \ih 
				For the moreover part of the invariant we have to notice that $j\geq 3$ so, as noticed in \refremark{type-test-i}, the type context $\ltyctxb$ associated to it must have $\depth \ltyctxb \geq 3$. 
				The transition $\tomachbttwo$ decrease the depth of the type context by just one, so $q^j$ is positive.
				Therefore, the state $q$ is J-exhaustible.
			\end{enumerate}
		\end{itemize}

		\item $\tomachdotthree, \,\tomachdotfour$. Identical to case $\tomachdotone$.
		
		\item Case $\tomacharg$.
		\[\small\begin{array}{cl}
			\statetwo=\infer{\tjudg{}{\tm\tmtwo}{\lintyb}} 
			{\tjudg{}{\blue\tm}{\arr{\mset{\ldots 
							\ltyctxp{\linty^k_{\downpt}}_i\ldots}}{\lintyb}}
				& \tjudgi{}{\tmtwo}{\lintytwo_i (=\ltyctxp{\linty_{\downpt}}_i)}}
			& \tomacharg 
		\end{array}\]
		\[\small\begin{array}{lc}
			\tomacharg &
			\infer{\tjudg{}{\tm\tmtwo}{\lintyb}} 
			{\tjudg{}{\tm}{\arr{\mset{\ldots 
							\lintytwo_i\ldots}}{\lintyb}}
				& \tjudgi{}{\red\tmtwo}{\ltyctxp{\linty^k_{\uppt}}_i}}=\state
		\end{array}\]
		\begin{itemize}
			\item \emph{Judgement tests.} Judgement tests of $\state$ are those of 
			$\statetwo$, which are positive by \ih, plus 
			$\state^\tmtwo \defeq 
			\tjudg{}{\blue\tmtwo}{\ltyctxp{\linty}^\infty_{i\downpt},\ctxhole}$. 
			Please note that 
			$\sigma':\state^\tmtwo\tomachbtone\,\tjudg{}{\red\tm}{\ltyctxp{\linty}^\infty_{i\uppt},
				\arr{\mset{\ldots \ctxhole\ldots}}{\lintyb}}\eqdef\statetwo^t$. 
			Now,
			$\statetwo^\tm$ is a type test of $\statetwo$ and by \ih is 
			positive. Since $\statetwo^\tm$ is positive, it exists a run $\sigma:\statetwo^\tm\toksjam^*\tomachbttwo p$, with $p$ J-exhaustible.
			We can now obtain the following \KSJAM run, obtained by concatenating $\sigma'$ and $\sigma$. 
			\[
				\run:\state^u\tomachbtone\statetwo^t\toksjam^*\tomachbttwo p
			\]
			The moreover part of the J-exhaustible invariant is easily verified, since all the states in $\sigma$ respect the condition for tape tests by \ih and $\depth{\arr{\mset{\ldots \ctxhole\ldots}}{\lintyb}}=1$. We conclude that $\state^\tmtwo$ is positive. 
			
			\item \emph{Type tests.} For each odd type test $q^i$ (with direction $\uppt$) of $q$ there is a corresponding one of $q'$, that is $q'^{i+1}$ (with direction $\downpt)$ and is positive by \ih We have that $q'^{i+1}\tomacharg q^i$, then $q^i$ is positive by determinism of the \KSJAM. For each even type test $q^i$ (with direction $\downpt$) there is a corresponding type test of $q'$, being $q'^{i+1}$ (with direction $\uppt$) and positive by \ih We have $q^i\tomachbtone q'^{i+1}$.
			To prove the moreover part of the statement we need to notice that, by \refremark{type-test-i}, a test with an even index has a type context of even depth. Furthermore, the $\tomachbtone$ transition increases the depth of this context by exactly one. So we conclude that the type context of $q'^{i+1}$ has depth greater then $1$.
		\end{itemize}
		
		\item Case $\tomachjump$.
			\[\begin{array}{cl}
			\statetwo=\infer{\tjudg{}{\tm\tmtwo}{\lintytwo}} 
			{\tjudg{}{\tm}{\arr{\mset{\myldots 
				\linty_i\myldots}}{\lintytwo}}
			& p \defeq\, \tjudgi{}{\blue\tmtwo}{\ltyctxp{\linty^0_\downpt}_i (=\linty_i)}}
			& \tomachjump 
		\end{array}\]

		\[\begin{array}{lc}
			 \tomachjump &
			\jump{p} =\, \infer{\tjudg{}{\blue\var}{\ltyctxp{\initty^0_\downpt}_i (=\linty_i)}}{}
			=\state
		\end{array}\]
			\begin{itemize}
				\item \emph{Judgement tests.} By $\ih$, $\statetwo$ is J-exhaustible and $\statetwo_\tmtwo:=\;\tjudg{}{\blue\tmtwo}{\ltyctxp{\linty}^\infty_{i\downpt},\ctxhole}$ is a judgement test of $\statetwo$, so $\statetwo_\tmtwo$ is positive.
				It exists a run $\sigma:\state'_\tmtwo\tomachbtone\toksjam^*\tomachbttwo s\;:=\; \tjudg{}{\blue\var}{\ltyctxp{\linty}^\infty_{i\downpt},\ctxhole}$, with $s$ a J-exhaustible state and $\sigma$ is the shortest such run.
				We can then lift this run by \reflemma{type-context-lifting} and obtain a \KSJAM run $\sigma_l:\;\statetwo_l:=\,\tjudg{}{\blue\tmtwo}{\ltyctxp{\linty^\infty_\downpt}_{i}}\tomachbtone\toksjam^*\tomachbttwo \tjudg{}{\blue\var}{\ltyctxp{\linty^\infty_\downpt}_{i}}=:s_l$.
				Clearly, the two states $s$ and $s_l$ have the same judgement tests, who are positive by \ih.
				On the other hand, by definition, $\jump p =q\;:=\;\tjudg{}{\blue\var}{\ltyctxp{\linty^0_\downpt}_{i}}$ with $q$ the first encountered axiom state such that $\rho:\;\tjudg{}{\blue\tmtwo}{\ltyctxp{\linty^\infty_\downpt}_{i}}\toksjam ^+ \tjudg{}{\blue\var}{\ltyctxp{\linty^\infty_\downpt}_{i}}$.
				Since both $\sigma_l$ and $\rho$ start from the same state and $\tjudg{}{\blue\var}{\ltyctxp{\linty^\infty_\downpt}_{i}}$ is their first encountered such state, by determinism of the \KSJAM we have that $\jump p$ is well defined in this case.
				The moreover part of the statement comes from the fact that we constructed $\sigma_l$ starting from $\sigma$ that satisfies the condition by \ih The operation of lifting with which we obtained $\sigma_l$ can obviously only increase the depth of the type contexts of the states appearing in $\sigma$.\\

				%

				\item  \emph{Tape tests.} We prove at first the case of odd type tests. We only give the example of the case $i=1$ to help readability, the general case is straightforward generalization. We have two tape tests for states $q'$ and $q$, respectively $\statetwo^1:=\;\tjudg{}{\red\tmtwo}{\ltyctxtwop{\linty}^\infty_{i\uppt},\arr{\mset{\ldots\ctxhole\ldots}}{\lintytwo}}$ and $\state^1\;:=\;\tjudg{}{\red\var}{\ltyctxtwop{\linty}^\infty_{i\uppt},\arr{\mset{\ldots\ctxhole\ldots}}{\lintytwo}}$.
				By $\ih$ $\statetwo$ is positive, so there is a run $\sigma:\statetwo^1\toksjam^*s'\tomachbttwo s$ with $s$ J-exhaustible and respecting the focus of $\statetwo^1$.
				We now want to show that it exists a run $\rho:\state^1\toksjam^+\statetwo^1$, so that by prefixing $\rho$ to $\sigma$ we get a run that exhaust $\state^1$.
				Notice that $s$ also respect the focus of $\state^1$.
				
				Consider the judgement test of $\statetwo$ that we call $\statetwo_u:=\tjudg{}{\blue\tmtwo}{\ltyctxp{\linty}^\infty_{i\downpt},\ctxhole}$. By $\ih$ $\statetwo_u$ is positive, so there is a run $\gamma:\statetwo_u\tomachbtone\toksjam^*\tomachbttwo s_u:=\tjudg{}{\blue\var}{\ltyctxp{\linty}^\infty_{i\downpt},\ctxhole}$. 
				As we said in \refremark{reversibility}, when the backtrack parameter of a state is $\infty$ the \KSJAM run starting from it is reversible. We can then get a run 
				\[
					\gamma^\bot:\;\;(s_u)^\bot\tomachvar s_1
					\toksjam^*s_2\tomacharg
					(\statetwo_u)^\bot
				\] 
				We call $\cdot^\bot$ is the operation that inverts the direction of \KSJAM states.
				We can now apply \reflemma{type-context-lifting} to this run, since $\ltyctx=\arr{\mset{\ldots\ltyctxtwop{\linty}\ldots}}\lintytwo$, ending up with the desired $\rho$. Explicitly
				\begin{align*}
					\rho:\;\;s_u^l:=\;\tjudg{}{\red\var}{\ltyctxtwop{\linty}^\infty_{i\uppt},\arr{\mset{\ldots\ctxhole\ldots}}\lintytwo}\tomachvar s_{1}^l
					\toksjam^*s_{2}^l
					\\\tomacharg
					\tjudg{}{\red\tmtwo}{\ltyctxtwop{\linty}^\infty_{i\uppt},\arr{\mset{\ldots\ctxhole\ldots}}{\lintytwo}}:={q'}^{l}_u
				\end{align*}
				Since, following the same argument as in the previous \emph{Judgement test} case, we have that $\state$ can be obtained by lifting the state $s_u$, therefore we have that $s_u^l$ is exactly $\state^1$. By prefixing the run $\rho$ to $\sigma$ we then obtain the run $\tau:\state^1\toksjam^*s$. 

				We now have to check the moreover part of the J-exhaustible invariant. By \ih, on the run $\gamma$ the condition for runs corresponding to a judgement test is verified. We define $\gamma^\bot_1$ as the sub-run of $\gamma^\bot$ such that $\gamma^\bot_1 :s_1\toksjam^*s_2$. By reversibility of the \KSJAM in this case, we can say that each state $(\tyd, \ruleoc, (\ltyctxp{\linty},\ltyctxb), \pol,\infty)$ in $\gamma^\bot_1$ is such that $\depth \ltyctxb \geq 1$. Even more, this is true in the sub-run of $\tau$ that corresponds to the lifting of $\gamma^\bot_1$, namely $\tau_l:s_{1}^l\toksjam^*s_{2}^l$. Also, both $s_u^l$ and ${q'}^{l}_u$ have clearly a type context of depth equal to $1$.
				Finally, by \ih, on $\sigma$ is verified the condition for runs corresponding to tape tests. 
				We can then concude that for each state in the run $\tau$ the condition for tape tests is verified, so $\state^1$ is positive.\\

				We now check the case of even type tests. 
				Let $\statetwo_l\defeq \tjudg{}{\blue\tmtwo}{\ltyctxp{\linty}^\infty_{i\downpt},\ctxhole}$ be the judgement test of $\statetwo$. Since is positive by \ih we have that it exists a run $\sigma: \statetwo_l\toksjam^* \tjudg{}{\blue\var}{\ltyctxp{\linty}^\infty_{i\downpt},\ctxhole}$. Let now $\statetwo_i\defeq \tjudg{}{\blue\tmtwo}{\ltyctx''\ctxholep{\linty}^\infty_{i\downpt},\ltyctx'}$ be an even tape test of $\statetwo$, with $\ltyctx=\ltyctx''\ctxholep{\ltyctx'}$ as in the definition of tape tests. By \reflemma{type-context-lifting} we can lift $\sigma$ and obtain a run $\rho:\statetwo_i\toksjam^*\tjudg{}{\blue\var}{\ltyctx''\ctxholep{\linty}^\infty_{i\downpt},\ltyctx'}\eqdef \state_i$. This $\state_i$ is exactly the tape test of $\state$ that corresponds to $\statetwo_i$ and is positive by determinism of the \KSJAM. The moreover part of the statement holds for $\sigma$ by \ih, therefore it also hold for $\rho$.
			\end{itemize}
		\item Case $\tomachbtone$.
			\[\begin{array}{cl}
		\statetwo=\infer{\tjudg{}{\tm\tmtwo}{\lintytwo}} 
		{\tjudg{}{\tm}{\arr{\mset{\ldots 
						\linty_i\ldots}}{\lintytwo}}
			& 
			\tjudgi{}{\blue\tmtwo}{\ltyctxp{\lintyb^{k+1}_{\downpt}}_i(=\linty_i)}}
		& \tomachbtone 
		\end{array}\]

		\[\begin{array}{lc}
		\tomachbtone &
		\infer{\tjudg{}{\tm\tmtwo}{\lintytwo}} 
		{\tjudg{}{\red\tm}{\arr{\mset{\ldots 
						\ltyctxp{\lintyb^k_{\uppt}}_i\ldots}}{\lintytwo}}
			& \tjudgi{}{\tmtwo}{\linty_i}}=\state
		\end{array}\]

			\begin{itemize}
				\item \emph{Judgement tests.} Note that $\state$ has the same judgement tests of $\statetwo$, which are 
			positive by the \ih
			\item \emph{Type tests.} We call $q^1$ the first type test of $q$. We have $q^1\defeq\, \tjudg{}{\red \tm}{\ltyctxp{\lintyb}^\infty_{i\uppt}, \arr{\mset{\myldots \ctxhole
						\myldots}}{\lintytwo}}$.
			Please note that $q'^{u}\defeq \tjudg{}{\blue u}{\ltyctxp{\lintyb}^\infty_{i\downpt}}, \ctxhole$ is a judgement test of $q'$ such that $q'^{u}\tomachbtone q^1$. By \ih and by determinism of the \KSJAM then $q^1$ is positive. We now look at all the others type tests. For each odd type test $q^i$ of $q$ (with direction $\uppt$), there is a corresponding type test $q'^{i-1}$ of $q'$ (with direction $\downpt$) that is positive by \ih and such that $q'^{i-1}\tomachbtone q^i$. So $q^i$ is positive by determinism of the \KSJAM. For each even type test $q^i$ (with direction $\downpt$) of $q$, there is a corresponding odd type test $q'^{i-1}$ of $q'$ (with direction $\uppt$) that is also positive by \ih, such that $q^i\tomacharg q'^{i-1}$. We have that $i\geq 2$ and the transition $\tomacharg$ decreases the depth of the type context of exactly one, so type type context of $q'^{i-1}$ is greater then $1$.  

			\end{itemize}
		\item Case $\tomachbttwo$.
			\[\small
			\statetwo\defeq\infer*{\infer{\tjudg{}{\red{\la\var\ctxp{\var}}} 
			{\arr{\mset{\myldots\ltyctxp{\lintyb^k_\uppt}_i\myldots}}\lintytwo}}{}}
	{\infer[i]{\tjudg{}{\var}{\linty_i (=\ltyctxp{\lintyb}_i)}}{}}
	  \tomachbttwo 
	 \infer*{\infer{\tjudg{}{\la\var\ctxp{\var}} 
			{\arr{\mset{\myldots\linty_i\myldots}}\lintytwo}}{}}
	{\infer[i]{\tjudg{}{\blue\var}{\ltyctxp{\lintyb^{k+1}_\downpt}_i}}{}}\eqdef \state\]
			\begin{itemize}
				\item \emph{Judgement tests.} The first type test of $q'$ is $q'^1 \defeq\, \tjudg{}{\red{\la\var\ctxp{\var}}}{\ltyctxp{\lintyb}^\infty_{i\uppt}, \arr{\mset{\myldots \ctxhole
						\myldots}}{\lintytwo}}$.
			Note that $q'^1\tomachbttwo \tjudg{}{\blue \var}{\ltyctxp{\lintyb}^\infty_{i\downpt}}, \ctxhole\,\eqdef q''$ and that $q''$ exhaust $q'^1$ and is the first such state. Since $q'^1$ is positive, then $q''$ is J-exhaustible. Note that $q''$ has the same judgement tests of $q$, which are then positive. For the moreover part of the statement we just have to notice that $\depth{\arr{\mset{\myldots \ctxhole
						\myldots}}{\lintytwo}}=1$.
			\item \emph{Type tests.} For each even type test $q^i$ (with direction $\downpt$) of $q$, there is a corresponding odd type test $q'^{i+1}$ of $q'$ (with direction $\uppt$) that is positive and such that $q'^{i+1}\tomachbttwo q^i$. So $q^i$ is positive by determinism of the \KSJAM. For each odd type test $q^i$ of $q$ (with direction $\uppt$), there is a corresponding type test $q'^{i+1}$ of $q'$ (with direction $\downpt$) that is positive by \ih and such that $q^i\tomachvar q'^{i+1}$. For the moreover part we notice that $q^i$ has a type context of depth $\geq 1$ and the $\tomachvar$ transition increases this depth by exactly $1$.
			\end{itemize}

	\end{itemize}
\end{proof}


\cjumpwelldef*
\begin{proof}
	This is a consequence of the \emph{Judgement test} part of the $\tomachjump$ case in the proof of \reflemma{K-invariant-ksjam-str}.
\end{proof}

\section{Proof of the \KJAM/\KSJAM bisimulation}
\label{sect:apx5}

\subparagraph{Relating Logs and Tapes with Typed Position.}

Also this section contains proofs which are rephrasing of proofs appearing in~\cite{POPL2021}. We added them for the sake of completeness.
The \KJAM uses the log $L=l_1,\dots,l_n$ to represent the arguments $u_1,\dots,u_n$ in which the position of the current state is contained. Each $u_i$ corresponds to the answer of a query for an argument associated with the variable in the logged position $l_i$. In the \KSJAM, this information is given by the sub-derivations for $u_1,\dots,u_n$ in which the judgement of the current state occurs.

Notice that in the \KJAM, the transitions $\tomacharg$ and $\tomachbtone$ respectively add and remove exactly one logged position from the log. In the \KSJAM, these correspond to the only transitions that enter or exit arguments in the derivation. Also, the transitions $\tomachvar$, $\tomachbttwo$ and $\tomachjump$ in the \KJAM are responsible for creating and reading the content of a logged position. In the \KSJAM, this translates into exiting or entering an axiom typing a variable.

The tape of the \KJAM is used to store information about the logged position for which the machine is currently searching an associated argument, or is backtracking to. In the \KSJAM, this information is encoded in the type context of the state. The \KJAM adds logged positions to the tape through the transitions $\tomachvar$ and $\tomachbtone$, and removes them through $\tomacharg$ and $\tomachbttwo$. In the \KSJAM, these correspond to transitions that respectively increase or decrease the depth of the state’s type context.
Notice also that after a $\tomachjump$ transition the tape, and thus the type context, remains unchanged. By examining \reffig{LJAM-strat} and \reffig{sjam-strat}, it becomes evident that there is a logged position on the \KJAM tape for every type sequence $S$ in which it lies the hole $\ctxhole$ of the current type context $\ltyctx$ of the \KSJAM.
These are the main intuitions behind the \emph{extraction} procedure, which allows us to retrive a \KJAM state $\estate \state$ from an exhaustible \KSJAM state $\state$. 

From J-exhaustible states one is able to \emph{extract} \KJAM states, as the 
following definition shows. Please note that the definition is well-founded, 
precisely because the objects are J-exhaustible states. Indeed, the induction 
principle used to define J-exhaustability allows recursive definition on 
J-exhaustible states to be well-behaved. 
\begin{remark}
	\label{rem:reversibility}
	We first make a small remark about how logged positions are managed in the \KJAM. In the transition $\tomachvar$, the logged position that is created contains a \emph{global} context with respect to the term: if $t$ is the initial term, we have $t = \ctx\ctxholep{\lambda x.\ctxtwo_{n}\ctxholep{x}}$. In the $\tomachjump$ transition, the machine fully restores the context $\ctxtwo$ contained in the logged position $(x,\ctxtwo,L')$. This mechanism allows the \KJAM to avoid backtracking runs. In the literature, the \IAM usually uses \emph{local} logged positions: in the $\tomachvar$ transition, only the context $\lambda x.\ctxtwo_n$ would be saved.
	To avoid unnecessary complexity, the \KJAM logs positions globally. However, note that in the $\tomachbttwo$ transition, only the local information about the context $\lambda x.\ctxtwo_n$ contained in $l'$ is actually used, since the information about the global context $C$ is already present in the state.
\end{remark}
\begin{definition}[Extraction of logged positions]
	Let $\state$ be an J-exhaustible \KSJAM state in a derivation $\tyd$, $\tm$ 
	be the final term in $\tyd$, and $\state_\focus$ be a judgment or type 
	test of $\state$. Since 
	$\state$ is 
	J-exhaustible, there is an exhausting run 
	$\state_\focus\toksjam^+ \statetwo
	\in\exstates_J$. Let $\var$ be the variable of $\statetwo$. Then the logged position extracted from 
	$\state_\focus$ is 
	$\elpos{\state_\focus} \defeq 
	(\var,\ctx,\elpos{\statetwo^1}\cdot\ldots\cdot\elpos{\statetwo^n})$, where $\ctx$ is the only context such that $t=\ctxp{x}$. Finally, $\statetwo^i$ is the 
	$i$-th judgment test of $\statetwo$.
\end{definition}

\begin{definition}[Extraction of logs, tapes, and states]
	\label{def:extraction}
	Let $\state=(\tyd, \ruleoc, (\linty,\ltyctx), \pol,k)$ be an J-exhaustible 
	\KSJAM 
	state where $\tm$ is the final term in 
	$\tyd$, and $\ruleoc$ is $\tjudg{\tye}{\tmtwo}{\ltyctxp\linty}$. The \KJAM 
	state extracted from $\state$  is 
	$\estate{\state} \defeq \nopolkstate{\tmtwo}{\ctx_\state}{\etape\state}{\elog{\state}}{\pol}{k}$ 
	where
	\begin{itemize}
		\item \emph{Context}: $\ctx_\state$ is the only term context such that $\tm = \ctx_\state\ctxholep\tmtwo$;
		\item \emph{Log}:
		$\elog{\state}\defeq\lpos_1\cdots\lpos_i\cdots\lpos_n$ where $\lpos_i = \elpos{\state^i_\focus}$ where 
		$\state^i_\focus$ is the $i$-th judgment test of $\state$.		
		\item \emph{Tape}: $\etape\state = \etapeauxs{\ltyctx,0}$ where 
		$\etapeauxs{\ltyctx,i}$ is the auxiliary function 
		defined by induction on $\ltyctx$ as follows.
		\[\begin{array}{lcl}
			\etapeauxs{\ctxhole,i} &\defeq &\stempty 
			\\
			\etapeauxs{\arr\mty{\ltyctx}, i} & \defeq & \resm\cdot 
			\etapeauxs{\ltyctx,i}
			\\
			\etapeauxs{\arr{\mset{\myldots\ltyctx\myldots}}\lintytwo,i}
			& \defeq & \elpos{\state^i_\focus}\cdot\etapeauxs{\ltyctx, i+1}
		\end{array}
		\]
		where $\state^i_\focus$ is the $i$-th type test of $\state$.
	\end{itemize}
	We use $\bisimtypes$ for the extraction relation between J-exhaustible \KSJAM states and \KJAM states defined as $(\state, \estate{\state}) \in\bisimtypes$.
\end{definition}

First of all, we show that the extracted state respects the \KJAM invariant about the length of the log.

\begin{lemma}
	\label{l:extraction-length}
	Let $\state$ be an J-exhaustible \KSJAM state and $\estate{\state} = 
	\nopolkstate{\tm}{\ctx_\state}{\etape\state}{\elog{\state}}{\pol}{k}$ the \KJAM state extracted from it. Then the level of 
	$\ctx_\state$ is exactly the length of $\elog{\state}$, that is, $(\tm, \ctx_\state,\elog{\state})$ is a logged 
	position.
\end{lemma}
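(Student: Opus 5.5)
The plan is to compute both quantities from the structure of the derivation: the level of $\ctx_\state$ and the length of $\elog{\state}$ turn out to be the same combinatorial count on $\tyd$. Let $\ruleoc$ be the focused judgment $\tjudg{\tye}{\tmtwo}{\ltyctxp\linty}$, and descend in $\tyd$ from $\ruleoc$ to the final judgment $\tjudg{}{\tm}{\initty}$. This path follows exactly the syntactic path in $\tm$ from the root down to the occurrence of $\tmtwo$, since derivation rules are syntax-directed. Each step of the descent crosses one of three rules. A $\tylam$ rule corresponds to a context constructor $\la\var\ctxhole$. A $\tyapp$ rule entered through its left premise corresponds to $\ctxhole\tmthree$. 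A $\tyapp$ rule entered through one of the judgments in its right-premise sequence corresponds to $\tmthree\ctxhole$.

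First, I will prove by induction on the length of this descent that the level of $\ctx_\state$ equals the number of $\tyapp$ rules crossed through a right premise. This follows directly from the grammar of levelled contexts: $\ctx_{n+1} \grameq \ctx_{n+1}\tm \mid \la\var\ctx_{n+1} \mid \tm\ctx_n$. Only the constructor $\tm\ctx_n$ increments the level, and it corresponds exactly to entering an argument.

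Second, by the definition of judgment tests, the $i$-th judgment test is attached to the $i$-th $\tyapp$ rule $r_i$ met in the descent for which $\ruleoc$ lies inside a judgment $\ruleoc_i$ of the right-premise sequence. So the number of judgment tests of $\state$ equals that same count. Because $\state$ is J-exhaustible, each judgment test has an exhausting run, so each $\elpos{\state^i_\focus}$ is defined. By definition $\elog\state$ has exactly one logged position per judgment test, hence its length equals the level of $\ctx_\state$.

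To show that $(\tmtwo,\ctx_\state,\elog\state)$ is a logged position, every entry $\elpos{\state^i_\focus}=(\var,\ctx,\elpos{\statetwo^1}\cdots\elpos{\statetwo^m})$ must itself satisfy the invariant: the level of $\ctx$ must equal $m$. This is where the main obstacle lies, since logged positions are nested. I will handle it by well-founded induction along the inductive definition of $\exstates_J$, exactly as extraction itself is defined. The exhausting state $\statetwo$ is J-exhaustible and arises earlier in the inductive construction. So the statement applied to $\statetwo$ (whose focused term is $\var$, with $\tm=\ctx\ctxholep\var$) gives that the level of $\ctx$ equals the number of judgment tests of $\statetwo$, namely $m$. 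Combining this with the first two steps yields the claim.
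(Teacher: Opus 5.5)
Your proof is correct and follows the same route as the paper's: both the level of $\ctx_\state$ and the number of judgment tests of $\state$ (hence the length of $\elog{\state}$) are identified with the number of $\tyapp$ rules crossed through a right premise when descending from $\ruleoc$ to the root of $\tyd$. Your extra well-founded induction on $\exstates_J$ makes explicit what the paper's two-line proof leaves implicit: it uses that each entry $\elpos{\state^i_\focus}$ equals $(\var,\ctx_{\statetwo},\elog{\statetwo})$ for the J-exhaustible state $\statetwo$ exhausting the test, which gives the well-formedness of the nested logged positions, and your counting also pins down that only right-premise crossings are counted.
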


\begin{proof}
	The length of $\elog{\state}$ is the number of judgment tests of $\state$, 
	which is the number of 
	$\tyapp$ rules traversed descending from the focused judgment $\ruleoc$ of 
	$\state$ to the final judgment of $\tyd$. 
	The level of $\ctx_\state$ is the number of arguments in which the hole of $\ctx_\state$ is contained, which are 
	exactly the number of 
	$\tyapp$ rules traversed descending from $\ruleoc$ to the final judgment of the current derivation
	$\tyd$.
\end{proof}

\begin{proposition}[\KJAM-\KSJAM bisimulation]
	\label{prop:bisim-sjam-ljam}
	Let $\tm$ be a closed and $\towh$-normalizing term, and 
	$\tyd\pof\tjudg{}{\tm}{\initty}$ be a type derivation. Then,  
	$\bisimtypes$ is a strong bisimulation between J-exhaustible \KSJAM states on $\tyd$ and \KJAM states on $\tm$. 
	Moreover, if $\state_\tyd \bisimtypes \state_\l$ then $\state_\tyd$ is \KSJAM reachable if and only if $\state_\l$ is 
	\KJAM reachable.
\end{proposition}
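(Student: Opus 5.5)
The plan is to prove the two clauses of the strong bisimulation separately, and then derive reachability from them. The tool throughout is \reflemma{K-invariant-ksjam-str}: every state reachable from an initial state is J-exhaustible. Hence $\estate{\cdot}$ is defined on all the \KSJAM states we need, and by \reflemma{extraction-length} each extracted state is a well-formed \KJAM state.

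\emph{Step 1: extraction commutes with transitions.} We show that if $\state\toksjam\statetwo$ with $\state$ J-exhaustible, then $\estate\state\tokjam\estate\statetwo$ by a transition with the same label. We proceed by case analysis on the transition, reading off the four components of the extracted state: context, log, tape, and parameter.
\begin{itemize}
\item For the local transitions $\tomachdotone,\tomachdottwo,\tomachdotthree,\tomachdotfour$, the judgment tests are unchanged. The type context only gains or loses an arrow prefix $\arr\mty\cdot$, which adds or removes a $\resm$ on the tape, exactly as in \reffig{LJAM-strat}.
\item For $\tomacharg$ and $\tomachbtone$, one judgment test is added or removed, and one type test is turned into a judgment test or vice versa. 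We check that the extracted logged positions agree. This uses the fact, established in the proof of \reflemma{K-invariant-ksjam-str}, that the exhausting run of the new test is the old one prefixed by a single transition, so by determinism the same axiom is reached and the same $\elpos\cdot$ is produced.
\item For $\tomachvar$ and $\tomachbttwo$, the logged position created or consumed on the tape is the one extracted from the first type test. Its global context is recovered as in \refremark{reversibility}.
\end{itemize}
The parameter $k$ evolves identically in both machines by construction.

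\emph{Step 2: the converse direction.} If $\estate\state\tokjam\state_\l'$, we show there is some $\statetwo$ with $\state\toksjam\statetwo$ and $\estate\statetwo=\state_\l'$. Since both machines are deterministic, it suffices to show that whenever the \KJAM can fire a rule, the corresponding \KSJAM transition is enabled. The shape of the focused judgment gives this: the term constructor, the outermost constructor of the type context, and the direction $\pol$ together determine the rule in both machines.

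\emph{Step 3: the jump case (main obstacle).} Here the \KSJAM target is defined non-locally as $\jump\state$, while the \KJAM reads the logged position $(\var,\ctxtwo,\tlog')$ on top of the tape. By \refprop{jump-well-def}, the target exists. Unfolding the judgment-test clause of Definition~\ref{def:j-exhaust-strength} for the test associated with the argument $\tmtwo$, $\jump\state$ is the axiom reached by the lifted exhausting run, via \reflemma{type-context-lifting}. By the definition of $\elpos\cdot$, that same axiom is precisely the one whose variable and context form the logged position that the extraction placed on the tape. Its log $\tlog'$ consists of the extracted judgment tests of that axiom, which coincide with the judgment tests of $\jump\state$. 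The tape is unchanged on both sides. I expect the delicate bookkeeping to be exactly here: matching the \emph{global} context in the logged position with the judgment reached, and checking that the judgment tests of $\jump\state$ extract to $\tlog'$.

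\emph{Step 4: reachability.} The initial states correspond, since $\estate{\state^k_\tm}=(\tm,\ctxhole,\epsilon,\epsilon,\downp,k)$. The ``moreover'' part then follows by induction on run length, using the bisimulation clauses in both directions. Because extraction is a function and the \KJAM is deterministic, a state $\state_\l$ reached in $n$ steps is related to the \KSJAM state reached in $n$ steps, and conversely.
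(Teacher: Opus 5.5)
Your decomposition matches the paper's: extraction commutes with each transition, J-exhaustibility is the invariant that makes extraction and $\jump{\cdot}$ defined, and reachability follows by a lockstep induction. However, Step~3, which you rightly single out, does not go through as written.

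First, a slip. The \KJAM transition $\tomachjump$ does not read the tape. It pops the head of the log, $\elog{\state}=(\var,\ctxtwo,\tlog')\cdot\tlog$, and leaves the tape untouched. That head is $\elpos{\cdot}$ of the first \emph{judgment} test of $\state$, the test on the occurrence of $\tmtwo$ itself. This is the test you actually unfold, so once ``tape'' is replaced by ``log'' your treatment of the new code, context and log is right. It is also essentially definitional. The lifted exhausting run shows that $\jump{\state}$ sits on the same axiom occurrence as the exhausting state of that test. Then $\elpos{\cdot}$ records exactly that axiom's variable, its global context and the extraction of its judgment tests.

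The genuine gap is the claim that ``the tape is unchanged on both sides''. The type context is indeed unchanged. But the logged positions in $\etape{\cdot}$ are $\elpos{\cdot}$ of the \emph{type tests}, and type tests are anchored at the focused judgment occurrence. The jump moves that occurrence non-locally, from $\tmtwo$ to an axiom for $\var$. So proving $\etape{\jump{\state}}=\etape{\state}$ requires showing that, for every $j$, the $j$-th type tests of $\state$ and of $\jump{\state}$ exhaust on the same axiom.

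In your other cases this is immediate, because corresponding tests are one transition apart and determinism suffices. Here they are far apart. The paper's bisimulation proof settles this point by appeal to the type-test part of the $\tomachjump$ case of \reflemma{K-invariant-ksjam-str}. Let $\gamma$ be the exhausting run of the judgment test on $\tmtwo$.
\begin{itemize}
\item For a downward (even) type test, lifting $\gamma$ by the $j$-th prefix of the type context (\reflemma{type-context-lifting}) gives a run from the test of $\state$ to the test of $\jump{\state}$.
\item For an upward (odd) test, first invert $\gamma$, using reversibility of the \KSJAM at parameter $\infty$, and then lift it. This gives a run from the test of $\jump{\state}$ to that of $\state$.
\end{itemize}
In both cases determinism yields a common exhausting state, hence the same logged position. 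This ingredient is missing from your proposal.

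A smaller point, which the paper's one-line argument also glosses over, concerns Step~4. The right-to-left half of the reachability clause needs \emph{injectivity} of $\estate{\cdot}$. Lockstep only yields \emph{some} reachable \KSJAM state with the same extraction, and ``extraction is a function'' does not give injectivity.
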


\begin{proof}
	Assuming the bisimulation part of the statement, the moreover part follows from a trivial induction on the length of 
	the initial run, since initial state are bisimilar and the bisimulation is exactly the fact that $\bisimtypes$ is 
	stable by transitions.
	
	For the bisimulation part, we consider each possible transitions. We focus on the half of the proof showing that \KSJAM 
	transitions are simulated by the \KJAM, the other half is essentially identical. 

	\begin{itemize}
		\item Case $\tomachdotone$.
		\[\begin{array}{clc}
			\statetwo=\infer{\tjudg{}{\red{\tm\tmtwo}}{\ltyctxp{\initty^k_{\uppt}} 
					(=\linty)}} 
			{\tjudg{}{\tm}{\arr{\mty}{\linty}} & \mset\vdash} &
			\tomachdotone &
			\infer{\tjudg{}{\tm\tmtwo}{\linty 
			}}{\tjudg{}{\red\tm}{\arr{\mty}{\ltyctxp{\initty^k_{\uppt}}}} & 
				\mset\vdash}=\state
			\\[8pt]	
			\bisimtypes&&
			\\[8pt]
			\estate\state=\dkstate{ \tm\tmtwo }{ \ctx_\statetwo }{ \etape{\statetwo} }{ \elog{\statetwo} }{k} 
			&\iamdap& 
			\dkstate{ \tm }{ \ctx_{\statetwo}\ctxholep{\ctxhole\tmtwo} }{ \resm\cdot \etape{\statetwo} }{ 
				\elog{\statetwo} }{k} = \state_\l
		\end{array}\]
		
		Note that $\ctx_\state = \ctx_\statetwo\ctxholep{\ctxhole\tmtwo}$, $\elog\state = 
		\elog{\statetwo}$, and $\etape{\state} = \resm\cdot\etape{\statetwo}$. Then, $\state_\l = \estate{\state}$, that is, 
		$\state \bisimtypes \state_\l$.
		
		\item Case $\tomachdottwo$. Identical to the previous one.
		
		\item Case $\tomachvar$.
		\[\begin{array}{cl}
			\statetwo=\infer*{\infer{\tjudg{}{\la\var\ctxtwo_n\ctxholep{\var}} 
					{\arr{\mset{\ldots\linty_i\ldots}}\lintytwo}}{}}
			{\infer[i]{\tjudg{}{\red\var}{\ltyctxp{\initty^k_{\uppt}}_i 
						(=\linty_i)}}{}}  
			& 
			\tomachvar	
			\\[8pt]
			\bisimtypes&
			\\[8pt]
			\estate\state=\dkstate{ \var }{ \underbrace{\ctxp{\l\var.\ctxtwo_n}}_{=\ctx_\statetwo} }{ \etape\statetwo }{ 
				\underbrace{\tlog_n\cdot\tlog}_{=\elog\statetwo} }{k}
			&\tomachvar 
		\end{array}\]
		\[\begin{array}{lc}
			
			\tomachvar
			& \infer*{\infer{\tjudg{}{\blue{\la\var\ctxtwo_n\ctxholep{\var}}}
					{\arr{\mset{\ldots\ltyctxp{\initty^k_{\downpt}}_i\ldots}}\lintytwo}}{}}
			{\infer[i]{\tjudg{}{\var}{\linty_i}}{}}=\state		
			\\[8pt]
			&
			\\[8pt]
			\tomachvar &
			\ukstate{ \l\var.\ctxtwo_n\ctxholep\var}{ \ctx }{ 
				(\var,\ctxp{\l\var.\ctxtwo_n},\tlog_n\cdot\tlog)\cdot\etape\statetwo }{ \tlog }{k} = \state_\l		
		\end{array}\]
		First of all, in this case $\ctx_\statetwo$ has shape $\ctxp{\l\var.\ctxtwo_n}$ for some $n$, and the descending path from the 
		focused judgement to the final judgement passes through the showed $\tylam$ rule. Then $\ctx_{\state} = \ctx$.
		
		About the log, 
		by \reflemma{extraction-length} there is a correspondance between the level of term contexts and the length of the 
		extracted log, so that $\elog\statetwo$ has at least length $n$, that is, $\elog\statetwo = \tlog_n\cdot\tlog$, and 
		$\elog\state = \tlog$.
		
		About the tape, note that $\etape\state = 
		\elpos{\state^1_\focus}\cons\etapeaux{\ltyctx,1}\state$ where 
		$\state^1_\focus$ is the first type test of $\state$. To show that $\etape\state = 
		(\var,\ctxp{\l\var.\ctxtwo_n},\tlog_n\cdot\tlog)\cdot\etape\statetwo$ we have to show two things:
		\begin{enumerate}
			\item $\elpos{\state^1_\focus} = 
			(\var,\ctxp{\l\var.\ctxtwo_n},\tlog_n\cdot\tlog)$.
			Note that $\state^1_\focus$ is $\tjudg{}{\red{\la\var\ctxtwo_n\ctxholep{\var}} }
			{\ltyctxp{\initty^\infty_\uppt}_{i},\arr{\mset{\ldots\ctxhole\ldots}}\lintytwo}$.
			Note also that 
			$\state^1_\focus\tomachbttwo\,\tjudg{} 
			{\blue\var}{\ltyctxp{\initty^\infty_\downpt}_{i},\ctxhole} = \statethree$, 
			where $\statethree$ focusses on the same 
			judgement of $\statetwo$, and that $\statethree$ is the state that J-exhausts $\state^1_\focus$. By definition of 
			extraction, $\elpos{\state^1_\focus} = (\var,\ctxp{\l\var.\ctxtwo_n},\tlog_n\cdot\tlog)$.
			
			\item $\etapeaux{\ltyctx,1}\state = \etape\statetwo$, that is, 
			$\etapeaux{\ltyctx,1}\state = 
			\etapeaux{\ltyctx,0}\statetwo$. Note that $\etapeaux{\ltyctx,1}\state$ and 
			$\etapeaux{\ltyctx,0}\statetwo$ may differ only 
			in the content of logged positions (obtained by extracting from tape tests), which is the only thing that depends on 
			the direction and the state, the rest being uniquely determined by the type 
			context $\ltyctx$. Here one has to repeat the reasoning done in the $\tomachbttwo$ case of the proof of the J-exhaustible invariant (\reflemma{K-invariant-ksjam-str}), that shows that the tape test of index $i>1$ for $q$ and the one of index $i-1$ of $q'$ exhaust on the same state, and thus induce the same logged positions. Then $\etapeaux{\ltyctx,1}q =\etape{q'}$.
		\end{enumerate}
		Then $\etape\state = (\var,\ctxp{\l\var.\ctxtwo_n},\tlog_n\cdot\tlog)\cdot\etape\statetwo$, and so $\state_\l = \estate{\state}$, that 
		is, $\state \bisimtypes \state_\l$.

		\item Cases $\tomachdotthree$ and $\tomachdotfour$. They are identical to case 
		$\tomachdotone$.
		
		\item Case $\tomacharg$.
		\[\small\begin{array}{cl}
			\statetwo=\infer{\tjudg{}{\tm\tmtwo}{\linty}} 
			{\tjudg{}{\blue\tm}{\arr{\mset{\ldots 
							\ltyctxp{\initty^k_{\downpt}}_i\ldots}}{\linty}}
				& \tjudgi{}{\tmtwo}{\lintytwo_i (=\ltyctxp{\initty}_i)}}
			& \tomacharg 
			\\[8pt]
			\bisimtypes&
			\\[8pt]
			\estate\statetwo=\ukstate{ \tm }{ \underbrace{\ctxtwop{\ctxhole\tmtwo}}_{=\ctx_{\statetwo}} }{ 
				\underbrace{\elpos{\statetwo^1_\focus}\cdot\etapeaux{\ltyctx,1}{\statetwo}}_{=\etape\statetwo}
			}{ \elog\statetwo }{k}
			& \tomacharg 
		\end{array}\]
		\[\small\begin{array}{lc}
			 \tomacharg &
			\infer{\tjudg{}{\tm\tmtwo}{\linty}} 
			{\tjudg{}{\tm}{\arr{\mset{\ldots 
							\lintytwo_i\ldots}}{\linty}}
				& \tjudgi{}{\red\tmtwo}{\ltyctxp{\initty^k_{\uppt}}_i}}=\state
			\\[8pt]
			&
			\\[8pt]
			\tomacharg &
			\dkstate{ \tmtwo }{ \ctxtwop{\tm\ctxhole} }{ 
				\etapeaux{\ltyctx,1}{\statetwo} }{ 
				\elpos{\statetwo^1_\focus}\cdot\elog\statetwo }{k} = \state_\l
		\end{array}\]
		where $\statetwo^1_\focus$ is the first type test of $\statetwo$. Obviously, $\ctx_{\state} = \ctxtwop{\tm\ctxhole} 
		$. For the log we have to show that $\elog\state$ is equal to $\elpos{\statetwo^1_\focus}\cdot\elog\statetwo $, which 
		amounts to show that the first judgement test $\state^1$ of $\state$ exhausts on the same state as the first tape test 
		$\statetwo^1_\focus$ of $\statetwo$. This is exactly the reasoning done in the proof of the J-exhaustible invariant. 
		Similarly, one obtains that $\etapeaux{\ltyctx,1}{\statetwo}=\etape{q}=\etapeaux{\ltyctx,0}{q}$.

		\item Case $\tomachjump$.
		
		\[\begin{array}{cl}
			\statetwo=\infer{\tjudg{}{\tm\tmtwo}{\linty}} 
			{\tjudg{}{\tm}{\arr{\mset{\ldots 
							\lintytwo_i\ldots}}{\linty}}
				& 
				p\;:=\;\tjudgi{}{\blue\tmtwo}{\ltyctxp{\initty^0_{\downpt}}_i(=\lintytwo_i)}}
			& \tomachjump 
			\\[8pt]
			\bisimtypes&
			\\[8pt]
			\estate\statetwo=\ukstate{ \tmtwo }{ \ctxtwop{\tm\ctxhole}}{ \etape\statetwo }{ 
				\underbrace{(\var, \ctx,\tlog')\cdot\tlog}_{=\elog\statetwo} }{0}
			& \tomachjump 
		\end{array}\]

		\[\begin{array}{lc}
			 \tomachjump &
			\jump{p} =\, \infer{\tjudg{}{\blue\var}{\ltyctxp{\initty^0_\downpt}_i (=\linty_i)}}{}
			=\state
			\\[8pt]
			&
			\\[8pt]
			\tomachjump &
			\ukstate{ \var }{ C }{ \etape\state }{ \tlog' }{0}
			= \state_\l
		\end{array}\]
		
		The first element of $\elog{\statetwo}$ is $\elpos{\statetwo^1_j}=(x,\ctx,\tlog')$, with $\statetwo^1_j$ being the first judgement test of $\statetwo$. Since $\statetwo$ is a reachable state, it exists a run $\statetwo^1_j\tosiam^*s$ and $\elpos{\statetwo^1_j}$ is the logged position extracted from $s$. By the same reasoning made in the case \emph{Judgement Tests} of $\tomachjump$ in \reflemma{K-invariant-ksjam-str}, we have that the two states $s$ and $\state$ are the same. Hence, the logged position extracted from $\state$, \ie the first three components of the state $\state_\l$, are the same as in $\elpos{\statetwo^1_j}$.
		
		Now we have to prove that $\etape{\statetwo}=\etape{\state}$. Since everything except the logged positions depends on the shape of the type context $\ltyctx$ we only have to show that the logged positions on the two tapes are the same. Also, since the extraction of a logged position from a test of the \KSJAM depends only on its exhausting state, we have to show that for each $i\leq depth(\ltyctx)$ the type tests $q^i$ of $q$ and $q'^i$ of $q'$ exhausts on the same state. This is what has been done in the \emph{Type test} case of \reflemma{K-invariant-ksjam-str}.
		

		\item Case $\tomachbtone$.
		\[\begin{array}{cl}
		\statetwo=\infer{\tjudg{}{\tm\tmtwo}{\linty}} 
		{\tjudg{}{\tm}{\arr{\mset{\ldots 
						\lintytwo_i\ldots}}{\linty}}
			& 
			\tjudgi{}{\blue\tmtwo}{\ltyctxp{\initty^{k+1}_{\downpt}}_i(=\lintytwo_i)}}
		& \tomachbtone 
		
			\\[8pt]
			\bisimtypes&
			\\[8pt]
		\estate\statetwo=\ukstate{ \tmtwo }{ \underbrace{\ctxtwop{\tm\ctxhole}}_{=\ctx_{\statetwo} }}{ \etape\statetwo }{ 
\underbrace{\elpos{\state^1_\focus}\cdot\tlog}_{=\elog\statetwo} }{k+1}
		& \tomachbtone
		\end{array}\]

		\[\begin{array}{lc}
		\tomachbtone &
		\infer{\tjudg{}{\tm\tmtwo}{\linty}} 
		{\tjudg{}{\red\tm}{\arr{\mset{\ldots 
						\ltyctxp{\initty^k_{\uppt}}_i\ldots}}{\linty}}
			& \tjudgi{}{\tmtwo}{\lintytwo_i}}=\state
			\\[8pt]
			&
			\\[8pt]
		\tomachbtone &
		\dkstate{ \tm }{ \ctxtwop{\ctxhole\tmtwo} }{ \elpos{\state^1_\focus}\cdot\etape\statetwo }{ \tlog }{k}
= \state_\l
		\end{array}\]
  where $\statetwo^1_\focus$ is the first judgement test of $\statetwo$. Obviously, $\ctx_{\state} = 
\ctxtwop{\ctxhole\tmtwo} 
$. For the log, there is nothing to prove. For the tape, we have to show that $\etape\state$ is equal to 
$\elpos{\state^1_\focus}\cdot\etape\statetwo$, which 
amounts to show two things. First, that the first tape test $\state^1$ of $\state$ exhausts on the same state as 
the first judgement test 
$\statetwo^1_\focus$ of $\statetwo$. Second, that $\etapeaux{\ltyctx,1}\state = 
\etape\statetwo = 
\etapeaux{\ltyctx,0}\statetwo$. Both points follow exactly the reasoning done 
in the proof of the J-exhaustible 
invariant.   
		
\item Case $\tomachbttwo$.
		\[\small\begin{array}{cl}
		\statetwo=\infer*{\infer{\tjudg{}{\red{\la\var\ctxtwo_n\ctxholep{\var}}}
				{\arr{\mset{\ldots\ltyctxp{\initty^k_{\uppt}}_i\ldots}}\lintytwo}}{}}
		{\infer[i]{\tjudg{}{\var}{\linty_i (= \ltyctxp\initty_i)}}{}}
		& \tomachbttwo
		\\[8pt]
		\bisimtypes&
		\\[8pt]
		\estate\statetwo=\dkstate{ \la\var\ctxtwo_n\ctxholep{\var} }{ \ctx_\statetwo }{
			\underbrace{(\var,\ctx_{\statetwo}\ctxholep{\la\var\ctxtwo_n},\tlog_n)\cons\etapeaux{\ltyctx,1}\statetwo}_{=\etape\statetwo}
			 }{ \elog\statetwo }{k}
		&	\tomachbttwo
		\end{array}\]

		\[\small\begin{array}{lc}
		 \tomachbttwo&\infer*{\infer{\tjudg{}{\la\var\ctxtwo_n\ctxholep{\var}} 
				{\arr{\mset{\ldots\linty_i\ldots}}\lintytwo}}{}}
		{\infer[i]{\tjudg{}{\blue\var}{\ltyctxp{\initty^{k+1}_{\downpt}}_i}}{}}=\state
		\\[8pt]
		
		\\[8pt]
		 
		\tomachbttwo&\ukstate{ \var}{ \ctx_{\statetwo}\ctxholep{\la\var\ctxtwo_n} }{ 
		\etapeaux{\ltyctx,1}\statetwo }{	\tlog_n 
\cons\elog\statetwo }{k+1} = \statetwo_\l
		\end{array}\]
		
		About the tape of $\estate\statetwo$, note that $\etape\statetwo = 
\elpos{\statetwo^1_\focus}\cons\etapeaux{\ltyctx,1}\statetwo$ where 
$\state^1_\focus$ is 
the first type test of $\statetwo$. We have to show that $\statetwo^1_\focus$ exhausts on $\var$, so that 
$\elpos{\statetwo^1_\focus} = (\var,\ctx_{\statetwo}\ctxholep{\l\var.\ctxtwo_n},\tlog_n)$ for some $\tlog_n$.
		Note that $\statetwo^1_\focus$ is $\tjudg{}{\red{\la\var\ctxp{\var}} }
			{\ltyctxp{\initty}^\infty_{i\uppt},\arr{\mset{\ldots\ctxhole\ldots}}\lintytwo}$.
		Moreover, 
$\statetwo^1_\focus\tomachbttwo\,\tjudg{} 
			{\blue\var}{\ltyctxp{\initty}^\infty_{i\downpt},\ctxhole} = \statethree$, 
			where $\statethree$ focusses on the same 
judgement of $\state$, and that $\statethree$ is the state that J-exhausts $\statetwo^1_\focus$. By definition of 
extraction, $\elpos{\statetwo^1_\focus} = (\var,\ctx_{\statetwo}\ctxholep{\l\var.\ctxtwo_n},\tlog_n)$ where $\tlog_n$ is the extraction of the first 
$n$ judgement tests of $\state$. Notice how the $\ctx_{\statetwo}$ in $\elpos{\statetwo^1_\focus}$ is the same as the context of $\statetwo$ since it is uniquely defined by the structure of the term. Then $\ctx_\state = \ctx_{\statetwo}\ctxholep{\la\var\ctxtwo_n}$ and $\elog\state = 
\tlog_n \cons\elog\statetwo$.

About the tape, for $\state$ we have to prove that 
$\etapeaux{\ltyctx,1}\statetwo = \etape\state = 
\etapeaux{\ltyctx,0}\state$. This is done as for $\tomachvar$, mimicking the 
reasoning in the proof of the J-exhaustible 
invariant. Then, $\state_\l = \estate{\state}$, that is, $\state \bisimtypes \state_\l$.
	\end{itemize}
\end{proof}

\section{Proofs of Section 5}

\propdepthcontextksjam*
\begin{proof}
	By induction on the length of the reduction from the initial state to $\state$. Let $\rho:\;\;\tjudg{}{\red u}{\ctxholep{\initty^k_{\uppt}}}\toksjam^n s$ an initial run. If $|\rho|=0$ then the result trivially holds. Now let us consider $\rho:\;\;\tjudg{}{\red u}{\ctxholep{\initty^k_{\uppt}}}\toksjam^{n-1}s'\toksjam s$. We have different cases:
	\begin{itemize}
		\item Case $\tomachdotone$. Then $s'\defeq\, \tjudg{}{\red{\tm\tmtwo}}{\ltyctxp{\initty^p_{\uppt}}}$ and $s\defeq \tjudg{}{\red{\tm}}{\arr{\mty}\ltyctxp{\initty^p_{\uppt}}}$. So, $\depth{\arr{\mty}\ltyctx} = \depth{\ltyctx} =_{\ih} 2(\kparam - p)$.
		\item Case $\tomachdottwo$. Then $s'\defeq\, \tjudg{}{\red{\la\var\tm}}{\arr{\mty}{\ltyctxp{\initty^p_{\uppt}}}}$ and $s\defeq \tjudg{}{\red{\tm}}{\ltyctxp{\initty^p_{\uppt}}}$. So, $\depth{\ltyctx} = \depth{\arr{\mty}\ltyctx} =_{\ih}  2(\kparam - p)$.
		\item Cases $\tomachdotthree$ and $\tomachdotfour$. Identical to the previous ones, by \ih we get that $\depth{\ltyctx} = \depth{\arr{\mty}\ltyctx} =_{\ih}  2(\kparam - p)+1$.
		\item Case $\tomachvar$. Then $s'\defeq\, \tjudg{}{\red\var}{\ltyctxp{\initty^p_{\uppt}}}$ and $s:=\tjudg{}{\blue{\la\var\ctxp{\var}}}{\arr{\mset{\ldots\ltyctxp{\initty^p_{\downpt}}\ldots}}\lintytwo}$. So, $\depth{\arr{\mset{\ldots\ltyctx\ldots}}\lintytwo}=1+ \depth{\ltyctx} =_{\ih} 1 + 2(\kparam - p)$.	
		\item Case $\tomacharg$. Then $s':=\tjudg{}{\blue{\tm}}{\arr{\mset{\ldots\ltyctxp{\initty^p_{\downpt}}\ldots}}\lintytwo}$ and $s\defeq\, \tjudg{}{\red\var}{\ltyctxp{\initty^p_{\uppt}}}$. So, $\depth{\ltyctx}= \depth{\arr{\mset{\ldots\ltyctx\ldots}} \lintytwo} -1 =_{\ih} 2(\kparam -p) + 1 -1$.
		\item Case $\tomachjump$. The type context, as well as the direction, do not change from $s'$ to $s$.  
		\item Case $\tomachbtone$. We have $s'\defeq\,\tjudgi{}{\blue\tmtwo}{\ltyctxp{\initty^{p+1}_\downpt}_i (=\linty_i)}$ and $s\defeq \, \tjudg{}{\red\tm}{\arr{\mset{\myldots \ltyctxp{\initty^p_\uppt}_i\myldots}}{\lintytwo}}$. By \ih on $s'$ we have that $\depth{\linty}=2(\kparam - (p+1)) +1$. So, $\depth{\arr{\mset{\myldots \ltyctxp{\initty_\uppt}_i\myldots}}{\lintytwo}}=\depth{\linty}+1=_{\ih} 2(\kparam - (p+1)) +1 + 1 = 2(\kparam - p)$.
		\item Case $\tomachbttwo$ is identical to the previous one. 
	\end{itemize}
\end{proof}

\simuljmpsiamsjam*
\begin{proof}
	We are in the following case.

	\[\begin{array}{cl}
			\statetwo=\infer{\tjudg{}{\tm\tmtwo}{\lintytwo}} 
			{\tjudg{}{\tm}{\arr{\mset{\ldots 
							\linty_i\ldots}}{\lintytwo}}
				& 
				p\;:=\;\tjudgi{}{\blue\tmtwo}{\ltyctxp{\initty^0_{\downpt}}_i(=\linty_i)}}
			& \tomachjump 
			
		\end{array}\]

		\[\begin{array}{lc}
			 \tomachjump &
			\jump{p} =\, \infer{\tjudg{}{\blue\var}{\ltyctxp{\initty^0_\downpt}_i (=\linty_i)}}{}
			=\state
		
		\end{array}\]

	Since $\statetwo$ is J-exhaustible there exists a judgement test $\statetwo_j:=\tjudg{}{\blue\tmtwo}{{\ltyctxp{\initty}_{i}}^\infty_\downpt,\ctxhole}$ such that $\rho':\statetwo_j\tomachbtone s_j\toksjam^*s_j'\tomachbttwo \state_j:=\tjudg{}{\blue\var}{{\ltyctxp{\initty}_{i}}^\infty_\downpt,\ctxhole}$. Now, $\rho'$ is a run on the Generalized \KSJAM.
	To get a \KSJAM run from $\statetwo$ to $\state$ we have to lift $\rho'$ using \reflemma{type-context-lifting}, obtaining the desired run 
	\[
	\rho\;:\;\tjudg{}{\blue\tmtwo}{\ltyctxp{\initty^\infty_{\downpt}}}\tomachbtone s\toksjam^*s'\tomachbttwo \tjudg{}{\blue\var}{\ltyctxp{\initty^\infty_{\downpt}}}
	\]
	We look now at the moreover part.
	By the J-exhaustible invariant, each state $s''_j=(\tyd, \ruleoc, (\ltyctxp{\initty}{},\ltyctxb), \pol,\infty)$ in $\sigma':s_j\toksjam^*s_j'$ has $\depth \ltyctxb\geq 1$. We call $\sigma$ the sub-run of $\rho$ such that $\sigma:s\toksjam^*s'$. It corresponds to the lifting of $\sigma'$, so each state $s''$ in $\sigma$ has the shape $s''=(\tyd, \ruleoc, (\ctxholep{\initty}, \ltyctxbp \ltyctx ), \pol,\infty)$. Notice that, since the \KSJAM is performing a $\tomachjump$, the state $q'$ has to have a backtracking depth equal to $0$. Therefore, by \refprop{simul-jmp-siam-ksjam}, we have $\depth{\ltyctx} = 2(\kparam - 0) + 1$.
	We can conclude noticing that for each state $s''$ in $\sigma$ we have $\depth{\ltyctxbp{\ltyctx}}=\depth \ltyctxb + \depth \ltyctx > 2\kparam +1$.	
\end{proof}
\section{Proofs of Section 6}
\label{sect:apx6}
\begin{restatable}{lemma}{lemmaantisubto}
\label{l:antisub-to}
	Let $\pi\pof \Gamma\vdash t\left\{x\leftarrow u\right\}:A$ then there exist an integer $n$ and some derivations $\pi'\pof \Gamma,x:[A_1,\dots,A_n]\vdash t:A$ and $\pi''_i\vdash u:A_i$ for $i\leq n$.

	Moreover $\maxto{\pi'}\leq \maxto \pi$ and $\maxto{\pi''_i}\leq \maxto{\pi}$ for each $i\leq n$.
\end{restatable}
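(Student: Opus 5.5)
This is the standard substitution (anti-substitution) lemma for non-idempotent sequence types. I will prove it by structural induction on $t$, reading off $\pi'$ and the $\pi''_i$ from the structure of $\pi$. The moreover part comes for free: every judgment of $\pi'$ and of each $\pi''_i$ will be, up to the typing environment, a judgment already occurring in $\pi$. Since $\maxto{\cdot}$ only looks at right-hand types, the right-hand types of the new derivations form a subset of those of $\pi$, so both maxima are bounded by $\maxto{\pi}$.

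The cases are as follows.

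\textbf{Case $t=x$.} Then $t\isub{x}{u}=u$, and I take $n=1$, $A_1=A$, $\pi''_1=\pi$, and $\pi'$ the axiom $x:[A]\vdash x:A$. The only right-hand type of $\pi'$ is $A$, which already occurs in $\pi$.

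\textbf{Case $t=y\neq x$.} I take $n=0$ and $\pi'=\pi$; there are no $\pi''_i$.

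\textbf{Case $t=\lambda y.t_0$, with $y$ chosen fresh.} If the last rule is $\tylamstar$, I take $n=0$ and $\pi'$ the $\tylamstar$ axiom, whose only type is $\initty$. If the last rule is $\tylam$, I apply the \ih to the premise $\Gamma,y:S\vdash t_0\isub{x}{u}:B$ and re-apply $\tylam$. Freshness of $y$ ensures that $y$ does not occur in $u$, so the environments of the $\pi''_i$ do not mention $y$ and the split of the environment is sound.

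\textbf{Case $t=t_1t_2$.} The last rule is $\tyapp$, with a left premise $\Gamma_0\vdash t_1\isub{x}{u}:[B_1,\ldots,B_m]\to A$ and right premises $\Delta_j\vdash t_2\isub{x}{u}:B_j$ for $j\le m$. I apply the \ih to the left premise and to each of the $m$ right premises. Each application returns a sequence of types assigned to $x$ together with derivations for $u$. I then concatenate these sequences in the order of the premises, which respects the non-commutative $\uplus$ used in the $\tyapp$ rule. Rebuilding the $\tyapp$ rule yields $\Gamma,x:[A_1,\ldots,A_n]\vdash t:A$, where $n$ is the total number of types collected. The conclusion type $A$ is unchanged, and the premise types are exactly those of $\pi$, so the arrow bound again follows by the \ih.

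The main obstacle is bookkeeping rather than anything conceptual. First, I have to ensure that the environment $\Gamma$ of $\pi$ splits correctly into the part for $t$ and the parts for the copies of $u$. This relies on $\uplus$ being associative, and on its order matching the order in which occurrences of $x$ are traversed. Second, I have to justify that derivations obtained by the \ih are genuine subderivations of $\pi$, up to environment, so that no new right-hand type is ever introduced. Both points follow by inspecting the rules, since the construction only ever reuses judgments that already appear in $\pi$.
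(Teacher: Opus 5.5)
Your proof is correct and follows the paper's argument. It uses the same case split on the shape of $t$ (the paper nominally inducts on $\pi$ but splits exactly this way) and the same rebuilding through $\tylam$ and $\tyapp$, with the sequences for $x$ concatenated in premise order; your invariant that no new right-hand type is ever created is a compact version of the paper's case-by-case bounds on $\maxto{\cdot}$.

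The one point to make explicit, as the paper does, is that $u$ is closed (weak setting). This forces $\Gamma=\emptyset$ in the case $t=x$, so your axiom really has the environment $\Gamma,x:[A]$ the statement demands, and it trivialises the environment bookkeeping. Without it the stated form is false. The general split $\Gamma=\Gamma'\uplus\biguplus_i\Delta_i$ you allude to would also hold only up to interleaving of sequences, since $\uplus$ is associative but not commutative.
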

\begin{proof}
	By induction on $\pi$.  In the axiom case we have two subcases.

	If $t=x$ then we have $\pi\pof\Gamma\vdash u:A$ and, since we are in a weak reduction setting, $u$ is closed so $\Gamma=\emptyset$.
	We then can set $n=1$ and $\pi''_1=\pi$, the  condition on $\pi''_1$ trivially holds.
	We also set $\pi'\pof \infer{\tjudg{\var:\mset{\linty}}{\var}{\linty}}{}$ and we obtain that $\maxto{\pi'}\leq \maxto \pi$ since $A$ appears already in $\pi$.

	If $t=y$ with $x\neq y$, than $\pi$ has the shape $\pi\pof\infer{\tjudg{y:\mset{\linty}}{y}{\linty}}{}$. We can then set $\pi'=\pi$ and $n=0$.\\

	The case $\tylamstar$ is similar. We take $\pi'$ as $\infer{\tjudg{}{t}{\star}}{}$ and $n=0$.\\

	For the $\tyapp$ case, the derivation $\pi$ has the following shape. 

	\[
		\infer{\pi\pof \Gamma \vdash sw \left\{x\leftarrow u\right\}: B}
		{\phi\pof \Gamma_1\vdash s\left\{x\leftarrow u\right\}:[B_1,\dots,B_k]\to B  &
		\mset{\gamma_i\pof \Delta_i\vdash w\left\{x\leftarrow u\right\}:B_i }_{i\in\mset{1,\ldots,k}}} 
	\]
	With $\Gamma=\Gamma_1\uplus\biguplus_{i\in\mset{1,\ldots,k}}\Delta_i$.
	We can apply the \ih on $\phi$ we get the existence of a $\phi'$, an integer $n_\phi$ and some derivations $\phi''_j$ for $j\leq n_\phi$.
	By applying the \ih on $\gamma_i$ for $i\leq k$ we get some $\gamma'_i$, some integers $n_{\gamma_i}$ and finally some derivations $\gamma''_{i,j}$ for $j\leq n_{\gamma_i}$.
	We can then set $n=n_\phi + \sum_{i\leq k}n_{\gamma_i}$ and take as $\pi''_i$ the union of all the $\phi''_j$ and $\gamma''_{i,j}$.
	We now have to check that the condition holds on every $\pi''_i$.
	By \ih we have that $\maxto {\phi''_j} \leq \maxto \phi$ for $j\leq n_\phi$ and $\maxto {\gamma''_{i,j}}\leq \maxto {\gamma_i}$ for $j\leq n_{\gamma_i}$ and $i\leq k$. 
	It is easy to see that $\maxto{\pi}=\max_{i\leq k}(\maxto \phi,\maxto{\gamma_i})$ and we can then conclude 
	\[
		\maxto{\pi''_i}\leq \max_{i\leq k}(\maxto{\phi},\maxto{\gamma_i})=\maxto \pi
	\]

	We now construct the derivation $\pi'$ as follows.
	\[
		\infer{\pi'\pof \Gamma,x:S\uplus \biguplus_{i\leq k}S_i \vdash sw: B}
		{\phi'\pof \Gamma_1,x:S\vdash s:[B_1,\dots,B_k]\to B  &
		\mset{\gamma'_i\pof \Delta_i,x:S_i\vdash w:B_i}_{i\in\mset{1,\ldots,k}}} 
	\] 
	By \ih we have that $\maxto{\phi'}\leq \maxto \phi$ and $\maxto{\gamma_i'}\leq \maxto{\gamma_i}$ for each $i\leq k$. So we can conclude with
	\[
		\maxto{\pi'}=\max_{i\leq k}(\maxto{\phi'},\maxto{\gamma'_i})\leq \max_{i\leq k}(\maxto{\phi},\maxto{\gamma_i})=\maxto{\pi}
	\] 

	We now check the case $\tylam$ with $\pi$ having the following shape.
	\[
		\infer{\pi\pof \Gamma\vdash \lambda y.s:[B_1,\dots,B_k]\to B}
		{\phi\pof \Gamma,y:[B_1,\dots,B_k] \vdash s: B } 
	\]
	By \ih on $\phi$ we get a $\phi'$, an integer $n_\phi$ and some $\phi''_j$ for $j\leq n_\phi$ such that $\maxto{\phi'}\leq \maxto \phi$ and $\maxto{\phi''_i}\leq \maxto \phi$. We set $n=n_\phi$ and $\pi''_i=\phi''_i$ and we have $\maxto{\pi''_i}=\maxto{\phi''_i}\leq \maxto{\phi}\leq \max_{i\leq k}(1+\sizeto B,\sizeto{B_i},\maxto{\phi})=\maxto{\pi}$. 

	We now contstuct $\pi'$ as follows.
	\[
		\infer{\pi'\pof \Gamma, x:[A_1,\dots,A_n]\vdash \lambda y.s:[B_1,\dots,B_k]\to B}
		{\phi'\pof \Gamma, x:[A_1,\dots,A_n],y:[B_1,\dots,B_k] \vdash s: B } 
	\]
	By \ih we have $\maxto{\phi'}\leq \maxto{\phi}$. Also $\maxto{\pi'}= \max(\sizeto{[B_1,\dots,B_k]\to B},\maxto{\phi'})$ and $\maxto{\pi}= \max(\sizeto{[B_1,\dots,B_k]\to B},\maxto{\phi})$. So $\maxto{\pi'}\leq \maxto{\pi}$.
\end{proof}

\begin{lemma}[Subject Expansion, Enhanced I]
\label{l:subj-exp-to}
	Let $t'$ be a closed $\lambda$-term, $\pi\pof \vdash t:A$ and $t' \towh t$. Then, there exists  $\pi'\pof \vdash t':A$ such that $\maxto{\pi'}\leq 1 + \maxto{\pi}$.  
\end{lemma}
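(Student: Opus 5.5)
We prove the statement by induction on the weak head context in which the redex of $t' \towh t$ is fired. Concretely, $t' = (\lambda x.s)\,u\,r_1\ldots r_h$ and $t = s\isub{x}{u}\,r_1\ldots r_h$, so we induct on $h$.

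\emph{Base case} ($h=0$). Here $\pi\pof\vdash s\isub x u:A$. We apply \reflemma{antisub-to} and obtain:
\begin{itemize}
\item $\pi'_s\pof x:[A_1,\dots,A_n]\vdash s:A$, using that $s\isub x u$ is closed so the environment is empty;
\item derivations $\pi''_i\pof\vdash u:A_i$;
\item the bounds $\maxto{\pi'_s}\leq\maxto\pi$ and $\maxto{\pi''_i}\leq\maxto\pi$.
\end{itemize}
We then build $\pi'$ by applying $\tylam$ to $\pi'_s$, which gives $\vdash\lambda x.s:[A_1,\dots,A_n]\to A$, followed by $\tyapp$ with the $\pi''_i$.

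The only new type on the right of $\vdash$ is $[A_1,\dots,A_n]\to A$; the other conclusion types already occur in $\pi$ or in the subderivations. Its size is
\[
\sizeto{[A_1,\dots,A_n]\to A}=\max(\max_i\sizeto{A_i},1+\sizeto A).
\]
Each $A_i$ is the type of a conclusion of some $\pi''_i$, so $\sizeto{A_i}\leq\maxto{\pi''_i}\leq\maxto\pi$. Likewise $\sizeto A\leq\maxto\pi$, since $A$ is the final type of $\pi$. Hence the new type has size at most $1+\maxto\pi$, and $\maxto{\pi'}$, being the maximum over all its conclusion types, is at most $1+\maxto\pi$.

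\emph{Inductive case} ($h>0$). The derivation $\pi$ ends with a $\tyapp$ rule whose left premise is $\phi\pof\vdash s\isub x u\,r_1\ldots r_{h-1}:[B_1,\dots,B_m]\to A$ and whose right premises are $\gamma_j\pof\vdash r_h:B_j$. The environments are empty because all terms involved are closed. We apply the induction hypothesis to $\phi$, which yields $\phi'$ for $(\lambda x.s)u\,r_1\ldots r_{h-1}$ with $\maxto{\phi'}\leq 1+\maxto\phi$. Reassembling $\phi'$ with the same $\gamma_j$ through $\tyapp$ gives $\pi'$.

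For the bound, $\maxto{\pi'}$ is the maximum of $\maxto{\phi'}$, the $\maxto{\gamma_j}$, and $\sizeto A$. Each of these is at most $1+\maxto\pi$, because $\maxto\phi\leq\maxto\pi$ and $\maxto{\gamma_j}\leq\maxto\pi$.

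\emph{Main point to check.} The key step is the base case, specifically that the sequence $[A_1,\dots,A_n]$ does not push the bound up. This holds because each $A_i$ is itself the conclusion type of some $\pi''_i$, and \reflemma{antisub-to} controls all of them. A secondary check: when $n=0$ the argument $u$ is typed by no premise, the $\tyapp$ rule with an empty sequence is still legal, and the new type $[\,]\to A$ has size $1+\sizeto A\leq 1+\maxto\pi$.
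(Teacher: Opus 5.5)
Your proposal is correct and follows essentially the same route as the paper. Your induction on the number $h$ of trailing arguments is the paper's induction on $t'\towh t$, with base case $(\lambda x.s)u\towh s\{x\leftarrow u\}$ and congruence case $s'w\towh sw$. In the base case you likewise invoke \reflemma{antisub-to} and observe that the only new conclusion type, $[A_1,\dots,A_n]\to A$, has size at most $1+\maxto{\pi}$. In the congruence case you likewise reassemble the $\tyapp$ rule around the derivation given by the induction hypothesis.
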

\begin{proof}
	By induction on $t'\towh t$.\\

	In the base case we have $t'=(\lambda x.s)u\towh s\left\{x\leftarrow u\right\}=t$ and the derivation $\pi\pof\vdash s\left\{x\leftarrow u\right\}:\star$. Notice that $u$ is closed since $t'$ is closed. By \reflemma{antisub-to} we have some derivation $\phi'$ and $\phi''_i$ for $i\leq n$ such that $\maxto{\phi'}\leq \maxto{\pi}$ and $\maxto{\phi''_i}\leq \maxto{\pi}$. We can then construct the derivation $\pi'$ as in the following.
	\[
		\infer{\pi'\pof\vdash (\lambda x.s)u: A}
		{\infer{\vdash \lambda x.s:[A_1,\dots,A_n]\to A }{\phi'\pof x:[A_1,\dots,A_n]\vdash t: A}   &
		\mset{\phi_i''\pof\vdash u:A_i}_{i\in\mset{1,\ldots,n}} } 
	\]
	Notice that, by definition, $\sizeto{[A_1,\dots,A_n]\to A}=\max_{i\leq n}(1+\sizeto A, \sizeto{A_i})\leq \max_{i\leq n}(1+\maxto{\phi'},\maxto{\phi_i''})$.
	The maximal size of the new derivation $\pi'$ is 
	\begin{align*}
		\maxto{\pi'}=\max_{i\leq n}(\sizeto{A},\sizeto{\mset{A_1,\dots,A_n}\to A},\maxto{\phi'},\maxto{\phi''_i})\\
		\leq \max_{i\leq n}(\maxto{\phi'},1+\maxto{\phi'},\maxto{\phi''_i},\maxto{\phi'},\maxto{\phi''_i}) \\
		=\max_{i\leq n}(\maxto{\phi'}+1,\maxto{\phi_i''})\leq \max(\maxto{\pi}+1,\maxto{\pi})\leq \maxto{\pi}+1
	\end{align*}

	In the induction case we have $s'w\towh sw$. The derivation $\pi$ is as follows.
	\[
		\infer{\pi\pof \vdash sw: A}
		{\phi\;\pof\vdash s:[A_1,\dots,A_n]\to A   &
		\mset{\gamma_i\;\pof \vdash w:A_i}_{i\in\mset{1,\ldots,n}} } 
	\]
	By \ih on $\phi$ we get a derivation $\phi'$ such that $\maxto{\phi'}\leq \maxto{\phi}+1$. We then construct $\pi'$. 
	\[
		\infer{\pi'\pof \vdash s'w: A}
		{\phi'\;\pof\vdash s':[A_1,\dots,A_n]\to A   &
		\mset{\gamma_i\;\pof \vdash w:A_i}_{i\in\mset{1,\ldots,n}} } 
	\]
	Notice that $\maxto{\pi}=\max_{i\leq n}(\maxto{\phi},\maxto{\gamma_i})$.
	We obtain 
	\begin{align*}
	\maxto{\pi'}=\max_{i\leq n}(\maxto{\phi'},\maxto{\gamma_i})\leq \max_{i\leq n}(\maxto{\phi}+1,\maxto{\gamma_i})
	\\\leq \max_{i\leq n}(\maxto{\phi},\maxto{\gamma_i})+1
	=\maxto{\pi}+1
	\end{align*} 
\end{proof}

\propboundto*
\begin{proof}
	By induction on $n$. If $n=0$ then $t=\lambda x.u$ and $\pi\pof \infer{\tjudg{}{\lambda x.u}{\star}}{}$. In this case $\pi$ is a derivation made by only the rule $\tylamstar$, so $\maxto{\pi}=0$.

	Now let $t \towh t' \towh^{n-1} \lambda x.u$. By \ih it exists a $\pi'$ such that $\pi'\pof\vdash t':\star$ and $\maxto{\pi'}\leq n-1$, finally by \reflemma{subj-exp-to}, we obtain $\maxto{\pi}\leq\maxto{\pi'} +1\leq n$.
\end{proof}

\begin{restatable}{lemma}{lemmaantisubm}
	\label{l:antisub-m}
	Let $\pi\pof \Gamma\vdash t\left\{x\leftarrow u\right\}:A$ then there exist $\pi'\pof \Gamma,x:[A_1,\dots,A_n]\vdash t:A$, an integer $n$ and $\pi''_i\vdash u:A_i$ for $i\leq n$. Moreover 
	\begin{enumerate}
	\item $\maxmult{\pi'}\leq \maxmult \pi$
	\item $\maxmult{\pi''_i}\leq \maxmult{\pi}$ for each $i\leq n$
	\item $\size{\pi}=\size{\pi'}+ \sum_{i\leq n}\size{\pi''_i} - n$
	\item $n\leq \size \pi$
	\end{enumerate}  
\end{restatable}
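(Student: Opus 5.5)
We follow the structure of the proof of \reflemma{antisub-to} and proceed by induction on $\pi$. Besides the $\maxto{\cdot}$ bookkeeping, we now also track the sequence measure $\maxmult{\cdot}$, the size identity (3) and the bound (4). We take $\size{\pi}$ to count rule occurrences. The decomposition is built exactly as before: the axiom on $x$ is replaced by a derivation of $u$, axioms on other variables and $\tylamstar$ are kept with $n=0$, and $\tyapp$/$\tylam$ are rebuilt from the inductive hypotheses. In particular, $\pi'$ has the same right-hand types as $\pi$ at every judgment except those discarded inside substituted copies of $u$. The environment of $\pi'$ gains $x:[A_1,\dots,A_n]$, but $\maxmult{\cdot}$ only looks at types on the right of $\vdash$, so this does not matter.

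\emph{Base cases.} If $t=x$, set $n=1$, $\pi''_1=\pi$, and let $\pi'$ be the axiom $x:[A]\vdash x:A$. Then (1) and (2) hold because $A$ already occurs in $\pi$. For (3), $\size{\pi'}+\size{\pi''_1}-1=1+\size\pi-1=\size\pi$, and (4) holds since $1\le\size\pi$. If $t=y\neq x$ or the last rule is $\tylamstar$, take $\pi'=\pi$ and $n=0$; everything is immediate.

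\emph{Case $\tyapp$.} Apply the \ih to $\phi$ and to each $\gamma_i$, and concatenate the families of derivations for $u$, so that $n=n_\phi+\sum_i n_{\gamma_i}$. Items (1) and (2) follow as in \reflemma{antisub-to}, because $\maxmult{\cdot}$ of a derivation is the maximum over the subderivations and the conclusion type $B$. The conclusion type $B$ is the same in $\pi$ and $\pi'$. For (3), we sum the identities given by the \ih:
\[
\size\pi=1+\size\phi+\textstyle\sum_i\size{\gamma_i}
=1+\size{\phi'}+\sum_i\size{\gamma'_i}+\sum_j\size{\pi''_j}-n
=\size{\pi'}+\sum_j\size{\pi''_j}-n .
\]
Item (4) follows by adding up $n_\phi\le\size\phi$ and $n_{\gamma_i}\le\size{\gamma_i}$, which gives $n\le\size\pi-1\le\size\pi$.

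\emph{Case $\tylam$.} This is direct from the \ih. The extra binding for $x$ in the environment does not affect $\maxmult{\cdot}$, and both $\pi$ and $\pi'$ add exactly one rule, so (3) and (4) carry over.

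The only mildly delicate point is the size identity (3) in the base case, where the single axiom on $x$ is replaced by all of $\pi$. This is what fixes the ``$-n$'' correction term: each substituted copy of $u$ replaces exactly one axiom. Item (4) then follows directly, because each $\pi''_i$ has size at least $1$ and the size identity accounts for all of them.
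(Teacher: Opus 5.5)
Your proposal is correct and follows essentially the same route as the paper: induction on $\pi$ with the case split $t=x$, $t=y\neq x$, $\tylamstar$, $\tyapp$, $\tylam$, the same reconstruction of $\pi'$ and of the $\pi''_i$, and the same summation of the inductive identities to get (3) and (4). The one detail you leave implicit is that, in the case $t=x$, the axiom $x:[A]\vdash x:A$ has the required environment $\Gamma,x:[A]$ only because $\Gamma=\emptyset$, which the paper justifies by $u$ being closed (as it is in the closed weak-head setting where the lemma is applied).
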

\begin{proof}
	By induction on $\pi$.  In the axiom case we have two subcases.

	If $t=x$ then we have $\pi\pof\Gamma\vdash u:A$ and, since we are in a weak reduction setting, $u$ is closed so $\Gamma=\emptyset$. We then can set $n=1$ and $\pi''_1=\pi$, the  condition $(2)$ trivially holds. We also set $\pi'\pof \infer{\tjudg{\var:\mset{\linty}}{\var}{\linty}}{}$ and we obtain that $\maxto{\pi'}\leq \maxto \pi$ since $A$ appears already in $\pi$, so $(1)$ is proved.
	We have $n=\size{\pi'}=1$ and $\size \pi > 1$ since $u$ is closed and we are not in the $\tylamstar$ case. With this we get easely $(3)$ and $(4)$. 

	If $t=y$ and $x\neq y$, than $\pi$ has the shape $\pi\pof \infer{\tjudg{y:\mset{\linty}}{y}{\linty}}{}$. We can then set $\pi'=\pi$ and $n=0$.\\

	The case $\tylamstar$ is similar. We take $\pi'$ as $\infer{\tjudg{}{\lambda x.u}{\star}}{}$ and $n=0$.\\

	For the $\tyapp$ case the derivation $\pi$ has the following shape. 

	\[
		\infer{\pi\pof \Gamma \vdash sw \left\{x\leftarrow u\right\}: B}
		{\phi\pof \Gamma_1\vdash s\left\{x\leftarrow u\right\}:[B_1,\dots,B_k]\to B  &
		\mset{\gamma_i\pof \Delta_i\vdash w\left\{x\leftarrow u\right\}:B_i }_{i\in\mset{1,\ldots,k}}} 
	\]
	With $\Gamma=\Gamma_1\uplus\biguplus_{i\in\mset{1,\ldots,k}}\Delta_i$.
	We can apply the \ih on $\phi$ we get the existence of a $\phi'$, an integer $n_\phi$ and some derivations $\phi''_j$ for $j\leq n_\phi$.
	By applying the \ih on $\gamma_i$ for $i\leq k$ we get some $\gamma'_i$, some integers $n_{\gamma_i}$ and finally some derivations $\gamma''_{i,j}$ for $j\leq n_{\gamma_i}$.
	We can then set $n=n_\phi +\sum_{i\leq k}n_{\gamma_i}$ and take as $\pi''_i$ the union of all the $\phi''_j$ and $\gamma''_{i,j}$.
	We now have to check that the condition $(2)$ holds on every $\pi''_i$.
	By \ih we have that $\maxmult {\phi''_j} \leq \maxmult \phi$ for $j\leq n_\phi$ and $\maxmult {\gamma''_{i,j}}\leq \maxmult {\gamma_i}$ for $j\leq n_{\gamma_i}$ and $i\leq k$. 
	It is easy to see that for each $j\leq n$ 
	\[
	\maxmult{\pi''_j}\leq \max_{i\leq k}(\maxmult \phi,\maxmult{\gamma_i})=\maxmult{\pi}
	\]
	
	We now construct the derivation $\pi'$ as follows.
	\[
		\infer{\pi'\pof \Gamma,x:S\uplus \biguplus_{i\leq n}S_i \vdash sw: B}
		{\phi'\pof \Gamma_1,x:S\vdash s:[B_1,\dots,B_k]\to B  &
		\mset{\gamma'_i\pof \Delta_i,x:S_i\vdash w:B_i}_{i\in\mset{1,\ldots,k}}} 
	\] 
	By \ih we have that $\maxmult{\phi'}\leq \maxmult \phi$ and $\maxmult{\gamma_i'}\leq \maxmult{\gamma_i}$ for each $i\leq k$. It is easy to conclude $(4)$ with 
	\[
	\maxmult{\pi'}=\max_{i\leq k}(\maxmult{\phi'},\maxmult{\gamma'_i})\leq \max_{i\leq k}(\maxmult \phi,\maxmult{\gamma_i})=\maxmult{\pi}
	\]

	$(3)$ By \ih we have $\size \phi = \size{\phi'} + \sum_{j\leq n_\phi}\size{\phi''_j} - n_\phi$ and, for each $i\leq k$, $\size{\gamma_i} = \size{\gamma_i'} + \sum_{j\leq n_\phi}\size{\gamma''_{i,j}} - n_{\gamma_i}$. We obtain the following.
	\begin{align*}
		\size \pi = 1 + \size \phi + \sum_{i\leq k} \size{\gamma_i}
		\\=1 + \size{\phi'} + \sum_{j\leq n_\phi}\size{\phi''_j} - n_\phi + \sum_{i\leq k}(\size{\gamma_i'} + \sum_{j\leq n_\phi}\size{\gamma''_{i,j}} - n_{\gamma_i})\\
		= 1 + \size{\phi'}+\sum_{i\leq k}\size{\gamma'_i} + \sum_{j\leq n_\phi}\size{\phi''_j} + \sum_{i\leq k}\sum_{j\leq n_{\gamma_i}}\size{\gamma''_{i,j}} - (n_\phi + \sum_{i\leq k}n_{\gamma_i})
		\\ = \size{\pi'} + \sum_{i\leq n}\size{\pi''_i} - n
	\end{align*} 

	To prove $(4)$ we have that by \ih $n_\phi \leq \size \phi$ and for each $i\leq k$ we have $n_{\gamma_i}\leq \size{\gamma_i}$. So
	\[
		n=n_\phi +\sum_{i\leq k}n_{\gamma_i}\leq \size \phi + \sum_{i\leq k}\size{\gamma_i}+1=\size \pi
	\]

	We now check the case $\tylam$ with $\pi$ having the following shape.
	\[
		\infer{\pi\pof \Gamma\vdash \lambda y.s:[B_1,\dots,B_k]\to B}
		{\phi\pof \Gamma,y:[B_1,\dots,B_k] \vdash s: B } 
	\]
	By \ih on $\phi$ we get a $\phi'$, an integer $n_\phi$ and some $\phi''_j$ for $j\leq n_\phi$ such that $\maxmult{\phi'}\leq \maxmult \phi$ and $\maxmult{\phi''_i}\leq \maxmult \phi$. We set $n=n_\phi$ and $\pi''_i=\phi''_i$ and we have 
	\[
	\maxmult{\pi''_i}=\maxmult{\phi''_i}\leq \maxmult{\phi}\leq \max_{i\leq k}(k,\sizemult{B_i},\sizemult B,\maxmult{\phi})=\maxmult{\pi}
	\] 

	We now contstuct $\pi'$ as follows.
	\[
		\infer{\pi'\pof \Gamma, x:[A_1,\dots,A_n]\vdash \lambda y.s:[B_1,\dots,B_k]\to B}
		{\phi'\pof \Gamma, x:[A_1,\dots,A_n],y:[B_1,\dots,B_k] \vdash s: B } 
	\]
	By \ih we have $\maxmult{\phi'}\leq \maxmult{\phi}$ so we can conlude $(1)$ with the follwing.
	\begin{align*}
		\maxto{\pi'}= \max(\sizemult{[B_1,\dots,B_k]\to B},\maxmult{\phi'})
		\\\leq \max(\sizeto{[B_1,\dots,B_k]\to B},\maxto{\phi}) =\maxto{\pi}
	\end{align*}

	$(3)$ By \ih we have $\size{\phi} = \size{\phi'} + \sum_{j\leq n_\phi}\size{\phi''_j} - n_\phi$. So 
	\[
		\size{\pi} = 1 + \size{\phi} = 1 + \size{\phi'} + \sum_{j\leq n_\phi}\size{\phi''_j} - n_\phi = \size{\pi'} + \sum_{j\leq n}\size{\pi''_j} - n
	\]

	To prove $(4)$ we have that by \ih $n_\phi\leq \size \phi$ so $n=n_\phi\leq \size \phi+1=\size \pi$.

\end{proof}

\begin{lemma}[Subject Expansion, Enhanced II]
	\label{l:subj-exp-m}
	Let  $t$ be a closed term such that $\pi\pof \vdash t:\linty$,  and $\maxmult{\pi}\leq \size{\pi}$. If $t' \towh t$ then there exists  $\pi'\pof \vdash t':\linty$ such that $\maxmult{\pi'}\leq \size {\pi'}$.
\end{lemma}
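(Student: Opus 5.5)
The plan is to mirror the proof of \reflemma{subj-exp-to}, proceeding by induction on the derivation of $t'\towh t$, but now tracking $\maxmult{\cdot}$ together with the size of derivations. The invariant $\maxmult{\pi}\leq\size{\pi}$ is preserved because expansion never shrinks the derivation, while every newly created sequence has length at most the size of the derivation.

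\emph{Base case.} Here $t'=(\la\var s)\tmtwo\towh s\isub\var\tmtwo=t$. Since $t'$ is closed, $\tmtwo$ is closed. We apply \reflemma{antisub-m} to $\pi$ and obtain $\phi'\pof\var:\mset{\linty_1,\ldots,\linty_n}\vdash s:\linty$ and $\phi''_i\pof\vdash\tmtwo:\linty_i$, with $\maxmult{\phi'},\maxmult{\phi''_i}\leq\maxmult{\pi}$, $\size{\pi}=\size{\phi'}+\sum_i\size{\phi''_i}-n$, and $n\leq\size\pi$. We then build $\pi'$ from a $\tylam$ rule on $\phi'$, typing $\la\var s$ with $\arr{\mset{\linty_1,\ldots,\linty_n}}\linty$, followed by a $\tyapp$ rule with premises $\phi''_i$.

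The only new types in $\pi'$ are this arrow type and $\linty$. Their $\sizemult{\cdot}$ is at most $\max(n,\maxmult{\phi'},\maxmult{\phi''_i})$, since the $\linty_i$ and $\linty$ already occur in $\phi'$ and $\phi''_i$ as right-hand types. Hence $\maxmult{\pi'}\leq\max(n,\maxmult{\pi})$. For the size, $\size{\pi'}=\size{\phi'}+\sum_i\size{\phi''_i}+2=\size{\pi}+n+2\geq\size{\pi}$. Since $n\leq\size\pi$ and $\maxmult\pi\leq\size\pi$ by hypothesis, we get $\maxmult{\pi'}\leq\size\pi\leq\size{\pi'}$.

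One point needs care: if $\linty_i$ occurs in the environment of $\phi'$ only at a $\tyvar$ axiom, it also appears on the right of that axiom. If $n=0$, the arrow $\arr{\emmset}{\linty}$ contributes nothing new. The statement uses a general type $\linty$ rather than $\initty$, and this is harmless.

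\emph{Inductive case.} Here $s'w\towh sw$ and $\pi$ ends in $\tyapp$ with premises $\phi\pof\vdash s:\arr{\mset{\linty_1,\ldots,\linty_n}}\linty$ and $\gamma_i$. To apply the induction hypothesis to $\phi$ we need $\maxmult\phi\leq\size\phi$, which we do not know a priori. This is the main obstacle. I would resolve it by strengthening the statement to: for any bound $M$ with $\maxmult\pi\leq M$, there exists $\pi'$ with $\maxmult{\pi'}\leq\max(M,\size{\pi'})$ and $\size{\pi'}\geq\size\pi$. The base case gives this directly: the new sequence has length $n\leq\size\pi\leq\size{\pi'}$, and every other type is bounded by $M$.

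For the inductive case, apply the strengthened induction hypothesis to $\phi$ with $M=\maxmult\pi$, obtaining $\phi'$. Reassemble $\pi'$ with the same $\gamma_i$. Then $\maxmult{\pi'}\leq\max(\maxmult\pi,\size{\phi'})\leq\max(\maxmult{\pi},\size{\pi'})$, and $\size{\pi'}=1+\size{\phi'}+\sum_i\size{\gamma_i}\geq\size\pi$. Instantiating $M=\size\pi$, which is allowed by the hypothesis $\maxmult\pi\leq\size\pi$, yields $\maxmult{\pi'}\leq\max(\size\pi,\size{\pi'})=\size{\pi'}$, which is the stated lemma.
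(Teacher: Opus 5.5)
Your proof is correct. Its skeleton is the same as the paper's: induction on $t'\towh t$, the base case handled through \reflemma{antisub-m} (so that $\size{\pi'}=\size\pi+n+2$ and $\maxmult{\pi'}\le\max(n,\maxmult\pi)\le\size\pi$), and the inductive case obtained by rebuilding the final $\tyapp$ rule around the expanded left premise.

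The real difference is the induction hypothesis.
\begin{itemize}
\item The paper strengthens the statement only by adding $\size\pi\le\size{\pi'}$, and in effect aims at $\maxmult{\pi'}\le\size\pi$. It then applies the induction hypothesis to the left premise $\phi$ without checking the required hypothesis $\maxmult\phi\le\size\phi$.
\item That hypothesis can genuinely fail. Take $t'=(\la\vartwo\vartwo)(\la\var\var)\,w$ with $w=\la{z}z(z(\cdots(z(\la\var\var))\cdots))$, containing $m$ occurrences of $z$. Type $w$ with $\lintyb=\arr{\mset{\arr{\mset{\initty}}{\initty},\ldots,\arr{\mset{\initty}}{\initty}}}{\initty}$, a sequence of length $m$. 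Then $\phi\pof\vdash\la\var\var:\arr{\mset{\lintyb}}{\lintyb}$ has $\size\phi=2$ but $\maxmult\phi=m$. Meanwhile $\maxmult\pi\le\size\pi$ still holds, because the derivation of $w$ is large.
\item Your parametrisation by an external bound $M\ge\maxmult\pi$ is exactly what repairs this. For the left premise one always has $\maxmult\phi\le\maxmult\pi\le M$, so the strengthened hypothesis applies. Your argument is therefore not only correct but fills a gap in the published proof.
\end{itemize}

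An equivalent formulation that needs no hypothesis at all would also work: for every $\pi$ there is $\pi'$ with $\size\pi\le\size{\pi'}$ and $\maxmult{\pi'}\le\max(\maxmult\pi,\size\pi)$. The stated lemma then follows immediately from $\maxmult\pi\le\size\pi$.
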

\begin{proof}
	We will actually prove a stronger statement, showing that we also have $\size\pi\leq \size{\pi'}$. By induction on $t'\towh t$.\\

	In the base case we have $t'=(\lambda x.s)u\towh s\left\{x\leftarrow u\right\}=t$ and the derivation $\pi\pof\vdash s\left\{x\leftarrow u\right\}:\star$. Notice that $u$ is closed since $t'$ is closed. By \reflemma{antisub-m} we have some derivation $\phi'$ and $\phi''_i$ for $i\leq n$ such that $\maxmult{\phi'}\leq \maxmult{\pi}$ and $\maxmult{\phi''_i}\leq \maxmult{\pi}$. We can then construct the derivation $\pi'$ as in the following.
	\[
		\infer{\pi'\pof\vdash (\lambda x.s)u: A}
		{\infer{\vdash \lambda x.s:[A_1,\dots,A_n]\to A }{\phi'\pof x:[A_1,\dots,A_n]\vdash t: A}   &
		\mset{\phi_i''\pof\vdash u:A_i}_{i\in\mset{1,\ldots,n}} } 
	\]
	We have that $\size{\pi}=\size{\phi'}+\sum_{i\leq n}\size{\phi''_i}-n \leq \size{\phi'}+\sum_{i\leq n}\size{\phi''_i}+2=\size{\pi'}$.
	Notice that by definition $\sizemult{A}\leq \maxmult{\phi'}$ and $\sizemult{A_i}\leq \maxmult{\phi''_i}$ for each $i\leq n$ and by \reflemma{antisub-m} we have that $n\leq \size \pi$. With these inequalities we can prove the following.
	
	\begin{align*}
		\maxmult{\pi'}=max_{i\leq n}(\sizemult{[A_1,\dots,A_n]\to A},\maxmult{\phi'},\maxmult{\phi''_i})\\
		\leq \max_{i\leq n}(n,\sizemult{A_i}, \sizemult{A},\maxmult{\phi'},\maxmult{\phi''_i})\\\leq \max_{i\leq n}(\size \pi,\sizemult{A_i}, \sizemult{A},\maxmult{\phi'},\maxmult{\phi''_i}) 
		\\\leq
	\max_{i\leq n}(\size \pi,\maxmult{\phi'},\maxmult{\phi''_i})\leq
	\max(\size \pi, \maxmult \pi)=\size \pi \leq \size{\pi'}
	\end{align*} 
	In the induction case we have $s'w\towh sw$. The derivation $\pi$ is as follows.
	\[
		\infer{\pi\pof \vdash sw: A}
		{\phi\;\pof\vdash s:[A_1,\dots,A_n]\to A   &
		\mset{\gamma_i\;\pof \vdash w:A_i}_{i\in\mset{1,\ldots,n}} } 
	\]
	Notice that $\maxmult{\pi}=\max_{i\leq n}(\maxmult{\phi},\maxmult{\gamma_i})$ and obviously $\size \phi \leq \size \pi$.
	By \ih on $\phi$ we get a derivation $\phi'$ such that $\maxmult{\phi'}\leq \size \phi$ and $\size \phi \leq \size{\phi'}$. We then construct $\pi'$. 
	\[
		\infer{\pi'\pof \vdash s'w: A}
		{\phi'\;\pof\vdash s':[A_1,\dots,A_n]\to A   &
		\mset{\gamma_i\;\pof \vdash w:A_i}_{i\in\mset{1,\ldots,n}} } 
	\]
	By hypothesis we also have that $\maxmult{\pi}\leq \size\pi$. We can then prove the following inequalities.
	\begin{align*}
	\maxmult{\pi'}=\max_{i\leq n}(\maxmult{\phi'},\maxmult{\gamma_i})\leq 
	\max_{i\leq n}(\size \phi',\maxmult{\gamma_i})\leq \max(\size \phi, \maxmult{\pi})\\
	\leq \max(\size \pi, \maxmult{\pi}) = \size \pi \leq \size{\pi'}
	\end{align*}
	Finally, $\size \pi=\size \phi + \sum_{i\leq n}\size{\gamma_i}+1\leq \size{\phi'} + \sum_{i\leq n}\size{\gamma_i}+1=\size{\pi'}$.
\end{proof}

\propboundm*
\begin{proof}
	By induction on $n$. If $n=0$ then $t=\lambda x.u$ and we can construct a derivation $\pi\pof \infer{\tjudg{}{\lambda x.u}{\star}}{}$. In this case $\pi$ is a derivation made by only the rule $\tylamstar$, so $\maxmult{\pi}=0=\size \pi$.

	Now let $t \towh t' \towh^{n-1} \lambda x.u$. By \ih it exists a $\pi'$ such that $\pi'\pof t':\star$ and $\maxmult{\pi'}\leq \size{\pi'}$. We can then apply \reflemma{subj-exp-m}, obtaining a derivation $\pi$ such that $\pi\pof t:\star$ and $\maxmult{\pi}\leq \size\pi$.
\end{proof}

\lemmaboundtypesizek*
\begin{proof}
	 We start by noticing that for each type $B$ we have $\occstar{B}_0=1$. We now procede by induction on $A$. In the base case $\occstar{\star}_{1}=1$.\\

	For the inductive case we have $\linty=\mset{A_1,\dots,A_n}\to B$. We obtain the following
		\begin{align*}
			\occstar{\mset{A_1,\dots,A_n}\to B}_{k+1}=\sum_{i\leq n}\occstar{A_i}_{k} + \occstar{B}_{k+1}\\
			\leq \sum_{i\leq n}((\sizeto{A_i}\cdot \sizemult{A_i})^{k} + \occstar{A_i}_{k-1}) + (\sizeto{B}\cdot\sizemult{B})^{k+1} + \occstar{B}_{k}\\
			=\sum_{i\leq n}(\sizeto{A_i}\cdot \sizemult{A_i})^k + (\sizeto{B}\cdot\sizemult{B})^{k+1} +\sum_{i\leq n} \occstar{A_i}_{k-1} + \occstar{B}_{k}\\
			= \sum_{i\leq n}(\sizeto{A_i}\cdot \sizemult{A_i})^k + (\sizeto{B}\cdot\sizemult{B})^{k+1} +\occstar{\mset{A_1,\dots,A_n}\to B}_{k}
		\end{align*}
	So now we just need to prove $\sum_{i\leq n}(\sizeto{A_i}\cdot \sizemult{A_i})^k + (\sizeto{B}\cdot\sizemult{B})^{k+1}\leq (\sizeto{A}\cdot \sizemult{A})^{k+1}$. By definition we have that $\sizeto{\mset{A_1,\dots,A_n}\to B}=\max_{i\leq n}(1+\sizeto{B},\sizeto{A_i})$ so $\sizeto{B}\leq \sizeto{\mset{A_1,\dots,A_n}\to B} - 1=\sizeto{A}-1$. 
		\begin{align*}
			\sum_{i\leq n}(\sizeto{A_i}\cdot \sizemult{A_i})^k + (\sizeto{B}\cdot\sizemult{B})^{k+1}
			\\\leq\sum_{i\leq n}(\sizeto{A}\cdot \sizemult{A})^k + ((\sizeto{A}-1)\cdot\sizemult{A})^{k+1}\\
			\leq n(\sizeto{A}\cdot \sizemult{A})^k + ((\sizeto{A}-1)\cdot\sizemult{A})^{k+1}
			\\\leq \sizemult{A}(\sizeto{A}\cdot \sizemult{A})^k + ((\sizeto{A}-1)\cdot\sizemult{A})^{k+1} \\
			= \sizemult{A}^{k+1}(\sizeto{A}^k + (\sizeto{A}-1)^{k+1})\leq \sizeto{A}^{k+1} \cdot \sizemult{A}^{k+1}
		\end{align*}
	Where the last step comes from the fact that $a^k + (a-1)^{k+1}\leq a^{k+1}$ when $a>1$, as in this case since $\sizeto{A}=\sizeto{\mset{A_1,\dots,A_n}\to B}>1$.
\end{proof}


\thmboundpik*
\begin{proof}
	The previous corollary gives us a bound for every type $A$ occourring in $\pi$. To obtain the total weight $\WeightTimekJAM{\pi}{k}$, we multiply this bound by the number of judgements in $\pi$, \emph{\ie} by $\size{\pi}=\bigo{n^2}$, by \refprop{bound-size-beta}.
\end{proof}

\end{document}